\title{Best-case and Worst-case Sparsifiability of Boolean CSPs\footnote{An extended abstract of this work was accepted under the same title to the 13th International Symposium on Parameterized and Exact Computation (IPEC 2018).}}
\author{Hubie Chen}{Birkbeck, University of London}{hubie@dcs.bbk.ac.uk}{}{}
\author{Bart M.P. Jansen}{Eindhoven University of Technology}{b.m.p.jansen@tue.nl}{http://orcid.org/0000-0001-8204-1268}{Supported by NWO Gravitation grant ``Networks''.}
\author{Astrid Pieterse}{Eindhoven University of Technology}{a.pieterse@tue.nl}{http://orcid.org/0000-0003-3721-6721}{Supported by NWO Gravitation grant ``Networks''.}
\authorrunning{H. Chen, B.M.P. Jansen, and A. Pieterse}
\subjclass{%
\ccsdesc[300]{Computing methodologies~Symbolic and algebraic algorithms},
\ccsdesc[500]{Theory of computation~Problems, reductions and completeness},
\ccsdesc[500]{Theory of computation~Parameterized complexity and exact algorithms}}
\keywords{constraint satisfaction problems; kernelization; sparsification; lower bounds}
\theoremstyle{plain}
\newtheorem{claim}[theorem]{Claim}
\newtheorem{observation}[theorem]{Observation}
\newtheorem{proposition}[theorem]{Proposition}
\newtheorem{assumption}[theorem]{Assumption}
\newcommand{\todo}[1][]{%
  \ifx/#1/%
    \textcolor{red}{TODO (astrid)!}%
  \else%
    \textcolor{red}{todo(astrid): #1}%
  \fi%
}
\let\plainqed\qedsymbol
\newcommand{\claimqed}{$\lrcorner$}
\newenvironment{claimproof}[1][\proofname]{\begin{proof}[#1]\renewcommand{\qedsymbol}{\claimqed}}{\end{proof}\renewcommand{\qedsymbol}{\plainqed}}
\newcommand{\weight}{\mathrm{weight}}
\newcommand{\yes}{\textsc{yes}\xspace}
\newcommand{\notcontainment}{\ensuremath{\mathsf{NP \not\subseteq coNP/poly}}\xspace}
\newcommand{\containment}{\ensuremath{\mathsf{NP  \subseteq coNP/poly}}\xspace}
\newcommand{\OR}{\textsc{or}\xspace}
\newcommand{\Oh}{\mathcal{O}}
\newcommand{\vect}[1]{\ensuremath{\mathbf{#1}}\xspace}
\newcommand{\rootCSP}{\textsc{$d$-Polynomial root CSP}\xspace}
\newcommand{\restr}{\upharpoonright}
\newcommand{\C}{\ensuremath{\mathcal{C}}\xspace}
\newcommand{\minor}{\mathsf{minor}}
\newcommand{\major}{\mathsf{major}}
\newcommand{\kor}[1]{\ensuremath{#1\mathrm{\textsc{-or}}}\xspace}
\newcommand{\csp}{\mathsf{CSP}}
\newcommand{\defparproblem}[4]{\par
 \vspace{3mm}
\noindent\fbox{
 \begin{minipage}{0.96\textwidth}
 \begin{tabular*}{\textwidth}{@{\extracolsep{\fill}}lr} #1 & {\bf{Parameter:}} #3 \vspace{1mm} \\ \end{tabular*}
 {\textbf{Input:}} #2
	\vspace{1mm}\\%
 {\textbf{Question:}} #4
 \end{minipage}
 }
 \vspace{3mm}
\par
}
\begin{document}

\maketitle

\begin{abstract}
We continue the investigation of polynomial-time sparsification for NP-complete Boolean Constraint Satisfaction Problems (CSPs). The goal in sparsification is to reduce the number of constraints in a problem instance without changing the answer, such that a bound on the number of resulting constraints can be given in terms of the number of variables~$n$. We investigate how the worst-case sparsification size depends on the types of constraints allowed in the problem formulation (the constraint language). Two algorithmic results are presented. The first result essentially shows that for any arity~$k$, the only constraint type for which no nontrivial sparsification is possible has exactly one falsifying assignment, and corresponds to logical OR (up to negations). Our second result concerns linear sparsification, that is, a reduction to an equivalent instance with~$\Oh(n)$ constraints. Using linear algebra over rings of integers modulo prime powers, we give an elegant necessary and sufficient condition for a constraint type to be captured by a degree-$1$ polynomial over such a ring, which yields linear sparsifications. The combination of these algorithmic results allows us to prove two characterizations that capture the optimal sparsification sizes for a range of Boolean CSPs. For NP-complete Boolean CSPs whose constraints are \emph{symmetric} (the satisfaction depends only on the number of \textsc{1} values in the assignment, not on their positions), we give a complete characterization of which constraint languages allow for a linear sparsification. For Boolean CSPs in which every constraint has arity at most three, we characterize the optimal size of sparsifications in terms of the largest OR that can be expressed by the constraint language.
\end{abstract}

\section{Introduction}
\paragraph*{Background}
The framework of constraint satisfaction problems (CSPs) provides a unified way to study the computational complexity of a wide variety of combinatorial problems such as \textsc{CNF-Satisfiability}, \textsc{Graph Coloring}, and \textsc{Not-All-Equal SAT}. The framework uncovers algorithmic approaches that simultaneously apply to several problems, and also identifies common sources of intractability.
For the purposes of this discussion, a CSP is specified using a (finite) \emph{constraint language},
which is a set of (finite) relations;
the problem is to decide the satisfiability of a set of constraints, where each constraint
has a relation coming from the constraint language.
The fact that many problems can be viewed as CSPs motivates the following investigation:
how does the  complexity of a CSP depend its constraint language?
A key result in this area is Schaefer's dichotomy theorem~\cite{Schaefer78}, which classifies
each CSP over the Boolean domain as polynomial-time solvable or NP-complete.

Continuing a recent line of investigation~\cite{JansenP2016,JansenP2017,LagerkvistW17}, we aim to understand
for which NP-complete CSPs an instance
can be \emph{sparsified} in polynomial time, without changing the answer.
In particular, we investigate the following questions.
Can the number of constraints be reduced to a small function of the number of variables~$n$? How does the sparsifiability of a CSP depend on its constraint language? We utilize the framework of kernelization~\cite{CyganFKLMPPS15,DowneyF13,LokshtanovMS12}, originating in parameterized complexity theory, to answer such questions.

The first results concerning polynomial-time sparsification in terms of the number~$n$ of variables or vertices were mainly negative. Under the assumption that \notcontainment (which we tacitly assume throughout this introduction), Dell and van Melkebeek~\cite{DellM14} proved a strong lower bound: For any integer~$d \geq 3$ and positive real~$\varepsilon$, there cannot be a polynomial-time algorithm that compresses any instance~$\varphi$ of \textsc{$d$-CNF-SAT} on~$n$ variables, into an equivalent SAT instance~$\varphi'$ of bitsize~$\Oh(n^{d - \varepsilon})$. In fact, there cannot even be an algorithm that transforms such~$\varphi$ into small equivalent instances~$\psi$ of an \emph{arbitrary} decision problem. Since an instance of \textsc{$d$-CNF-SAT} has at most~$2^d n^d \in \Oh(n^d)$ distinct clauses, it can \emph{trivially} be sparsified to~$\Oh(n^d)$ clauses by removing duplicates, and can be compressed to size~$\Oh(n^d)$ by storing it as a bitstring indicating for each possible clause whether or not it is present. The cited lower bound therefore shows that the trivial sparsification for \textsc{$d$-CNF-SAT} cannot be significantly improved; we say that the problem does not admit nontrivial (polynomial-time) sparsification. Following these lower bounds for SAT, a number of other results were published~\cite{DellM12,Jansen15,KratschPR16} proving other problems do not admit nontrivial sparsification either.

This pessimistic state of affairs concerning nontrivial sparsification algorithms changed several years ago, when a subset of the authors~\cite{JansenP2017} showed that the \textsc{$d$-Not-All-Equal SAT} problem \emph{does} have a nontrivial sparsification. In this problem, clauses have size at most~$d$ and are satisfied if the literals do not all evaluate to the same value. While there can be~$\Omega(n^d)$ different clauses in an instance, there is an efficient algorithm that finds a subset of~$\Oh(n^{d-1})$ clauses that preserves the answer, resulting in a compression of bitsize~$\Oh(n^{d-1} \log n)$. The first proof of this result was based on an ad-hoc application of a theorem of Lov\'asz~\cite{Lovasz76}. Later, the underlying proof technique was extracted and applied to a wider range of problems~\cite{JansenP2016}. This led to the following understanding: if each relation in the constraint language
can be represented by a polynomial of degree at most~$d$, in a certain technical sense, then this allows the number of constraints in an $n$-variable instance of such a CSP to be reduced to~$\Oh(n^d)$. The sparsification for \textsc{$d$-Not-All-Equal SAT} is then explained by noting that such constraints can be captured by polynomials of degree~$d-1$. It is therefore apparent that finding a low-degree polynomial to capture the constraints of a CSP is a powerful tool to obtain sparsification algorithms for it. Finding such polynomials of a certain degree~$d$, or determining that they do not exist, proved a challenging and time-intensive task (cf.~\cite{JansenP17}).

The polynomial-based framework~\cite{JansenP2016} also resulted in some \emph{linear} sparsifications. Since ``1-in-$d$'' constraints (to satisfy a clause, \emph{exactly one} out of its~$\leq d$ literals should evaluate to true) can be captured by \emph{linear} polynomials, the \textsc{$1$-in-$d$-SAT} problem has a sparsification with~$\Oh(n)$ constraints for each constant~$d$. This prompted a detailed investigation into linear sparsifications for CSPs by Lagerkvist and Wahlstr{\"{o}}m~\cite{LagerkvistW17}, who used the toolkit of universal algebra in an attempt to obtain a characterization of the Boolean CSPs with a linear sparsification. Their results give a necessary and sufficient condition on the constraint language of a CSP for having a so-called Maltsev embedding over an infinite domain. They also show that when a CSP has a Maltsev embedding over a \emph{finite} domain, then this can be used to obtain a linear sparsification. Alas, it remains unclear whether Maltsev embeddings over infinite domains can be exploited algorithmically, and 
a characterization of the linearly-sparsifiable CSPs is currently not known.

\paragraph*{Our contributions}
We analyze and demonstrate the power of the polynomial-based framework for sparsifying CSPs using universal algebra, linear algebra over rings, and relational analysis. We present two new algorithmic results. These allow us to characterize the sparsifiability of Boolean CSPs in two settings,
wherein we show that the polynomial-based framework yields \emph{optimal} sparsifications. In comparison to previous work~\cite{JansenP2016}, our results are much more fine-grained and based on a deeper understanding of the reasons why a certain CSP \emph{cannot} be captured by low-degree polynomials.

\subparagraph{Algorithmic results}
Our first result
(Section~\ref{sec:trivial-vs-nontrivial})
shows that, contrary to the pessimistic picture that arose during the initial investigation of sparsifiability, the phenomenon of nontrivial sparsification is widespread and occurs for almost all Boolean CSPs! We prove that if~$\Gamma$ is a constraint language whose largest constraint has arity~$k$, then the \emph{only} reason that CSP($\Gamma$) does not have a nontrivial sparsification, is that it contains an arity-$k$ relation that is essentially the $k$-ary OR (up to negating variables). When~$R \subseteq \{0,1\}^k$ is a relation with~$|\{0,1\}^k \setminus R| \neq 1$ (the number of assignments that fail to satisfy the constraint is not equal to~$1$), then it can be captured by a polynomial of degree~$k-1$. This yields a nontrivial sparsification compared to the~$\Omega(n^k)$ distinct applications of this constraint that can be in such an instance.

Our second algorithmic result
(Section~\ref{sec:balanced-to-linear})
concerns the power of the polynomial-based framework for obtaining linear sparsifications. We give a necessary and sufficient condition for a relation to be captured by a degree-1 polynomial. Say that a Boolean relation~$R \subseteq \{0,1\}^k$
is \emph{balanced} if there is \emph{no} sequence of vectors~$s_1, \ldots, s_{2n}, s_{2n+1} \in R$ for~$n \geq 1$ such that~$s_1 - s_2 + s_3 \ldots - s_{2n} + s_{2n+1} = u \in \{0,1\}^k \setminus R$. (The same vector may appear multiple times in this sum.) In other words:~$R$ is balanced if one cannot find an odd-length sequence of vectors in~$R$ for which alternating between adding and subtracting these vectors component-wise results in a 0/1-bitvector~$u$ that is outside~$R$. For example, the binary OR relation~$\kor{2} = \{0,1\}^2 \setminus \{(0,0)\}$ is \emph{not} balanced, since~$(0,1) - (1,1) + (1,0) = (0,0) \notin \kor{2}$, but the 1-in-$3$ relation~$R_{=1} = \{(1,0,0), (0,1,0), (0,0,1)\}$ is. We prove that if a Boolean relation~$R$ is balanced, then it can efficiently be captured by a degree-1 polynomial and the number of constraints that are applications of this relation can be reduced to~$\Oh(n)$. Hence when \emph{all} relations in a constraint language~$\Gamma$ are balanced---we call such a constraint language \emph{balanced}---then CSP($\Gamma$) has a sparsification with~$\Oh(n)$ constraints.
We also show that, on the other hand, if a Boolean relation~$R$ is not balanced, then there does not exist a degree-1 polynomial over any ring that captures~$R$ in the sense required for application of the polynomial framework.
The property of being balanced is (as defined)
a universal-algebraic property; these results
thus tightly bridge universal algebra and the polynomial framework.

\subparagraph{Characterizations} The property of being balanced gives an easy way to prove that certain Boolean CSPs admit linear sparsifications. But perhaps more importantly, this characterization constructively exhibits a certain \emph{witness} when a relation can \emph{not} be captured by a degree-1 polynomial, in the form of the alternating sum of satisfying assignments that yield an unsatisfying assignment. In several scenarios, we can turn this witness structure against degree-1 polynomials into a lower bound proving that the problem does not have a linear sparsification. As a consequence, we can prove two fine-grained characterizations of sparsification complexity.

\textbf{Characterization of symmetric CSPs with a linear sparsification (Section~\ref{sec:symmetric})} We say that a Boolean relation is \emph{symmetric} if the satisfaction of a constraint only depends on the \emph{number} of $1$-values taken by the variables (the \emph{weight} of the assignment), but does not depend on the \emph{positions} where these values appear. For example, ``$1$-in-$k$''-constraints are symmetric, just as ``not-all-equal''-constraints, but the relation~$R_{a\rightarrow b} = \{(0,0), (0,1), (1,1)\}$ corresponding to the truth value of~$a \rightarrow b$ is not. We prove that if a symmetric Boolean relation~$R$ is not balanced, then it can implement (Definition~\ref{def:cone}) a binary OR using constants and negations but without having to introduce fresh variables. Building on this, we prove that if such an unbalanced symmetric relation~$R$ occurs in a constraint language~$\Gamma$ for which CSP($\Gamma$) is NP-complete, then CSP($\Gamma$) does \emph{not} admit a sparsification of size~$\Oh(n^{2-\varepsilon})$ for any $\varepsilon > 0$. Consequently, we obtain a characterization of the sparsification complexity of NP-complete Boolean CSPs whose constraint language consists of symmetric relations: there is a linear sparsification if and only if the constraint language is balanced. This yields linear sparsifications in several new scenarios that were not known before.

\textbf{Characterization of sparsification complexity for CSPs
of low arity (Section~\ref{sec:arity-3})}
By combining the linear sparsifications guaranteed by balanced constraint languages with the nontrivial sparsification when the largest-arity relations do not have exactly one falsifying assignment, we obtain an exact characterization of the optimal sparsification size for all Boolean CSPs where each relation has arity at most three. For a Boolean constraint language~$\Gamma$ consisting of relations of arity at most three, we characterize the sparsification complexity of~$\Gamma$ as an integer~$k \in \{1,2,3\}$ that represents the largest OR that~$\Gamma$ can implement using constants and negations, but without introducing fresh variables. Then we prove that CSP($\Gamma$) has a sparsification of size~$\Oh(n^k)$, but no sparsification of size~$\Oh(n^{k-\varepsilon})$ for any~$\varepsilon > 0$, giving matching upper and lower bounds. Hence for all Boolean CSPs with constraints of arity at most three, the polynomial-based framework gives provably \emph{optimal} sparsifications.

\section{Preliminaries}
\label{sec:preliminaries}

For a positive integer $n$, define $[n] := \{1,2,\ldots,n\}$.
For an integer $q$, we let $\mathbb{Z}/q\mathbb{Z}$ denote the integers modulo $q$. These form a field if $q$ is prime, and a ring otherwise. We will use $x \equiv_q y$ to denote that $x$ and $y$ are congruent modulo $q$, and $x\not\equiv_q y$ to denote that they are incongruent modulo $q$. For statements marked with a star $(\bigstar)$, the (full) proof can be found in Appendix \ref{app:missing-proofs}.

\subparagraph{Parameterized complexity}

A parameterized problem~$\mathcal{Q}$ is a subset of~$\Sigma^* \times \mathbb{N}$, where~$\Sigma$ is a finite alphabet.
Let~$\mathcal{Q},\mathcal{Q}'\subseteq\Sigma^*\times\mathbb{N}$ be parameterized problems and let~$h\colon\mathbb{N}\to\mathbb{N}$ be a computable function. A \emph{generalized kernel for~$\mathcal{Q}$ into~$\mathcal{Q}'$ of size~$h(k)$} is an algorithm that, on input~$(x,k)\in\Sigma^*\times\mathbb{N}$, takes time polynomial in~$|x|+k$ and outputs an instance~$(x',k')$ such that:
(i) $|x'|$ and~$k'$ are bounded by~$h(k)$, and
(ii) $(x',k') \in \mathcal{Q}'$ if and only if~$(x,k) \in \mathcal{Q}$.
The algorithm is a \emph{kernel} for~$\mathcal{Q}$ if~$\mathcal{Q}'=\mathcal{Q}$. 

Since a polynomial-time reduction to an equivalent sparse instance yields a generalized kernel, lower bounds against generalized kernels can be used to prove the non-existence of such sparsification algorithms. To relate the sparsifiability of different problems to each other, the following notion is useful.

\begin{definition}\label{def:lpt}
Let $\mathcal{P}, \mathcal{Q} \subseteq \Sigma^* \times \mathbb{N}$ be two parameterized problems.
A \emph{linear-parameter transformation} from
$\mathcal{P}$
to $\mathcal{Q}$ is a polynomial-time algorithm that, given an instance $(x,k) \in \Sigma^* \times \mathbb{N}$ of $\mathcal{P}$, outputs an instance $(x',k') \in \Sigma^* \times \mathbb{N}$ of $\mathcal{Q}$ such that the following holds:
\begin{enumerate}
\item  $(x,k) \in \mathcal{Q}$ if and only if $(x',k') \in \mathcal{P}$, and
\item $k' \in \mathcal{O}(k)$.
\end{enumerate}
\end{definition}

\noindent It is well-known~\cite{BodlaenderJK14,BodlaenderTY11} that the existence of a linear-parameter transformation from problem~$\mathcal{P}$ to~$\mathcal{Q}$ implies that any generalized kernelization lower bound for~$\mathcal{P}$, also holds for~$\mathcal{Q}$.

\subparagraph{Operations, relations, and preservation}

A \emph{Boolean operation} is a mapping from $\{ 0, 1 \}^k$
to $\{ 0, 1 \}$,
where $k$, a natural number,
is said to be the \emph{arity} of the operation;
we assume throughout that operations have positive arity.
From here, we define a
\emph{partial Boolean operation} in the usual way,
that is, it is a mapping
from a subset of $\{ 0, 1 \}^k$
to $\{ 0, 1 \}$.
We say that a partial Boolean operation $f$
of arity $k$
is \emph{idempotent} if $f(0,\ldots,0) = 0$
and $f(1,\ldots,1) = 1$; and,
\emph{self-dual}
if for all $(a_1,\ldots,a_k) \in \{ 0, 1 \}^k$,
when $f(a_1,\ldots,a_k)$ is defined,
it holds that $f(\neg a_1,\ldots, \neg a_k)$ is defined
and $f(a_1,\ldots,a_k) = \neg f(\neg a_1,\ldots,\neg a_k)$.

\begin{definition}\label{def:balanced_op}
A partial Boolean operation $f \colon \{ 0, 1 \}^k \to \{ 0, 1 \}$
is \emph{balanced} if there exist
integer values $\alpha_1, \ldots, \alpha_k$,
called the \emph{coefficients} of $f$,
such that
\begin{itemize}

\item $\sum_{i\in [k]} \alpha_i = 1$,

\item $(x_1,\ldots,x_k)$ is in the domain of $f$ if and only if
$\sum_{i \in [k]} \alpha_i x_i \in \{0,1\}$, and

\item  $f(x_1,\ldots,x_k) = \sum_{i \in [k]}\alpha_i x_i$ for all tuples in its domain.

\end{itemize}
\end{definition}

A \emph{relation} over the set $D$ is a subset of $D^k$;
here, $k$ is a natural number
called the \emph{arity} of the relation.  Throughout, we assume that each relation is over
a finite set $D$.
A \emph{Boolean relation} is a relation over $\{ 0, 1 \}$.

\begin{definition}
For each $k \geq 1$, we use $\kor{k}$ to denote the relation $\{ 0, 1 \}^k \setminus \{ (0,\ldots,0) \}$.
\end{definition}

A \emph{constraint language over $D$}
is a finite set of relations over $D$;
a \emph{Boolean constraint language} is a constraint language over $\{ 0, 1 \}$. For a Boolean constraint language $\Gamma$, we define $\csp(\Gamma)$ as follows.

\defparproblem{$\csp(\Gamma)$}{A tuple $(\C,V)$, where $\C$ is a finite set of constraints,
$V$ is a finite set of variables, and
each constraint is a pair $R(x_1,\ldots,x_k)$ for $R \in \Gamma$ and $x_1,\ldots,x_k \in V$. }{The number of variables $|V|$.}{Does there exist a \emph{satisfying assignment},
that is, an assignment $f \colon V \rightarrow \{0,1\}$ such that for each constraint $R(x_1,\ldots,x_k) \in \C$ it holds that $(f(x_1),\ldots,f(x_k)) \in R$?}

Let $f \colon \{ 0, 1 \}^k \to \{ 0, 1 \}$ be a partial Boolean operation,
and let $T \subseteq \{ 0, 1 \}^n$ be a Boolean relation.
We say that $T$ is \emph{preserved} by $f$ when,
for any tuples $t^1 = (t^1_1, \ldots, t^1_n), \ldots, t^k = (t^k_1, \ldots, t^k_n) \in T$,
if all entries of the tuple
$(f(t^1_1, \ldots, t^k_1), \ldots, f(t^1_n, \ldots, t^k_n))$ are defined,
then this tuple is in $T$.
We say that a Boolean constraint language $\Gamma$ is \emph{preserved} by $f$
if each relation in $\Gamma$ is preserved by $f$.
We say that a Boolean relation is \emph{balanced} if it is preserved by all balanced operations,
and that a Boolean constraint language is \emph{balanced} if each relation therein is balanced.

Define an \emph{alternating operation} to be a balanced operation
$f \colon \{ 0, 1 \}^k \to \{ 0, 1 \}$ such that $k$ is odd and the coefficients alternate
between $+1$ and $-1$, so that $\alpha_1 = +1$, $\alpha_2 = -1$, $\alpha_3 = +1$, $\ldots$, $\alpha_k = +1$.
We have the following.

\begin{proposition}[$\bigstar$]
\label{prop:balanced-acts-as-alternating}
A Boolean relation~$R$ is balanced if and only if for all odd~$k \geq 1$, the relation $R$ is preserved by the alternating operation of arity~$k$.
\end{proposition}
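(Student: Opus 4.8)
The plan is to prove both directions of the equivalence. The forward direction is immediate from the definition: if $R$ is balanced, then by definition it is preserved by \emph{all} balanced operations, and every alternating operation is in particular a balanced operation (it has odd arity $k$, integer coefficients $+1,-1,\ldots,+1$ summing to $1$, and its domain and values are defined exactly as required). So I would dispatch that direction in one sentence.

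The interesting direction is the converse: assuming $R$ is preserved by every alternating operation (of every odd arity), I want to show $R$ is preserved by \emph{every} balanced operation $f$ of arity $k$ with coefficients $\alpha_1,\ldots,\alpha_k$ summing to $1$. The key idea is that applying $f$ to tuples $t^1,\ldots,t^k \in R$ produces the componentwise value $\sum_{i} \alpha_i t^i$ (when this lies in $\{0,1\}^n$), and this same vector can be realized as an alternating sum of an odd-length sequence of tuples drawn (with repetition) from $\{t^1,\ldots,t^k\}$. Concretely, I would expand each term $\alpha_i t^i$: a positive coefficient $\alpha_i > 0$ contributes $\alpha_i$ copies of $+t^i$, and a negative coefficient contributes $|\alpha_i|$ copies of $-t^i$. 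The total number of $+$ signs minus the total number of $-$ signs equals $\sum_i \alpha_i = 1$, so after we arrange all these signed tuples into a single sequence, there is exactly one more $+$ than $-$; hence the sequence has odd length, say $2m+1$. The remaining task is to reorder this multiset of signed tuples so that the signs strictly alternate $+,-,+,-,\ldots,+$ — which is possible precisely because the number of $+$'s is one more than the number of $-$'s (greedily interleave, placing $-$ tuples into the gaps between consecutive $+$ tuples). Along this reordered sequence the alternating sum equals $\sum_i \alpha_i t^i$, so it equals the tuple $(f(t^1_1,\ldots,t^k_1),\ldots,f(t^1_n,\ldots,t^k_n))$, which by hypothesis lies in $R$ (here we use that $R$ is preserved by the arity-$(2m+1)$ alternating operation, applied to the sequence of tuples we constructed; note the entries are in $\{0,1\}$ by assumption, so the alternating operation is defined on them). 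One subtlety to handle cleanly: the arity of the alternating operation must match the sequence length $2m+1$, and the tuples fed to it are exactly the reordered signed sequence with signs absorbed into the $\pm$ pattern — I would state this matching explicitly so the application of the hypothesis is unambiguous.

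The main obstacle is the bookkeeping in the reordering step: showing that a multiset of signed tuples with $(\#\text{plus}) - (\#\text{minus}) = 1$ can always be linearly ordered with strictly alternating signs starting and ending with $+$. This is elementary — it holds iff $\#\text{minus} \le \#\text{plus} - 1$, which is exactly our situation — but it should be stated as a small combinatorial lemma or at least argued in a line, since the whole reduction hinges on it. A second minor point is that the definition of ``balanced relation'' quantifies over balanced \emph{operations}, whereas the witness in the paper's informal description (the sequence $s_1 - s_2 + \cdots + s_{2n+1} = u \notin R$) is phrased in terms of alternating sums; the argument above is precisely the bridge between these two formulations, so I would make sure the exposition emphasizes that an alternating operation applied to a tuple-sequence \emph{is} an alternating sum, and conversely any balanced operation's output \emph{is} such an alternating sum after expansion. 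With those two observations in place the proof is short.
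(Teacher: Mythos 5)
Your proposal is correct and follows essentially the same route as the paper's proof: expand each term $\alpha_i t^i$ into $|\alpha_i|$ signed copies, observe the count of $+$'s exceeds the count of $-$'s by exactly one because $\sum_i \alpha_i = 1$, and permute so the signs alternate, yielding an alternating operation of odd arity with the same componentwise output. The only cosmetic difference is that the paper phrases the non-trivial direction in contrapositive form (``not balanced $\Rightarrow$ violated by some alternating operation'') while you argue it directly; your parenthetical about the domain of the alternating operation could be stated a bit more crisply (it is defined because the componentwise alternating sums coincide with the componentwise values of $f$, which lie in $\{0,1\}$ by the assumption that $f$ is applied within its domain), but the substance is right.
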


We will use the following straightforwardly verified fact
tacitly, throughout.

\begin{observation}
Each balanced operation is idempotent and self-dual.
\end{observation}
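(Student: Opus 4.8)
The plan is to verify both properties directly from Definition~\ref{def:balanced_op}, using only the defining identity $\sum_{i \in [k]} \alpha_i = 1$ and the fact that for a Boolean value $a$ we have $\neg a = 1 - a$. Let $f \colon \{0,1\}^k \to \{0,1\}$ be balanced with coefficients $\alpha_1,\ldots,\alpha_k$.

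For idempotence, I would first evaluate the linear form $L(x_1,\ldots,x_k) := \sum_{i \in [k]} \alpha_i x_i$ at the all-zeroes tuple, obtaining $L(0,\ldots,0) = 0 \in \{0,1\}$; hence $(0,\ldots,0)$ lies in the domain of $f$ and $f(0,\ldots,0) = 0$. Next I would evaluate $L$ at the all-ones tuple, obtaining $L(1,\ldots,1) = \sum_{i\in[k]}\alpha_i = 1 \in \{0,1\}$ by the first bullet of the definition; hence $(1,\ldots,1)$ lies in the domain and $f(1,\ldots,1) = 1$. This establishes idempotence.

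For self-duality, take any $(a_1,\ldots,a_k) \in \{0,1\}^k$ on which $f$ is defined, so that $b := L(a_1,\ldots,a_k) \in \{0,1\}$ and $f(a_1,\ldots,a_k) = b$. The key computation is
\[
L(\neg a_1,\ldots,\neg a_k) \;=\; \sum_{i\in[k]} \alpha_i (1 - a_i) \;=\; \sum_{i\in[k]}\alpha_i \;-\; \sum_{i\in[k]}\alpha_i a_i \;=\; 1 - b.
\]
Since $b \in \{0,1\}$, we have $1 - b \in \{0,1\}$, so $(\neg a_1,\ldots,\neg a_k)$ lies in the domain of $f$ and $f(\neg a_1,\ldots,\neg a_k) = 1 - b = \neg b = \neg f(a_1,\ldots,a_k)$, which is exactly the self-duality condition.

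There is no real obstacle here: the statement is a one-line consequence of substituting the special tuples into the linear form and invoking $\sum_i \alpha_i = 1$. The only point requiring a moment of care is noting that the domain membership condition "$L \in \{0,1\}$" is itself preserved under negating all arguments, which is what the displayed identity $L(\neg a) = 1 - L(a)$ delivers; everything else is routine arithmetic.
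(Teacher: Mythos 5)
Your proof is correct, and since the paper states this observation without proof (describing it as "straightforwardly verified"), your direct verification from Definition~\ref{def:balanced_op} is exactly the intended argument. The identity $L(\neg a) = 1 - L(a)$, which simultaneously handles the domain condition and the value condition for self-duality, is the right observation to make explicit.
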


For $b \in \{ 0, 1 \}$,
let $u_b \colon \{ 0, 1 \} \to \{ 0, 1 \}$ be the unary operation defined by
$u_b(0) = u_b(1) = b$;
let $\major \colon \{ 0, 1 \}^3 \to \{ 0, 1 \}$ to be the operation defined
by $\major(x,y,z) = (x \wedge y) \vee (x \wedge z) \vee (y \wedge z)$;
and,
let $\minor \colon \{ 0, 1 \}^3 \to \{ 0, 1 \}$ to be the operation defined by
$\minor(x,y,z) = x \oplus y \oplus z$,
where $\oplus$ denotes exclusive OR.
We say that a Boolean constraint language $\Gamma$ is \emph{tractable} if it is preserved
by one of the six following operations: $u_0$, $u_1$, $\wedge$, $\vee$, $\minor$, $\major$;
we say that $\Gamma$ is \emph{intractable} otherwise.
It is known that, in terms of classical complexity,
the problem $\csp(\Gamma)$ is polynomial-time decidable when $\Gamma$ is tractable,
and that the problem $\csp(\Gamma)$ is NP-complete when $\Gamma$ is intractable
(see \cite{Chen09-Rendezvous} for a proof; in particular, refer there to the proof of Theorem 3.21).

\subparagraph{Constraint Satisfaction and Definability}

\begin{assumption}
By default, we assume in the sequel that the operations, relations,
and constraint languages under discussion are Boolean, and
that said operations and relations are of positive arity.
We nonetheless sometimes describe them as being Boolean, for emphasis.
\end{assumption}

\begin{definition} \label{def:cone}
Let us say that a Boolean relation $T$ of arity $m$ is
\emph{cone-definable} from a Boolean relation $U$ of arity $n$
if there exists a tuple
$(y_1, \ldots, y_n)$
where:
\begin{itemize}

\item for each $j \in [n]$,
it holds that $y_j$ is an element of
$\{ 0, 1 \} \cup \{ x_1, \ldots, x_m \} \cup
\{ \neg x_1, \ldots, \neg x_m \}$;

\item for each $i \in [m]$, there exists $j \in [n]$
such that $y_j \in \{ x_i, \neg x_i \}$;
and,

\item for each $f \colon \{ x_1, \ldots, x_m \} \to \{ 0, 1 \}$,
it holds that
$(f(x_1), \ldots, f(x_m)) \in T$ if and only if
$(\hat{f}(y_1), \ldots, \hat{f}(y_n)) \in U$.
Here, $\hat{f}$ denotes the natural extension of $f$
where $\hat{f}(0) = 0$,
$\hat{f}(1) = 1$,
and $\hat{f}(\neg x_i) = \neg f(x_i)$.

\end{itemize}
(The prefix \emph{cone} indicates the allowing of
{\bf co}nstants and {\bf ne}gation.)
\end{definition}

\begin{example}
Let $R = \{ (0,0),(0,1) \}$
and let $S = \{ (0,1),(1,1) \}$.
We have that $R$ is cone-definable from $S$ via the tuple $(\neg x_2, \neg x_1)$;
also, $S$ is cone-definable from $R$ via the same tuple.
\end{example}

When $\Gamma$ is a constraint language over $D$, we use $\Gamma^*$ to denote the
expansion of $\Gamma$
where each element of $D$ appears as a relation, that is,
we define $\Gamma^*$ as $\Gamma \cup \{ \{ (d) \} ~|~ d \in D \}$.

The following is a key property of cone-definability; it states that
relations that are cone-definable from a constraint language $\Gamma$
may be simulated by the constraint language, and thus used to prove hardness results
for $\csp(\Gamma)$.

\begin{proposition}[$\bigstar$]
\label{prop:lpt-from-cone-definability}
Suppose that $\Gamma$ is an intractable constraint language,
and that $\Delta$ is a constraint language such that
each relation in $\Delta$ is cone-definable from a relation in $\Gamma$.
Then, there exists a linear-parameter transformation from
$\csp(\Gamma^* \cup \Delta)$ to $\csp(\Gamma)$.
\end{proposition}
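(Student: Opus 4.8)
The plan is to exhibit, for a fixed intractable $\Gamma$ and a fixed $\Delta$ each of whose relations is cone-definable from $\Gamma$, a polynomial-time procedure that turns an instance $(\C,V)$ of $\csp(\Gamma^*\cup\Delta)$ on $n:=|V|$ variables into an equivalent instance $(\C',V')$ of $\csp(\Gamma)$ with $|V'|\in\Oh(n)$. The variable set is $V':=V\cup\overline V\cup\{z_0,z_1\}$ plus $\Oh(n)$ further auxiliary variables introduced below, where $\overline V=\{\overline v:v\in V\}$ holds a fresh ``negated copy'' of each variable and $z_0,z_1$ are fresh variables meant to carry the truth values $0$ and $1$. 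Observe first that $\{(0)\}$ and $\{(1)\}$ are themselves cone-definable (Definition~\ref{def:cone}) from any relation $U^*\in\Gamma$ that is neither empty nor the full relation of its arity, and such a $U^*$ exists since $\Gamma$ is intractable (else $\csp(\Gamma)$ is trivial); hence the two constant relations may be treated as if they belonged to $\Delta$, and it remains only to translate $\Delta$-constraints while copying $\Gamma$-constraints verbatim. For a constraint $T(x_1,\dots,x_m)\in\C$ with $T$ cone-definable from $U\in\Gamma$ via a tuple $(y_1,\dots,y_\ell)$, I put into $\C'$ the constraint $U(w_1,\dots,w_\ell)$ where $w_j$ is $x_i$ if $y_j=x_i$, is $\overline{x_i}$ if $y_j=\neg x_i$, is $z_0$ if $y_j=0$, and is $z_1$ if $y_j=1$.

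The remaining ingredient is to build, using only $\Gamma$, the ``infrastructure'' forcing the negated copies and the $z$'s to behave: a constant-size $\Gamma$-gadget, introducing $\Oh(1)$ private auxiliary variables, that realizes the inequality relation $R_{\neq}=\{(0,1),(1,0)\}$ on each pair $(v,\overline v)$ (and on any further pair we need to be complementary). Granting this, the forward implication is immediate: a satisfying assignment $f$ of $(\C,V)$ extends to one of $(\C',V')$ by setting $f(\overline v):=\neg f(v)$, $f(z_0):=0$, $f(z_1):=1$, and completing each gadget, because every translated constraint $U(w_1,\dots,w_\ell)$ then receives exactly the tuple $(\hat f(y_1),\dots,\hat f(y_\ell))$, which by cone-definability lies in $U$ precisely when $(f(x_1),\dots,f(x_m))\in T$.

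For the backward implication I would split on whether $\Gamma$ is preserved by complementation. If it is not, pick $U_0\in\Gamma$ with a tuple $p$ whose complement is not in $U_0$; the cone-definition of $\{(1)\}$ from $U_0$ determined by $p$ uses no constant slots, so instantiating it on $z_1$ and a companion $\overline{z_1}$ (tied to $z_1$ by the inequality gadget) pins $z_1=1$, and symmetrically $z_0=0$. Then every satisfying assignment $g$ of $(\C',V')$ has $g(z_0)=0$, $g(z_1)=1$ and $g(\overline v)=\neg g(v)$, so reading the earlier computation backwards shows that $g\restr V$ satisfies $(\C,V)$. If $\Gamma$ \emph{is} preserved by complementation, values cannot be pinned, so instead I only enforce $z_0\neq z_1$ via the gadget; then a satisfying $g$ either has $g(z_0)=0$---and $g\restr V$ works as before---or has $g(z_0)=1$, in which case the tuple fed to every translated constraint is the \emph{complement} of the cone-definition tuple for $f:=\neg(g\restr V)$; since every relation in $\Gamma$ is closed under complementation, $f$ satisfies all translated constraints, and it satisfies the verbatim $\Gamma$-constraints as well, so $f$ satisfies $(\C,V)$.

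The step I expect to be the crux is the infrastructure lemma: that an intractable $\Gamma$ supports a constant-size $\Gamma$-gadget for $R_{\neq}$ (and, when $\Gamma$ is not complementive, for pinning a constant). I would prove it via Schaefer's classification. If $\Gamma$ had a polymorphism that is not self-dual, composing it with projections yields a binary polymorphism that agrees on $(0,1)$ and $(1,0)$; a short case analysis of its values on $(0,0)$ and $(1,1)$ then exhibits $\wedge$, $\vee$, $\mathrm{NAND}$, $\mathrm{NOR}$, or a constant operation as a polymorphism of $\Gamma$---each contradicting intractability. Hence every polymorphism of an intractable $\Gamma$ is self-dual, so by the Galois correspondence the complementive relation $R_{\neq}$ is primitive-positive definable from $\Gamma$, and a standard argument bounds the number of existential variables in such a definition by a constant depending only on $\Gamma$; the pinning gadget in the non-complementive case is then read off from a non-complementive relation of $\Gamma$ as above. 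Everything else---splitting off the two constant relations, performing the cone-definition substitution, and verifying the equivalence in both directions---is routine bookkeeping. (Alternatively, one may cite as a black box that every intractable $\Gamma$ admits a linear-parameter transformation from $\csp(\Gamma^*)$, which reduces the task here to the cone-definition substitution.)
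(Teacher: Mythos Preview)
Your proposal is correct and follows a genuinely different route from the paper. Both proofs introduce a negated copy $\overline v$ of every variable, tie $v$ and $\overline v$ together with a pp-definition of $R_{\neq}$, and then replace each $\Delta$-constraint by the corresponding $\Gamma$-constraint read off from the cone-definition. The difference lies in how the infrastructure is obtained and how constants are eliminated.

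The paper factors the reduction as $\csp(\Gamma^*\cup\Delta)\to\csp(\Gamma^*)\to\csp(\Gamma)$. For the first step it works inside $\Gamma^*$, so the constant relations are available directly and one may quote that an intractable $\Gamma^*$ pp-defines \emph{every} Boolean relation (hence $R_{\neq}$). The second step is delegated wholesale to the Bulatov--Jeavons--Krokhin reduction for adding constants. Your proof goes to $\csp(\Gamma)$ in one shot: you show that $\{(0)\},\{(1)\}$ are cone-definable from any nontrivial relation of $\Gamma$, you argue that every polymorphism of an intractable Boolean $\Gamma$ is self-dual so that $R_{\neq}$ is pp-definable from $\Gamma$ itself via the Galois connection, and you handle the backward direction by the complementive/non-complementive case split. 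This buys a self-contained argument that avoids the two black boxes, at the price of the Post-lattice style case analysis; the paper's route is shorter precisely because it outsources that analysis. (You note this yourself in your closing parenthetical.)

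One small omission in your sketch: in the case analysis of a binary polymorphism $g$ with $g(0,1)=g(1,0)$, the table can also produce $g=\oplus$ or $g=\leftrightarrow$ (XOR/XNOR), which are not on your list. These cases are harmless---if $\oplus$ is a polymorphism then $r\oplus r=(0,\dots,0)\in R$ for every $r\in R$, so $u_0$ is a polymorphism, and dually for XNOR---but you should mention them to make the ``short case analysis'' complete.
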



\section{Trivial versus non-trivial sparsification}
\label{sec:trivial-vs-nontrivial}

It is well known that $k$-CNF-SAT allows no non-trivial sparsification,
for each $k \geq 3$ \cite{DellM14}. This means that we cannot efficiently reduce the number of clauses in such a formula to $\Oh(n^{k-\varepsilon})$. The $k$-\OR relation is special, in the sense that there is exactly one $k$-tuple that is not contained in the relation. We show in this section that when considering $k$-ary relations for which there is more than one $k$-tuple not contained in the relation, a non-trivial sparsification is always possible. In particular, the number of constraints of any input can efficiently be reduced to $\Oh(n^{k-1})$.
Using Lemmas \ref{lem:2-unsat-implies-kernel} and \ref{lem:1-unsat-defines-OR}, we will completely classify the constraint languages that allow a non-trivial sparsification as follows.

\begin{theorem}[$\bigstar$]\label{thm:trivial-vs-nontrivial}
Let $\Gamma$ be an intractable (Boolean) constraint language. Let $k$ be the maximum arity of any relation $R \in \Gamma$. The following dichotomy holds.
\begin{itemize}
\item If for all $R \in \Gamma$ it holds that $|R| \neq 2^k-1$, then $\csp(\Gamma)$ has a kernel with $\Oh(n^{k-1})$ constraints that can be stored in $\Oh(n^{k-1}\log n)$ bits.
\item If there exists $R \in \Gamma$ with $|R| = 2^k-1$, then $\csp(\Gamma)$ has no generalized kernel of bitsize $\Oh(n^{k-\varepsilon})$ for any $\varepsilon > 0$, unless \containment.
\end{itemize}
\end{theorem}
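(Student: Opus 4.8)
This statement is a dichotomy, so the plan is to treat the two bullets separately: the upper bound will be an application of the polynomial-based sparsification framework of~\cite{JansenP2016}, and the lower bound will combine cone-definability with the Dell--van Melkebeek lower bound~\cite{DellM14} for~$k$-CNF-SAT.

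\emph{Upper bound.} The plan is to argue that every relation occurring in~$\Gamma$ can be represented by a polynomial of degree at most~$k-1$, and then to invoke~\cite{JansenP2016} to obtain a kernel with~$\Oh(n^{k-1})$ constraints; since each resulting constraint is an application of one of the finitely many relations in~$\Gamma$ to a~$k$-tuple of variable indices from~$[n]$, it is described by~$\Oh(\log n)$ bits, which yields the claimed~$\Oh(n^{k-1}\log n)$ total bitsize. Relations of arity~$j < k$ are already captured by their degree-$(\le j)$ indicator polynomial $\sum_{a \notin R}\prod_{i:a_i=1}x_i\prod_{i:a_i=0}(1-x_i)$, which vanishes on~$\{0,1\}^j$ precisely on the tuples of~$R$, and~$j \le k-1$, so these are harmless. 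An arity-$k$ relation with~$|R| = 2^k$ imposes no constraint and can be discarded. The hypothesis~$|R| \neq 2^k-1$ then means that every remaining arity-$k$ relation of~$\Gamma$ has~$|R| \le 2^k - 2$, i.e.\ at least two falsifying assignments, and Lemma~\ref{lem:2-unsat-implies-kernel} supplies a degree-$(k-1)$ representation for exactly such relations; plugging all of this into the framework of~\cite{JansenP2016} finishes this direction.

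\emph{Lower bound.} First observe that if~$R \in \Gamma$ has~$|R| = 2^k - 1$ then~$R$ has arity exactly~$k$ (a relation of arity~$j < k$ has at most~$2^j \le 2^{k-1} < 2^k - 1$ tuples), and that~$k \ge 3$: every Boolean relation of arity at most~$2$ is preserved by~$\major$, so a constraint language of maximum arity at most~$2$ would be tractable, contradicting intractability. By Lemma~\ref{lem:1-unsat-defines-OR}, every Boolean relation of arity at most~$k$ with a unique falsifying assignment---equivalently, every ``OR up to negation of some inputs'' of arity at most~$k$, including~$\kor{k}$ itself---is cone-definable from~$R$; collect all these relations into a constraint language~$\Delta$. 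Since~$\Gamma$ is intractable and each relation of~$\Delta$ is cone-definable from the relation~$R \in \Gamma$, Proposition~\ref{prop:lpt-from-cone-definability} gives a linear-parameter transformation from~$\csp(\Gamma^* \cup \Delta)$ to~$\csp(\Gamma)$. Now~$\csp(\Gamma^* \cup \Delta)$ expresses all of~$k$-CNF-SAT: a clause of length~$m \le k$ over arbitrary literals is the application of the appropriate sign-pattern OR-relation from~$\Delta$ to the~$m$ variables involved, padded with a single auxiliary variable forced to~$0$ by a constant relation of~$\Gamma^*$. Composing this trivial linear-parameter transformation from~$k$-CNF-SAT into~$\csp(\Gamma^* \cup \Delta)$ with the transformation above yields a linear-parameter transformation from~$k$-CNF-SAT to~$\csp(\Gamma)$. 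By the transfer property of linear-parameter transformations recalled in Section~\ref{sec:preliminaries}, the Dell--van Melkebeek bound~\cite{DellM14}---no generalized kernel of bitsize~$\Oh(n^{k-\varepsilon})$ for~$k$-CNF-SAT unless~\containment---carries over to~$\csp(\Gamma)$, giving the claim.

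\emph{Main obstacle.} The technical heart is Lemma~\ref{lem:2-unsat-implies-kernel}: for an arity-$k$ relation~$R$ with at least two falsifying assignments, produce a polynomial of degree at most~$k-1$ that vanishes on~$\{0,1\}^k$ exactly on the tuples of~$R$. The naive indicator polynomial has degree~$k$, with coefficient~$\sum_{a\notin R}(-1)^{k-|a|}$ on the monomial~$x_1\cdots x_k$; one must use the freedom of two distinct falsifying assignments to kill this monomial while still vanishing \emph{only} on~$R$. A natural route is to note that for each~$x \notin R$ there is a degree-$(\le k-1)$ polynomial vanishing on~$R$ but not at~$x$ (e.g.\ a combination of two ``cell'' polynomials whose top monomials cancel), and then conclude---using that the vector space of degree-$(\le k-1)$ polynomials over the infinite field~$\mathbb{Q}$ that vanish on~$R$ is not covered by the finitely many proper subspaces of those that additionally vanish at some~$x \notin R$---that some such polynomial vanishes nowhere else on the cube; alternatively one may work over a ring~$\mathbb{Z}/p^t\mathbb{Z}$. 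Once Lemma~\ref{lem:2-unsat-implies-kernel} and Lemma~\ref{lem:1-unsat-defines-OR} are in hand, both directions of the dichotomy are essentially bookkeeping.
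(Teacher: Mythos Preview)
Your high-level plan matches the paper's: Lemma~\ref{lem:2-unsat-implies-kernel} for the upper bound, Lemma~\ref{lem:1-unsat-defines-OR} together with a cone-definability transfer for the lower bound. A few points of divergence are worth noting.

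\emph{Upper bound.} You describe Lemma~\ref{lem:2-unsat-implies-kernel} as producing a \emph{single} degree-$(k-1)$ polynomial that vanishes on~$\{0,1\}^k$ exactly on~$R$, and then appeal to~\cite{JansenP2016}. That is not what the lemma does: it directly outputs the~$\Oh(n^{k-1})$ sparsification, and its proof works via Theorem~\ref{thm:only-need-polynomials}, which only needs, for each~$u\notin R$, a degree-$(k-1)$ polynomial~$p_u$ vanishing on~$R$ with~$p_u(u)\neq 0$. Your union-of-proper-subspaces argument over~$\mathbb{Q}$ does correctly upgrade the per-$u$ polynomials to a single polynomial vanishing precisely on~$R$, so your route is valid, but it is strictly more work than needed. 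The paper also handles relations of arity~$j<k$ more cheaply than you do: it simply removes duplicates (there are at most~$\Oh(n^j)\leq \Oh(n^{k-1})$ distinct applications), rather than building an indicator polynomial.

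\emph{Lower bound.} Your observation that intractability forces~$k\geq 3$ (because every binary Boolean relation is preserved by~$\major$) is correct and lets you go straight to the~$k$-CNF-SAT reduction; the paper instead cites the general Theorem~\ref{thm:LB-by-defining-OR}, whose proof separately handles~$k=1$ (via Drucker's \textsc{and}-compression bound) and~$k=2$ (via \textsc{Vertex Cover}), cases that in the present theorem cannot occur. One small wrinkle: since your~$\Delta$ already contains the sign-variant OR relations of \emph{every} arity~$\leq k$, no padding with constants is needed to express short clauses; the ``padded with a single auxiliary variable'' remark is superfluous (and would in any case require~$k-m$ copies, not one).
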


To obtain the kernels given in this section, we will heavily rely on the following notion for representing constraints by polynomials.
\begin{definition} \label{def:capture}
Let $R$ be a $k$-ary Boolean relation. We say that a polynomial $p_u$ over a ring~$E_u$ \emph{captures} an unsatisfying assignment $u \in \{0,1\}^k\setminus R$ with respect to $R$, if the following two conditions hold over~$E_u$.
\begin{align}
&p_u(x_1,\ldots,x_k) = 0 \text{ for all } (x_1,\ldots,x_k) \in R\label{eq:sat-R}, \text{ and} \\
&p_u(u_1,\ldots,u_k) \neq 0.\label{eq:unsat-u}
\end{align}
\end{definition}

The following Theorem is a generalization of Theorem 16 in \cite{JansenP18}. The main improvement is that we now allow the usage of different polynomials, over different rings, for each $u\notin R$. Previously, all polynomials had to be given over the same ring, and each constraint was captured by a single polynomial.

\begin{theorem}[$\bigstar$]\label{thm:only-need-polynomials}
Let $R \subseteq \{0,1\}^k$ be a fixed $k$-ary relation, such that for every $u \in \{0,1\}^k\setminus R$ there exists a ring~$E_u \in \{\mathbb{Q}\}	 \cup \{\mathbb{Z}/q_u\mathbb{Z} \mid q_u \text{ is a prime power}\}$ and polynomial $p_u$ over~$E_u$ of degree at most $d$ that captures $u$ with respect to $R$. Then there exists a polynomial-time algorithm that, given a set of constraints \C over $\{R\}$ over $n$ variables, outputs $\C' \subseteq \C$ with $|\C'| = \Oh(n^{d})$, such that any Boolean assignment satisfies all constraints in $\C$ if and only if it satisfies all constraints in $\C'$.
\end{theorem}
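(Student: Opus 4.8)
The plan is to combine the known polynomial-based sparsification machinery (as in~\cite{JansenP2016,JansenP18}) with a reduction from $|\{0,1\}^k \setminus R|$ polynomials over possibly-different rings down to a single system of linear constraints that we can prune. The intuition is that a constraint $R(x_1,\ldots,x_k)$ in~$\C$ is satisfied by a Boolean assignment~$f$ exactly when, for \emph{every} forbidden pattern $u \in \{0,1\}^k \setminus R$, the assignment restricted to these variables is not equal to~$u$; and the polynomial $p_u$ that captures~$u$ lets us detect ``the restricted assignment equals~$u$'' as the event ``$p_u$ evaluates to a nonzero value over~$E_u$''. For each constraint~$c = R(x_{i_1},\ldots,x_{i_k})$ and each $u \notin R$, substitute the variables of~$c$ into~$p_u$ to obtain a polynomial $p_{c,u}$ in the $n$ Boolean variables of degree at most~$d$ over~$E_u$; because Boolean variables satisfy $x^2 = x$, we may multilinearize and view $p_{c,u}$ as a vector in the $\Oh(n^d)$-dimensional space of multilinear monomials of degree~$\le d$, with coordinates in~$E_u$.

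The key step is a pruning argument carried out \emph{separately} for each unsatisfying assignment~$u$ and each ring~$E_u$ appearing. Fix such a~$u$ and let $\C_u \subseteq \C$ be the set of all constraints in~$\C$ (all of which apply the relation~$R$). Consider the family of vectors $\{p_{c,u} : c \in \C_u\}$ living in the free module $E_u^m$ with $m = \Oh(n^d)$. When $E_u = \mathbb{Q}$, this is a vector space, so we pick a maximal linearly independent subset~$B_u$ of size at most~$m$; when $E_u = \mathbb{Z}/q_u\mathbb{Z}$ with $q_u$ a prime power, we instead greedily build a subset~$B_u$ that spans the same submodule as all of $\{p_{c,u}\}$ over $\mathbb{Z}/q_u\mathbb{Z}$ --- such a generating set of size~$\Oh(m)$ exists because a submodule of $(\mathbb{Z}/q_u\mathbb{Z})^m$ is generated by at most $m$ elements (e.g. via the structure theorem, or by a Gaussian-elimination / Smith-normal-form style argument over the local ring $\mathbb{Z}/q_u\mathbb{Z}$). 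We set $\C' := \bigcup_{u \in \{0,1\}^k \setminus R} \{c : p_{c,u} \in B_u\}$; since there are at most $2^k$ choices of~$u$ and each $B_u$ has size $\Oh(n^d)$, we get $|\C'| = \Oh(n^d)$ as claimed, and $\C'$ is computable in polynomial time (Gaussian elimination over~$\mathbb{Q}$, resp. over~$\mathbb{Z}/q_u\mathbb{Z}$, is polynomial).

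It remains to check equivalence. One direction is trivial: $\C' \subseteq \C$, so any assignment satisfying~$\C$ satisfies~$\C'$. For the converse, suppose~$f$ satisfies all constraints in~$\C'$ but violates some constraint $c \in \C \setminus \C'$, say $c = R(x_{i_1},\ldots,x_{i_k})$ with $(f(x_{i_1}),\ldots,f(x_{i_k})) = u \notin R$. Let $\mathbf{v} \in \{0,1\}^m$ be the evaluation vector of the multilinear monomials of degree~$\le d$ at~$f$. By~\eqref{eq:unsat-u}, $p_{c,u}$ evaluated at~$f$, i.e. $\langle p_{c,u}, \mathbf{v}\rangle$, is nonzero over~$E_u$. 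On the other hand, $p_{c,u}$ lies in the span of~$B_u$ over~$E_u$, so $p_{c,u} = \sum_{c' : p_{c',u} \in B_u} \lambda_{c'} \, p_{c',u}$ for suitable scalars $\lambda_{c'} \in E_u$; evaluating at~$f$ gives $\langle p_{c,u}, \mathbf{v}\rangle = \sum \lambda_{c'} \langle p_{c',u}, \mathbf{v}\rangle$. But each $c'$ with $p_{c',u} \in B_u$ is in~$\C'$, hence is satisfied by~$f$, so the restricted assignment on $c'$ is \emph{not}~$u$, and --- here we need a small extra observation --- we want $\langle p_{c',u}, \mathbf{v}\rangle = 0$. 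This does \emph{not} follow merely from~\eqref{eq:sat-R}, since $c'$ might be violated by~$f$ with a \emph{different} forbidden pattern $u' \neq u$. The main obstacle is precisely this point, and it is resolved exactly as in~\cite{JansenP18}: we reveal the constraints in~$\C'$ to~$f$ one forbidden pattern at a time, processing the $u$'s in a fixed order and arguing by induction that if $f$ violates \emph{some} constraint of~$\C$, then it violates one whose ``first'' violated pattern is~$u$, and for that minimal~$u$ every $c' \in \C'$ either is satisfied on pattern~$u$ (giving $p_{c',u}$-value~$0$ by~\eqref{eq:sat-R} applied at the Boolean point, since the restricted tuple is in~$R$ or is some $u'$ with... ) --- more carefully, one orders the patterns and shows that for the lexicographically-first violated pattern~$u^\star$ of the lexicographically-first violated constraint, all kept constraints $c'$ with $p_{c',u^\star} \in B_{u^\star}$ have their $u^\star$-tuple in~$R$, whence $\langle p_{c',u^\star}, \mathbf{v}\rangle = 0$ by~\eqref{eq:sat-R}; summing, $\langle p_{c,u^\star}, \mathbf{v}\rangle = 0$, contradicting~\eqref{eq:unsat-u}. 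This inductive ordering argument is the technical heart; everything else is linear algebra over $\mathbb{Q}$ and over rings $\mathbb{Z}/q\mathbb{Z}$ of prime-power order, for which the needed facts (polynomial-time computable spanning sets of size $\Oh(m)$) are standard.
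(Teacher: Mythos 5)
You take essentially the same route as the paper: for each forbidden tuple $u \notin R$, build the system $L_u$ of equalities $p_u(\text{vars of }c) = 0$ over $E_u$, one per constraint $c \in \C$, prune each $L_u$ to a subset of size $\Oh(n^d)$ whose Boolean solutions coincide with those of $L_u$, and keep exactly those constraints whose equation survives the pruning for some $u$. The paper invokes the sparsification theorems of~\cite{JansenP18} (restated here as Theorems~\ref{thm:subset_kernel_modm} and~\ref{thm:subset_kernel_Q}) as black boxes for the pruning step, whereas you re-derive it directly from a spanning-subset argument over $\mathbb{Q}$ and over $\mathbb{Z}/q_u\mathbb{Z}$. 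That re-derivation is fine in spirit; the one thing worth making explicit is that over the local ring $\mathbb{Z}/q_u\mathbb{Z}$ you need a generating \emph{subset} of the given vectors, not just an arbitrary small generating set of the submodule, and this follows from Nakayama's lemma (reduce mod the maximal ideal $(p)$, pick a linearly independent subset over $\mathbb{Z}/p\mathbb{Z}$, and lift).

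The genuine flaw in your write-up is the ``main obstacle'' you raise in the converse direction, which is a phantom. You worry that for a kept constraint $c'$ with $p_{c',u} \in B_u$, the value $\langle p_{c',u}, \mathbf{v}\rangle$ might fail to vanish because $c'$ could be violated by $f$ with some other forbidden pattern $u' \neq u$. This is impossible under your own standing hypothesis: $c'$ is in $\C'$ precisely because $p_{c',u} \in B_u$, and you assumed $f$ satisfies every constraint of $\C'$, so $f$ satisfies $c'$, so the restriction of $f$ to the variables of $c'$ is a tuple \emph{in} $R$, and then $\langle p_{c',u}, \mathbf{v}\rangle = p_u(\text{that tuple}) = 0$ directly by~\eqref{eq:sat-R}. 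What you derived --- that the restricted tuple ``is not $u$'' --- is strictly weaker than what is actually true (it is in $R$), and chasing the weaker fact is what led you into the confused lexicographic-ordering detour, which is both unnecessary and not actually carried out. Once you replace that detour with the one-line observation above, your argument closes and matches the paper's (and indeed the paper's Claim~\ref{claim:kernel-correct} is exactly this one-liner).
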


The next lemma states that any $k$-ary Boolean relation $R$ with $|R| < 2^k - 1$ admits a non-trivial sparsification. To prove the lemma, we show that such relations can be represented by polynomials of degree at most $k-1$, such that the sparsification can be obtained using Theorem \ref{thm:only-need-polynomials}. Since relations with $|R| = 2^k$ have a  sparsification of size $\Oh(1)$, as constraints over such relations are satisfied by any assignment, it will follow that $k$-ary relations with $|\{0,1\}^k \setminus R| \neq 1$ always allow a non-trivial sparsification.

\begin{lemma}[$\bigstar$]\label{lem:2-unsat-implies-kernel}
Let $R$ be a $k$-ary Boolean relation with $|R| < 2^k-1$. Let $\C$ be a set of constraints over $\{R\}$, using $n$ variables. Then there exists a polynomial-time algorithm that outputs $\C' \subseteq \C$ with $|\C'| = \Oh(n^{k-1})$, such that a Boolean assignment satisfies all constraints in $\C'$ if and only if it satisfies all constraints in $\C$.
\end{lemma}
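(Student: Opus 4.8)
The plan is to invoke Theorem~\ref{thm:only-need-polynomials} with $d = k-1$, so it suffices to show that for every unsatisfying assignment $u \in \{0,1\}^k \setminus R$ there is a ring $E_u$ from the allowed list and a polynomial $p_u$ over $E_u$ of degree at most $k-1$ that captures $u$ with respect to $R$ in the sense of Definition~\ref{def:capture}. Since $|R| < 2^k - 1$, there are at least two forbidden tuples, so fix $u$ and pick a second forbidden tuple $u' \in \{0,1\}^k \setminus R$ with $u' \neq u$. The idea is to build $p_u$ over $\mathbb{Q}$ as a product of $k-1$ affine factors, each factor being either $x_i$ or $1 - x_i$, chosen so that the product vanishes on all of $\{0,1\}^k$ except possibly on $u$ and $u'$, and then to use the freedom of which coordinate to ``skip'' to kill $u'$ as well while keeping the value on $u$ nonzero.

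Concretely, first I would recall the standard indicator construction: for a tuple $t \in \{0,1\}^k$ the degree-$k$ polynomial $\prod_{i : t_i = 1} x_i \cdot \prod_{i : t_i = 0}(1-x_i)$ evaluates to $1$ on $t$ and $0$ on every other point of $\{0,1\}^k$. To get degree $k-1$, I drop one factor, say the factor corresponding to coordinate $j$. The resulting polynomial $q_j$ then evaluates to $1$ on exactly the two Boolean points that agree with $t$ on all coordinates except $j$ (namely $t$ itself and the point $t \oplus e_j$), and to $0$ everywhere else on $\{0,1\}^k$. Taking $t = u$, the polynomial $q_j$ is nonzero precisely on $u$ and $u \oplus e_j$; it is zero on all of $R$ as long as $u \oplus e_j \notin R$, and it certifies $u$ since $q_j(u) = 1 \neq 0$. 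So the task reduces to choosing $j \in [k]$ with $u \oplus e_j \notin R$. If the Hamming ball of radius $1$ around $u$ contained only one non-$R$ point (namely $u$), then all $k$ neighbours of $u$ would lie in $R$; I would argue that because we merely need \emph{some} forbidden tuple other than $u$, and more carefully that we can always route around $R$: if every neighbour of $u$ is in $R$, then replace $u$ by a forbidden tuple $u'$ at Hamming distance as small as possible from $u$, and observe that on the shortest path one finds two adjacent Boolean points one of which is in $R$ and one of which is not; flipping between them gives the needed coordinate. Essentially, we want, for the chosen forbidden $u$, a coordinate $j$ such that flipping $u$ in coordinate $j$ stays outside $R$ — and if no such $j$ exists for $u$ itself, a short case analysis using the existence of a \emph{second} forbidden tuple lets us pick a different forbidden tuple to certify, since Theorem~\ref{thm:only-need-polynomials} only requires one capturing polynomial per forbidden tuple, and the set of forbidden tuples we must capture is all of $\{0,1\}^k \setminus R$; hence we must handle the case where $u$ is ``surrounded'' by $R$ too.

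The main obstacle, then, is exactly this last point: the naive degree-$(k-1)$ indicator trick only works for an unsatisfying assignment $u$ that has a Hamming-neighbour which is also unsatisfying, and there can be forbidden tuples all of whose neighbours lie in $R$. To handle such a $u$, I would instead take a \emph{linear combination} of the degree-$(k-1)$ polynomials $q_1, \dots, q_k$ obtained by dropping each coordinate of the indicator of $u$: each $q_j$ is $0$ on $\{0,1\}^k \setminus \{u, u\oplus e_j\}$ and equal to $1$ on $u$, so $\sum_{j\in[k]} q_j$ equals $k$ on $u$ and equals the number of coordinates $j$ with $u \oplus e_j = t$, i.e.\ $\mathbf{1}[d_H(t,u)=1]$, on any other $t$. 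Working modulo a suitable prime power $q_u$ (e.g.\ a prime larger than $k$, which lies in the allowed set $\{\mathbb{Z}/q\mathbb{Z} : q \text{ a prime power}\}$), this combination is nonzero on $u$ and zero on all Boolean points at distance $\neq 1$ from $u$; it remains to knock out the at-most-$k$ points at distance exactly $1$ from $u$ that lie in $R$, which can be done by subtracting appropriate scalar multiples of their own indicator-type polynomials — but those are degree $k$. The clean fix is to observe that we do not need a single polynomial valid for this hard $u$ via elementary products; rather, since $R$ is \emph{fixed} (constant-size), one shows abstractly that \emph{some} degree-$(k-1)$ polynomial capturing $u$ exists by a dimension count: the space of degree-$\le k-1$ polynomials over $\mathbb{Q}$ in $k$ variables, restricted to $\{0,1\}^k$, has dimension $2^k - 1$ (all monomials except $x_1\cdots x_k$), so the evaluation map onto $\mathbb{Q}^{\{0,1\}^k}$ has a one-dimensional kernel spanned by the $\pm$-alternating indicator; a degree-$\le k-1$ polynomial is forced to vanish on a point $u$ only if it vanishes on all of $R \cup \{u\}$ implies it is (up to scaling) that alternating function, which is nonzero everywhere — contradiction once $|R \cup \{u\}| \le 2^k - 1$, i.e.\ $|R| < 2^k - 1$. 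Thus for every $u \notin R$ there is a degree-$(k-1)$ rational polynomial vanishing on $R$ but not at $u$, and Theorem~\ref{thm:only-need-polynomials} applied with $d = k-1$ yields $\C' \subseteq \C$ with $|\C'| = \Oh(n^{k-1})$ preserving satisfiability, as claimed. I expect the dimension-count argument to be the crisp way to finish, with the product construction serving only as intuition.
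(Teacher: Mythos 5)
Your final argument is correct, but it takes a genuinely different route from the paper's. The paper proves the existence of the degree-$(k{-}1)$ polynomial over $\mathbb{Q}$ by an explicit induction on $k$: picking a second forbidden tuple $w \neq u$, it splits into the case where $u$ and $w$ disagree in every coordinate (handled by a NAE-style product $\prod_{i=1}^{k-1}(i-\sum_j r_j(x_j))$) and the case where they agree in some coordinate $i$ (handled by fixing $x_i$, restricting $R$, inducting, and multiplying by the linear factor $1-x_i-u_i$). Your dimension-count argument bypasses the second forbidden tuple $w$ and the case analysis entirely: the multilinear polynomials of degree $\le k-1$ restricted to $\{0,1\}^k$ form a hyperplane $V$ of dimension $2^k-1$ inside $\mathbb{Q}^{\{0,1\}^k}$, whose annihilator is one-dimensional and spanned by the alternating functional $\sum_t(-1)^{\mathrm{wt}(t)}\,\mathrm{ev}_t$ with full support; if no $p\in V$ vanished on $R$ without also vanishing at $u$, then $\mathrm{ev}_u - \sum_{r\in R}c_r\,\mathrm{ev}_r$ would be a nonzero element of $V^\perp$ supported on only $|R|+1 < 2^k$ points, contradicting that the alternating functional has full support. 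This is a clean and shorter existence proof, at the cost of being non-constructive (though as you note $R$ and $k$ are fixed, so finding $p_u$ is a constant-size linear-algebra computation). Two expository caveats: your first two constructions (the drop-one-factor indicator and the sum $\sum_j q_j$ over a prime $>k$) are correctly identified by you as insufficient and serve only as motivation, so they should be cut in a final writeup; and the phrase ``the evaluation map onto $\mathbb{Q}^{\{0,1\}^k}$ has a one-dimensional kernel spanned by the $\pm$-alternating indicator'' misstates the key fact --- the evaluation map from $V$ into $\mathbb{Q}^{\{0,1\}^k}$ is injective, and what is one-dimensional is the \emph{annihilator} $V^\perp$ (equivalently, the cokernel), spanned by the alternating functional. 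Once that is stated correctly the argument is sound.
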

\begin{proof}[Proof sketch.]
%
%
We will show that for every $u \in \{0,1\}^k \setminus R$, there exists a degree-$(k-1)$ polynomial~$p_u$ over $\mathbb{Q}$ that captures $u$, such that the result follows  from Theorem~\ref{thm:only-need-polynomials}.
We will prove the existence of such a polynomial by induction on $k$. For $k=1$, the lemma statement implies that $R  = \emptyset$. Thereby, for any $u \notin R$, we simply choose $p_u(x_1) := 1$. This polynomial satisfies the requirements, and has degree $0$.
Let $k > 1$ and let $u=(u_1,\ldots,u_k) \in \{0,1\}^k \setminus R$. Since $|R| < 2^k-1$, we can choose $w = (w_1,\ldots,w_k)$ such that $w \in \{0,1\}^k \setminus R$ and $w \neq {u}$. We distinguish two cases, depending on whether~$u$ and~$w$ agree on some position.

Suppose $u_i \neq w_i$ for all $i$, and assume for concreteness that~$u = (0, \ldots, 0)$ and~$w = (1, \ldots, 1)$. Then the polynomial~$p_u(x_1,\ldots,x_k) := \prod_{i=1}^{k-1} (i - \sum _{j=1}^{k} x_j)$ suffices: $p_u(0,\ldots,0) = \prod_{j=1}^{k-1} j \neq 0$, while for any $(x_1,\ldots,x_k)\in R$, it holds that $\sum_{i=1}^k x_i \in [k-1]$ and thereby $p_u(x_1,\ldots,x_k) = 0$; the product has a $0$-term. Other values of~$u$ and~$w$ are handled similarly.

Now suppose $u_i = w_i$ for some $i \in [k]$, and assume for concreteness that~$u_1 = w_1 = 1$. Define $R' := \{(x_2,\ldots,x_k) \mid (1,x_2,\ldots,x_k)\in R\}$ and let $u' := (u_2, \ldots, u_k)$. Since~$(u_2, \ldots, u_k)$ and~$(w_2, \ldots, w_k)$ are distinct tuples not in~$R'$, by induction there is a polynomial $p_{u'}$ of degree~$k-2$ that captures $u'$ with respect to $R'$. Then the polynomial $p_u(x_1,\ldots,x_k) := x_1 \cdot p_{u'}(x_2,\ldots,x_k)$ has degree~$k-1$ and captures~$u$ with respect to~$R$.
\end{proof}

To show the other part of the dichotomy, we will need the following theorem.
\begin{theorem}[$\bigstar$]\label{thm:LB-by-defining-OR}
Let $\Gamma$ be an intractable (Boolean) constraint language,
and let $k \geq 1$.
If there exists $R \in \Gamma$ such that $R$ cone-defines $k$-\OR, then $\csp(\Gamma)$ does not have a generalized kernel of size $\Oh(n^{k-\varepsilon})$, unless \containment.
\end{theorem}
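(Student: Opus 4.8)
The plan is to transfer a known sparsification lower bound for CNF-SAT to $\csp(\Gamma)$ through the cone-definability machinery. Since $R$ cone-defines $\kor{k}$ and cone-definability is transitive (compose the substitution tuples), $R$ also cone-defines every relation $R_u := \{0,1\}^k \setminus \{u\}$ for $u \in \{0,1\}^k$: negate, in the cone-definition of $\kor{k}$ from $R$, the coordinates indexed by the $1$-entries of $u$. Put $\Delta_k := \{R_u \mid u \in \{0,1\}^k\}$; this is a finite constraint language each relation of which is cone-definable from $R \in \Gamma$. By Proposition~\ref{prop:lpt-from-cone-definability} (using that $\Gamma$ is intractable) there is a linear-parameter transformation from $\csp(\Gamma^* \cup \Delta_k)$ to $\csp(\Gamma)$, and linear-parameter transformations preserve generalized-kernelization lower bounds. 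So it suffices to show that $\csp(\Gamma^* \cup \Delta_k)$ has no generalized kernel of size $\Oh(n^{k-\varepsilon})$ unless \containment.

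For $k \geq 3$ this is immediate. A constraint $R_u(x_1,\dots,x_k)$ is exactly the clause $\bigvee_{i\in[k]} \ell_i$ with $\ell_i = x_i$ when $u_i = 0$ and $\ell_i = \neg x_i$ when $u_i = 1$, and shorter clauses arise by repeating variables; hence the restriction of $\csp(\Gamma^*\cup\Delta_k)$ to instances using only $\Delta_k$ is precisely \textsc{$k$-CNF-SAT} parameterized by the number of variables, which embeds into $\csp(\Gamma^*\cup\Delta_k)$ by a trivial linear-parameter transformation. The lower bound of Dell and van~Melkebeek~\cite{DellM14} says \textsc{$k$-CNF-SAT} has no generalized kernel of size $\Oh(n^{k-\varepsilon})$ unless \containment, and chaining the two transformations gives the theorem for $k\geq 3$.

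The cases $k\in\{1,2\}$ need a genuinely different argument and I expect them to be the main obstacle, since \textsc{$1$-CNF-SAT} and \textsc{$2$-CNF-SAT} are polynomial-time solvable: the bound must come from the intractability of $\Gamma$. (Every intractable Boolean constraint language has a relation of arity at least $3$, as every at-most-binary Boolean relation is preserved by the majority operation and hence tractable; thus $\csp(\Gamma^*\cup\Delta_k)$ is genuinely NP-complete.) For $k=1$ one has $\Delta_1\subseteq\Gamma^*$, so it remains to show the NP-complete problem $\csp(\Gamma^*)$ has no generalized kernel of size $\Oh(n^{1-\varepsilon})$; this follows from an OR-cross-composition of \textsc{$3$-CNF-SAT} into $\csp(\Gamma^*)$ of linear cost --- use a private variable copy plus a selector variable for each input instance and realise the constant-size gadget constraints by a fixed primitive-positive definition from $\Gamma^*$, giving a combined instance on $\Oh(t\cdot\mathrm{poly}(s))$ variables for $t$ instances of size at most $s$ --- together with the cost-sensitive form of the cross-composition framework~\cite{BodlaenderJK14}.

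For $k=2$ we need the stronger claim that $\csp(\Gamma^*\cup\Delta_2)$ has no generalized kernel of size $\Oh(n^{2-\varepsilon})$, which calls for an OR-cross-composition of cost $\Oh(\sqrt{t})$. The intended scheme arranges the $t$ input instances in a $\sqrt{t}\times\sqrt{t}$ table, uses $\Oh(\sqrt{t})$ one-hot row- and column-selector variables (the one-hot conditions and the activation links expressed by binary clauses from $\Delta_2$), reuses a single shared block of assignment variables across all cells, and uses the higher-arity relations available in $\Gamma$ to verify the instance sitting in the selected cell. The delicate point --- where I expect most of the work to lie --- is keeping the variable count at $\Oh(\sqrt{t}\cdot\mathrm{poly}(s))$: a naive encoding spends $\Omega(1)$ fresh variables on every constraint of every one of the $t$ instances and so costs $\Omega(t)$, so one needs an index-aware gadget (in the spirit of the packing constructions used to lower-bound \textsc{$d$-CNF-SAT}) whose structure is shared across cells. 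Given such a cross-composition, the cost-sensitive cross-composition theorem yields the $\Oh(n^{2-\varepsilon})$ lower bound for $\csp(\Gamma^*\cup\Delta_2)$ and completes the proof.
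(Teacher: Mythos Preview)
Your handling of $k \geq 3$ matches the paper's exactly. For $k \in \{1,2\}$, however, the paper takes much simpler routes than you propose, and your $k=2$ case has a genuine gap.

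For $k=1$, the paper does not build any cross-composition. Since $\csp(\Gamma)$ is NP-complete, it simply observes that taking the \emph{disjoint union} of $t$ instances on at most $m$ variables each yields a single instance on $N \leq tm$ variables that is a \yes-instance iff all inputs are; a hypothetical $\Oh(n^{1-\varepsilon})$ generalized kernel then turns this into an AND-compression to $\Oh(N^{1-\varepsilon})$ bits, contradicting Drucker's theorem. Your OR-cross-composition idea is plausible in outline but needlessly elaborate, and the selector gadget you sketch would still need to enforce ``at least one selector is on'' without blowing up the variable count.

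For $k=2$, you explicitly leave the hard step undone: you describe the $\sqrt{t}\times\sqrt{t}$ table scheme and then say ``one needs an index-aware gadget \ldots\ whose structure is shared across cells'' without providing one. The paper sidesteps this entirely by giving a \emph{linear-parameter transformation from \textsc{Vertex Cover}} into $\csp(\Gamma^*\cup\Delta_2)$. Given $(G,k)$ on $n$ vertices, introduce a variable $x_v$ per vertex and the constraint $\kor{2}(x_u,x_v)$ per edge; then enforce $\sum_v x_v = k$ via a pp-definition of the weight-$k$ relation $H_{n,k}$ from $\Gamma^*$, which (citing \cite{LagerkvistW17}) costs only $\Oh(n+k)$ extra variables and constraints. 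Since \textsc{Vertex Cover} parameterized by $|V|$ has no $\Oh(n^{2-\varepsilon})$ generalized kernel by Dell--van~Melkebeek, the bound transfers. This is the idea you are missing: rather than engineering a bespoke $\sqrt{t}$-cost cross-composition that must somehow verify an arbitrary selected instance using shared variables, you can inherit the quadratic lower bound from a problem where it is already established, using the expressive power of an intractable $\Gamma^*$ only to encode a single cardinality constraint of linear size.
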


The next lemma formalizes the idea that any $k$-ary relation with $|\{0,1\}^k\setminus R| = 1$ is equivalent to $k$-\OR, up to negation of variables. The proof of the dichotomy given in Theorem \ref{thm:trivial-vs-nontrivial} will follow from Lemma \ref{lem:2-unsat-implies-kernel}, together with the next lemma and Theorem \ref{thm:LB-by-defining-OR}.
\begin{lemma}[$\bigstar$]\label{lem:1-unsat-defines-OR}
Let $R$ be a $k$-ary relation with $|R| = 2^k-1$. Then $R$ cone-defines $k$-\OR.
\end{lemma}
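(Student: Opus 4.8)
The plan is to show that a $k$-ary relation $R$ with $|R| = 2^k - 1$ is, up to flipping the roles of $0$ and $1$ in each coordinate, exactly the $k$-ary OR relation $\kor{k}$, and that this correspondence is precisely what cone-definability captures. Since $|R| = 2^k - 1$, there is a unique tuple $u = (u_1,\ldots,u_k) \in \{0,1\}^k \setminus R$; that is, $R = \{0,1\}^k \setminus \{u\}$. Recall that $\kor{k} = \{0,1\}^k \setminus \{(0,\ldots,0)\}$, so $\kor{k}$ is the special case $u = (0,\ldots,0)$.

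The construction is as follows. For each $i \in [m]$ with $m = k$, set $y_i := x_i$ if $u_i = 0$ and $y_i := \neg x_i$ if $u_i = 1$. Then the tuple $(y_1,\ldots,y_k)$ uses no constants, and each $x_i$ literally appears (so the first two bullets of Definition~\ref{def:cone} are immediate). It remains to check the equivalence: for $f \colon \{x_1,\ldots,x_k\} \to \{0,1\}$, we have $(f(x_1),\ldots,f(x_k)) \in \kor{k}$ if and only if $(\hat f(y_1),\ldots,\hat f(y_k)) \in R$. The key observation is that $\hat f(y_i) = f(x_i) \oplus u_i$, so the map $(f(x_1),\ldots,f(x_k)) \mapsto (\hat f(y_1),\ldots,\hat f(y_k))$ is exactly the bijection of $\{0,1\}^k$ that XORs coordinatewise with $u$. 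This bijection sends $(0,\ldots,0)$ to $u$ and hence sends $\{0,1\}^k \setminus \{(0,\ldots,0)\} = \kor{k}$ bijectively onto $\{0,1\}^k \setminus \{u\} = R$. Therefore $(f(x_1),\ldots,f(x_k)) \in \kor{k} \iff (\hat f(y_1),\ldots,\hat f(y_k)) \in R$, which is the third bullet of Definition~\ref{def:cone}. This shows $R$ cone-defines $\kor{k}$.

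Wait — I should double check the direction: Definition~\ref{def:cone} says $T$ (here $\kor{k}$, of arity $m=k$) is cone-definable from $U$ (here $R$, of arity $n=k$) via a tuple $(y_1,\ldots,y_n)$ of literals over the variables $x_1,\ldots,x_m$. So indeed we want to express $\kor{k}$ in terms of $R$, and the tuple above does exactly that. The statement of the lemma ("$R$ cone-defines $k$-\OR") matches: $k$-\OR is cone-definable from $R$.

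The proof is essentially a one-paragraph verification, so there is no serious obstacle; the only thing to be careful about is the bookkeeping of which relation plays the role of $T$ and which plays $U$ in Definition~\ref{def:cone}, and the elementary fact that complementing selected coordinates is an involutive bijection on $\{0,1\}^k$ that carries the single excluded tuple of $\kor{k}$ to the single excluded tuple of $R$. I would write the argument by first fixing the unique $u \notin R$, then defining the literal tuple, then verifying the three bullets of Definition~\ref{def:cone} in order, using $\hat f(y_i) = f(x_i) \oplus u_i$ as the workhorse identity.
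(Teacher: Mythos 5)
Your proof is correct and uses essentially the same construction as the paper: identifying the unique $u \notin R$ and forming the literal tuple with $y_i = x_i$ when $u_i = 0$ and $y_i = \neg x_i$ when $u_i = 1$. The only cosmetic difference is that you verify the third condition of Definition~\ref{def:cone} via the observation that $\hat f(y_i) = f(x_i) \oplus u_i$ is a bijection carrying $(0,\ldots,0)$ to $u$, whereas the paper does a two-case check (assignment in $\kor{k}$ versus not); the content is the same.
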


\section{From balanced operations to linear sparsification}
\label{sec:balanced-to-linear}
The main result of this section is the following theorem, which we prove below.

\begin{theorem}\label{thm:preserved_by_balanced}
Let $\Gamma$ be a balanced (Boolean) constraint language.
Then $\csp(\Gamma)$ has a kernel with $\Oh(n)$ constraints that are a subset of the original constraints. The kernel can be stored using $\Oh(n\log n)$ bits.
\end{theorem}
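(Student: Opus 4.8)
The plan is to reduce Theorem~\ref{thm:preserved_by_balanced} to the polynomial framework of Theorem~\ref{thm:only-need-polynomials} by showing that every balanced relation can be captured, unsatisfying assignment by unsatisfying assignment, by \emph{degree-1} polynomials over rings of the form $\mathbb{Z}/q\mathbb{Z}$ with $q$ a prime power. Fix a balanced relation $R \subseteq \{0,1\}^k$ and an unsatisfying assignment $u \in \{0,1\}^k \setminus R$; we must produce a ring $E_u$ and an affine polynomial $p_u(x_1,\ldots,x_k) = c_0 + \sum_{i\in[k]} c_i x_i$ over $E_u$ that vanishes on all of $R$ but not on $u$. By Proposition~\ref{prop:balanced-acts-as-alternating}, balancedness means $R$ is preserved by every alternating operation of odd arity, i.e.\ there is \emph{no} odd-length alternating sum $s_1 - s_2 + \cdots + s_{2n+1}$ of tuples of $R$ landing on a $0/1$-vector outside $R$ — in particular none landing on $u$. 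The first key step is to translate this combinatorial non-existence statement into the existence of a certifying affine functional. Consider the affine lattice/coset structure: let $A$ be the set of all integer vectors expressible as alternating sums (equivalently, affine combinations with integer coefficients summing to $1$) of elements of $R$. The balancedness hypothesis says $u \notin A$. I would then invoke a suitable separation/duality result over $\mathbb{Z}$ (or, more robustly, over $\mathbb{Z}/q\mathbb{Z}$ for an appropriate prime power $q$) to obtain integers $c_0,\ldots,c_k$ with $q \nmid (c_0 + \sum_i c_i u_i)$ but $q \mid (c_0 + \sum_i c_i r_i)$ for every $r \in R$.

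The second key step is obtaining that separating functional with a \emph{prime-power} modulus rather than over $\mathbb{Q}$ — this is where the real content lies, and I expect it to be the main obstacle. Over $\mathbb{Q}$ one can always separate an affine subspace from a point outside it, giving a rational affine polynomial; but Theorem~\ref{thm:only-need-polynomials} wants either $\mathbb{Q}$ or $\mathbb{Z}/q\mathbb{Z}$ with $q$ a prime power, and the delicate point is that the natural certificate may genuinely require a composite modulus (e.g.\ the standard trick for "exactly one" vs.\ "at most/at least" style relations mixes behaviour mod different primes). The resolution I would pursue: first produce, for the point $u$, an integer affine form $\ell(x) = c_0 + \sum_i c_i x_i$ that is $\equiv 0$ on all of $R$ over $\mathbb{Z}$ up to a common value — more precisely, show $R$ is contained in an affine hyperplane $\{x : \ell(x) = t\}$ over $\mathbb{Z}$ while $\ell(u) \neq t$ — OR, when no such integral hyperplane exists, argue that the failure is detected modulo a single prime power. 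Concretely, take the affine span (over $\mathbb{Q}$) of $R$; if $u$ lies outside it we get a $\mathbb{Q}$-certificate directly. If $u$ lies in the $\mathbb{Q}$-affine span but not in $A$ (the integer alternating-sum hull), then the obstruction is torsion in the quotient of the ambient affine lattice by the affine lattice generated by $R$; this torsion group is finite abelian, hence a product of cyclic groups of prime-power order, and projecting onto the cyclic factor that "sees" $u$ yields exactly the desired homomorphism to $\mathbb{Z}/q\mathbb{Z}$ with $q$ a prime power. Here I would lean on the referenced Lemma~\ref{lem:prime-powers-to-int} (the one attributed to Je\v{r}\'abek) to handle the passage between behaviour modulo several prime powers and a single one.

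With a degree-1 capturing polynomial in hand for every $u \in \{0,1\}^k \setminus R$ and every $R \in \Gamma$, the third step is routine bookkeeping: apply Theorem~\ref{thm:only-need-polynomials} with $d = 1$ to each relation $R \in \Gamma$ separately, reducing the number of constraints that are applications of $R$ to $\Oh(n)$; since $\Gamma$ is fixed and finite, summing over the $|\Gamma| = \Oh(1)$ relations gives a kernel with $\Oh(n)$ constraints overall, each retained from the original instance. Storing such an instance costs $\Oh(\log n)$ bits per constraint to name its $\Oh(1)$ variables (and $\Oh(1)$ bits for its relation index), for a total of $\Oh(n \log n)$ bits. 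I should also double-check the edge cases where $R = \{0,1\}^k$ (no unsatisfying assignments — trivially $\Oh(1)$ constraints suffice) and where $R = \emptyset$ (any single such constraint certifies unsatisfiability), so that the degree-1 capturing claim is only needed when $0 < |R| < 2^k$. The essential mathematical difficulty is entirely concentrated in the second paragraph: turning the universal-algebraic non-preservation witness into an affine form over a prime-power ring, and in particular ruling out that one is forced into a genuinely composite modulus.
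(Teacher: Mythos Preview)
Your proposal is correct and takes essentially the same approach as the paper: show that balancedness of $R$ forces every unsatisfying assignment $u$ to be captured by a degree-$1$ polynomial over some prime-power ring (or $\mathbb{Q}$), then apply Theorem~\ref{thm:only-need-polynomials} with $d=1$ relation by relation. The paper packages the key step as a proof by contradiction---if no such polynomial exists over any $\mathbb{Z}/q\mathbb{Z}$, then Lemma~\ref{lem:no-solution-implies-span} places $(1,u)$ in $\mathrm{span}_q$ of the rows $(1,r_i)$ for every prime power $q$, whence Lemma~\ref{lem:prime-powers-to-int} (Smith normal form) puts it in the integer span, contradicting balancedness---which is precisely the contrapositive of your direct torsion/structure-theorem argument on the quotient lattice $\mathbb{Z}^k/L$.
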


To prove the theorem, we will use two additional technical lemmas. To state them, we introduce some notions from linear algebra.
 Given a set $S = \{s_1,\ldots,s_n\}$ of $k$-ary vectors in $\mathbb{Z}^k$, we define $\text{span}_{\mathbb{Z}}(S)$ as the set of all vectors $y$ in $\mathbb{Z}^k$ for which there exist $\alpha_1,\ldots,\alpha_n \in \mathbb{Z}$ such that $y = \sum_{i\in[n]} \alpha_is_i$. Similarly, we define $\text{span}_q(S)$ as the set of all $k$-ary vectors $y$ over $\mathbb{Z}/q\mathbb{Z}$, such that there exist $\alpha_1,\ldots,\alpha_n$ such that  $y \equiv_q \sum_{i\in[n]} \alpha_is_i$. For an $m\times n$ matrix $S$, we use $s_i$ for $i\in[m]$ to denote the $i$'th row of $S$.

\begin{lemma}[$\bigstar$]\label{lem:prime-powers-to-int}
Let $S$ be an $m\times n$ integer matrix. Let ${u}\in \mathbb{Z}^n$ be a row vector.
If $u \in \text{span}_q(\{s_1,\ldots,s_m\})$ for all prime powers $q$, then $u \in \text{span}_{\mathbb{Z}}(\{s_1,\ldots,s_m\})$.
\end{lemma}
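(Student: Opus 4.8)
The plan is to deduce the integral statement from a classical structure theorem about finitely generated abelian groups / integer matrices, namely the Smith normal form. The key observation is that membership of $u$ in $\mathrm{span}_{\mathbb{Z}}(\{s_1,\ldots,s_m\})$ versus $\mathrm{span}_q$ can be read off from the Smith normal form of $S$. Write $S = P D Q$ with $P \in \mathrm{GL}_m(\mathbb{Z})$, $Q \in \mathrm{GL}_n(\mathbb{Z})$, and $D$ a diagonal matrix with entries $d_1 \mid d_2 \mid \cdots \mid d_r$ (the invariant factors) followed by zero rows/columns. Since $P$ and $Q$ are unimodular, the row span of $S$ over $\mathbb{Z}$ (respectively over $\mathbb{Z}/q\mathbb{Z}$) is obtained from the row span of $D$ by the invertible change of coordinates $Q$ on the ambient space; so without loss of generality I may assume $S = D$ is already in Smith normal form, after replacing $u$ by $u Q^{-1}$ (which is again an integer vector, and lies in $\mathrm{span}_q(D)$ for all $q$ iff the original $u$ lies in $\mathrm{span}_q(S)$ for all $q$, since $Q$ is invertible mod every $q$).

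**Next**, with $S = D$ diagonal, the row span over $\mathbb{Z}$ is exactly $\{(d_1 a_1, \ldots, d_r a_r, 0, \ldots, 0) : a_i \in \mathbb{Z}\}$, i.e.\ the set of integer vectors $u = (u_1, \ldots, u_n)$ with $u_i = 0$ for $i > r$ and $d_i \mid u_i$ for $i \le r$. I must show that if $u$ is NOT in this set, then there is some prime power $q$ with $u \notin \mathrm{span}_q(D)$. There are two failure modes. If $u_i \ne 0$ for some $i > r$, pick a prime power $q$ with $q > |u_i|$ (or more carefully $q \nmid u_i$), so that the $i$-th coordinate of $\mathrm{span}_q(D)$ is forced to be $0 \bmod q$ while $u_i \not\equiv_q 0$. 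If instead $u_i = 0$ for all $i > r$ but $d_j \nmid u_i$ for some $i \le r$, then since $d_i \mid d_j$... wait, I want the $i$-th coordinate: $\mathrm{span}_q(D)$ in coordinate $i \le r$ is the subgroup $d_i (\mathbb{Z}/q\mathbb{Z})$, which equals $\gcd(d_i, q)(\mathbb{Z}/q\mathbb{Z})$. If $d_i \nmid u_i$, choose a prime $p$ and exponent $e$ such that $p^e \mid d_i$ but $p^e \nmid u_i$ (possible since $d_i \nmid u_i$), and set $q = p^{e'}$ for $e'$ large enough that $p^e$ is the exact power of $p$ dividing $\gcd(d_i, q) = p^{\min(e, e')} \cdot(\text{stuff coprime to }p)$—concretely take $e' \ge e$ plus enough slack so $p^e \| \gcd(d_i,q)$; then $\gcd(d_i,q) \nmid u_i$ in $\mathbb{Z}/q\mathbb{Z}$, witnessing $u \notin \mathrm{span}_q(D)$ by looking at coordinate $i$ alone. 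The contrapositive gives exactly the lemma.

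**The main obstacle** I anticipate is bookkeeping the prime-power moduli in the second failure mode: I need the modulus $q = p^{e'}$ to be a prime power (not an arbitrary composite), and I must ensure the obstruction really survives, i.e.\ that $d_i \cdot (\mathbb{Z}/p^{e'}\mathbb{Z})$, as a subgroup, does not contain $u_i \bmod p^{e'}$. This is the statement that $\gcd(d_i, p^{e'}) = p^{\min(v_p(d_i), e')}$ does not divide $u_i$ modulo $p^{e'}$, which holds precisely when $v_p(d_i) \le v_p(u_i)$ fails and $e' > v_p(u_i)$; choosing $e' = v_p(u_i) + 1$ works when $v_p(d_i) > v_p(u_i)$. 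One has to be slightly careful because working modulo a prime power localizes at a single prime, which is exactly why prime-power moduli (rather than just primes) are needed—over a prime field $\mathbb{Z}/p\mathbb{Z}$ one only sees whether $p \mid d_i$, losing higher-order divisibility information. Handling the first failure mode (extra nonzero coordinates beyond rank $r$) is easy by the same localization: any prime $p$ not dividing $u_i$ already works with $q = p$. Assembling these cases completes the proof of the contrapositive, and hence of Lemma~\ref{lem:prime-powers-to-int}.
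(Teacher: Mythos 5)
Your proposal is correct and follows essentially the same route as the paper's proof: both compute a Smith normal form of $S$, use unimodularity of the change-of-basis matrices to reduce to a diagonal system (the paper phrases this as two transfer claims between the systems $yS=u$ and $y'S'=u'$, whereas you directly replace $S$ by $D$ and $u$ by $uQ^{-1}$, but these are the same reduction), and then argue by contraposition, picking a prime power $q=p^{\ell}$ using $p$-adic valuations so that the obstruction in one coordinate survives modulo $q$. The paper's specific choice in the second failure mode ($\ell = v_p(s'_{i,i})$, so that $s'_{i,i}\equiv_q 0$ while $u'_i\not\equiv_q 0$) is a marginally cleaner version of your choice, but both are valid.
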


\begin{lemma}[$\bigstar$]\label{lem:no-solution-implies-span}
Let $q$ be a prime power. Let $A$ be an $m \times n$ matrix over $\mathbb{Z}/q\mathbb{Z}$. Suppose there exists no constant $c\not\equiv_q 0$ for which the system
$Ax\equiv_q b$
has a solution, where $b := (0,\ldots,0,c)^T$ is the vector with $c$ on the last position and zeros in all other positions.

Then $a_m \in \text{span}_q(\{a_1,\ldots,a_{m-1}\})$.
\end{lemma}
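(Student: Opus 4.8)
The plan is to recast the hypothesis as a containment of solution modules and then, because $\mathbb{Z}/q\mathbb{Z}$ is a ring rather than a field, replace the usual orthogonal-complement argument by a Smith-normal-form computation.

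First I would reformulate the hypothesis. A vector $x$ solves $Ax \equiv_q (0,\ldots,0,c)^T$ precisely when $a_i x \equiv_q 0$ for every $i \in [m-1]$ and $a_m x \equiv_q c$. Hence the nonexistence of such a solution for every $c \not\equiv_q 0$ is equivalent to the inclusion
\[
\{\, x \in (\mathbb{Z}/q\mathbb{Z})^n : a_i x \equiv_q 0 \text{ for all } i \in [m-1] \,\} \subseteq \{\, x \in (\mathbb{Z}/q\mathbb{Z})^n : a_m x \equiv_q 0 \,\}
\]
(taking $c := a_m x$ for the forward direction). Over a field this inclusion of annihilators would immediately give $a_m \in \text{span}_q(\{a_1,\ldots,a_{m-1}\})$, but over $\mathbb{Z}/q\mathbb{Z}$ I need an explicit description of the left-hand module.

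Next I would bring in Smith normal form. Let $B$ be the $(m-1)\times n$ matrix with rows $a_1,\ldots,a_{m-1}$ and let $\hat B$ be an integer matrix reducing to $B$ modulo $q$; write $\hat B = U\hat D V$ with $U \in \mathrm{GL}_{m-1}(\mathbb{Z})$, $V \in \mathrm{GL}_n(\mathbb{Z})$, and $\hat D$ diagonal with diagonal entries $d_1,\ldots,d_r$, where $r := \min(m-1,n)$. Reducing modulo $q$ keeps $U,V$ invertible (they have determinant $\pm 1$) and gives $B = UDV$ over $\mathbb{Z}/q\mathbb{Z}$ with $D$ still ``diagonal''. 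Substituting $y := Vx$, which is a bijection of $(\mathbb{Z}/q\mathbb{Z})^n$, the module $\{x : Bx \equiv_q 0\}$ corresponds to $\{y : d_j y_j \equiv_q 0 \text{ for } j \le r\}$; that is, $y_j$ runs through the annihilator ideal $\mathrm{Ann}(d_j) = \{z : d_j z \equiv_q 0\}$ for $j \le r$ and is unconstrained for $j > r$. Setting $c := a_m V^{-1}$, the condition ``$a_m x \equiv_q 0$ on this module'' turns into ``$\sum_j c_j y_j \equiv_q 0$ for all such $y$'', and varying one coordinate of $y$ at a time forces $c_j\,\mathrm{Ann}(d_j) \equiv_q 0$ for $j \le r$ and $c_j \equiv_q 0$ for $j > r$. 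A short calculation in $\mathbb{Z}/q\mathbb{Z}$ (using $\mathrm{Ann}(d) = (q/\gcd(q,d))(\mathbb{Z}/q\mathbb{Z})$ and $d(\mathbb{Z}/q\mathbb{Z}) = \gcd(q,d)(\mathbb{Z}/q\mathbb{Z})$) shows this is exactly the condition that $c_j \in d_j(\mathbb{Z}/q\mathbb{Z})$ for every $j$; hence $c$ lies in the row span of $D$, say $c = \vec c' D$ with $\vec c' \in (\mathbb{Z}/q\mathbb{Z})^{m-1}$.

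Finally I would unwind the substitutions: $a_m = cV = \vec c' D V = \vec c' U^{-1}(UDV) = (\vec c' U^{-1})B$, so with $\beta := \vec c' U^{-1}$ we obtain $a_m = \sum_{i \in [m-1]} \beta_i a_i$, i.e.\ $a_m \in \text{span}_q(\{a_1,\ldots,a_{m-1}\})$, as desired. The main obstacle is the middle step: since $\mathbb{Z}/q\mathbb{Z}$ has zero divisors, dimension counting and orthogonal complements are unavailable, and the real work is in bookkeeping the coordinatewise annihilator conditions in the Smith-normal-form coordinates. (Incidentally, the hypothesis that $q$ is a prime power is not actually needed for this lemma; one could alternatively phrase the proof through the self-injectivity of $\mathbb{Z}/q\mathbb{Z}$, which yields the double-annihilator identity directly, but the route above is elementary and self-contained.)
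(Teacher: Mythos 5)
Your proof is correct, and it takes a genuinely different route from the paper's. Both proofs reduce to the Smith normal form of the first $m-1$ rows, but from there they diverge. The paper proves the \emph{contrapositive}: it assumes $s_m \notin \text{span}_q(\{s_1,\ldots,s_{m-1}\})$ (in SNF coordinates), locates a coordinate $i$ at which no coefficient can work, and then uses the hypothesis that $q = p^k$ is a prime power to do a divisibility case analysis (comparing the exact power of $p$ dividing $s_{i,i}$ and $s_{m,i}$) and exhibit an explicit nonzero right-hand-side $c$ for which the system is solvable. You instead prove the direct implication by recasting the hypothesis as the inclusion $\ker(a_1,\ldots,a_{m-1}) \subseteq \ker(a_m)$ and then identifying the condition $c_j\,\mathrm{Ann}(d_j) = 0$ with $c_j \in d_j(\mathbb{Z}/q\mathbb{Z})$ via the double-annihilator identity $\mathrm{Ann}(\mathrm{Ann}(d_j)) = (d_j)$ in $\mathbb{Z}/q\mathbb{Z}$. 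That identity (the self-injectivity of $\mathbb{Z}/q\mathbb{Z}$) holds for \emph{every} positive integer $q$, so your observation that the prime-power hypothesis is not needed is correct, and your version of the lemma is slightly more general and arguably more conceptually transparent. What the paper's contrapositive route buys instead is an explicitly constructed solution $x$ to $Ax \equiv_q b$ when the span condition fails; the paper remarks after the proof that this constructiveness is used to actually compute the coefficients of a capturing polynomial, which your argument delivers less directly (one would need to track the existence steps backward through the annihilator identities).
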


Using these tools from linear algebra, we now prove the main sparsification result.

\begin{proof}[Proof of Theorem \ref{thm:preserved_by_balanced}]
We show that for all relations $R$ in the balanced constraint language~$\Gamma$, for all $u \notin R$, there exists a linear polynomial $p_u$ over a ring $E_u \in \{\mathbb{Z}/q_u\mathbb{Z}\mid q_u \text{ is a prime power}\}$ that captures $u$ with respect to $R$. By applying Theorem \ref{thm:only-need-polynomials} once for each relation~$R \in \Gamma$, to reduce the number of constraints involving~$R$ to~$\Oh(n)$, we then reduce any $n$-variable instance of $\csp(\Gamma$) to an equivalent one on~$|\Gamma| \cdot \Oh(n) \in \Oh(n)$ constraints.

Suppose for a contradiction that there exists $R \in \Gamma$ and $u \notin R$, such that no prime power $q$ and polynomial $p$ over $\mathbb{Z}/q\mathbb{Z}$ exist that satisfy conditions \eqref{eq:sat-R} and \eqref{eq:unsat-u}. We can view the process of finding such a linear polynomial, as solving a set of linear equations whose unknowns are the coefficients of the polynomial. We have a linear equation for each evaluation of the polynomial for which we want to enforce a certain value.

Let $R = \{r_1,\ldots,r_\ell\}$. By the non-existence of $p$ and $q$, the system
\[
\left(
\begin{matrix}
1 & r_{1,1} & r_{1,2} & \ldots & r_{1,k}\\
1 & r_{2,1} & r_{2,2} & \ldots & r_{2,k}\\
\vdots & \vdots & \vdots & \ddots & \vdots \\
1 & r_{\ell,1} & r_{\ell,2} & \ldots & r_{\ell,k}\\
1 & u_1 & u_2 & \ldots & u_k
\end{matrix}
\right)
\left(
\begin{matrix}
\alpha_0\\
\alpha_1\\
\alpha_2\\
\vdots\\
\alpha_k
\end{matrix}
\right)\equiv_{q}
\left(
\begin{matrix}
0\\
0\\
\vdots\\
0\\
c
\end{matrix}
\right)
\]
has no solution for any prime power $q$ and $c \not\equiv_q 0$. Otherwise, it is easy to verify that $q$ is the desired prime power and $p(x_1,\ldots,x_k):= \alpha_0 + \sum_{i=1}^k \alpha_i x_i$ is the desired polynomial.

The fact that no solution exists, implies that $(1,u_1,\ldots,u_k)$ is in the span of the remaining rows of the matrix, by Lemma \ref{lem:no-solution-implies-span}. But this implies that for any prime power $q$, there exist coefficients $\beta_1,\ldots,\beta_\ell$ over $\mathbb{Z}/q\mathbb{Z}$ such that ${u} \equiv_{q} \sum \beta_i r_i$. Furthermore, since the first column of the matrix is the all-ones column, we obtain that $\sum \beta_i \equiv_{q} 1$. By Lemma \ref{lem:prime-powers-to-int}, it follows that there exist integer coefficients $\gamma_1,\ldots,\gamma_\ell$ such that $\sum \gamma_i = 1$  and furthermore ${u}= \sum \gamma_i r_i$. But it immediately follows that $R \in \Gamma$ is not preserved by the balanced operation given by $f(x_1,\ldots,x_\ell) := \sum \gamma_ix_i$, which contradicts the assumption that~$\Gamma$ is balanced.
\end{proof}

The kernelization result above is obtained by using the fact that when $\Gamma$ is balanced, the constraints in $\csp(\Gamma)$ can be replaced by linear polynomials. We show in the next theorem that this approach fails when $\Gamma$ is not balanced.

\begin{theorem}[$\bigstar$]\label{thm:not-balanced-no-polynomial}
Let $R$ be a $k$-ary relation that is not balanced. Then there exists $u \in \{0,1\}^k \setminus R$ for which there exists no polynomial $p_u$ over any ring~$E$ that captures $u$ with respect to $R$.
\end{theorem}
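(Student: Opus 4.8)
The plan is to take the witness structure guaranteed by the failure of balancedness and show it obstructs every single polynomial, regardless of the ring. Since $R$ is not balanced, by the definition of balanced (preserved by all balanced operations) together with Proposition~\ref{prop:balanced-acts-as-alternating}, there is an odd $k'$ and an alternating operation $f$ of arity $k'$ that does not preserve $R$. Unpacking this, there exist tuples $s_1,\dots,s_{k'} \in R$ (with repetitions allowed) such that the component-wise alternating sum $u := s_1 - s_2 + s_3 - \cdots + s_{k'}$ lies in $\{0,1\}^k$ but $u \notin R$. This $u$ is the candidate for which no capturing polynomial exists; note that because each $s_i$ and $u$ are $0/1$-vectors, all the intermediate reasoning can be carried out over an arbitrary ring $E$ by interpreting the integers $0,1,-1$ via the ring's additive structure (so $s_i - s_{i+1}$ makes sense, and the coefficient sum $1 = +1-1+1-\cdots+1$ maps to the identity $1_E$).

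The core step is the following claim: if $p$ is any polynomial over any commutative ring $E$ with $p(s) = 0_E$ for all $s \in R$, then $p(u) = 0_E$ as well; this immediately contradicts condition~\eqref{eq:unsat-u} of Definition~\ref{def:capture} and proves the theorem. To establish the claim I would argue by induction on the total degree of $p$, or more cleanly, reduce $p$ to a multilinear polynomial first (using $x_i^2 = x_i$ on $\{0,1\}$-inputs, which is valid since the $s_i$ are $0/1$-tuples — though one must be slightly careful that $u$ is also a $0/1$-tuple, which it is by hypothesis). For a multilinear polynomial it suffices to handle each monomial $\prod_{i \in T} x_i$ for $T \subseteq [k]$: I want to show $\prod_{i \in T} u_i$ is an $E$-linear combination, with coefficients summing to $1_E$, of the values $\prod_{i \in T} (s_j)_i$ over $j \in [k']$ — i.e. the monomial-evaluation vector at $u$ is the same alternating combination of the monomial-evaluation vectors at the $s_j$. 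Concretely, since $u_i = \sum_{j} \epsilon_j (s_j)_i$ with $\epsilon_j = (-1)^{j+1}$ and $\sum_j \epsilon_j = 1$, expanding $\prod_{i\in T} u_i = \prod_{i \in T}\bigl(\sum_j \epsilon_j (s_j)_i\bigr)$ gives a sum over functions $T \to [k']$; the key combinatorial point is that because all coordinates involved are $0/1$-valued and the $\epsilon_j$ telescope with sum $1_E$, this collapses so that $\prod_{i\in T} u_i$ equals an integer (hence $E$-) combination $\sum_j \lambda_j \prod_{i\in T}(s_j)_i$ with $\sum_j \lambda_j = 1_E$. Combining over all monomials of $p$ with their coefficients, $p(u) = \sum_j \lambda_j^{(p)} p(s_j)$ for suitable ring elements, and since each $p(s_j) = 0_E$ we get $p(u) = 0_E$.

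I expect the main obstacle to be exactly this collapsing identity for products of $0/1$-coordinates: the naive expansion of $\prod_{i\in T} u_i$ has cross-terms indexed by non-constant functions $T \to [k']$, and one must show these organize into something whose coefficients still sum to $1_E$ — the right way is probably to prove directly, by induction on $|T|$, that the vector $\bigl(\prod_{i\in T}(s_j)_i\bigr)_{j}$ lies in the affine span (coefficient sum $1_E$) of a structure closed under the alternating operation, or to invoke that the set of $0/1$-relations closed under $f$ is closed under the "multilinear lift", effectively reusing the universal-algebraic closure property. An alternative, and perhaps the cleanest, route is to avoid multilinearization and instead argue as in the proof of Theorem~\ref{thm:preserved_by_balanced} in reverse: view "$p$ captures $u$" as solvability of a linear system over $E$ in the coefficients of $p$; the existence of the alternating witness says the row of $u$'s monomial-evaluations is an affine combination (coefficients summing to $1$) of the rows for $s_1,\dots,s_{k'}$, which forces any functional vanishing on the $R$-rows to vanish on the $u$-row. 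One has to make sure this affine-dependence statement is proved over $\mathbb Z$ (or even over an arbitrary ring directly), so that it specializes to every ring $E$ simultaneously — this uniformity over all rings is precisely what rules out the "cheat" of choosing a cleverly-tailored $E$, and is the crux of why the argument works.
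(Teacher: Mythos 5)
Your instinct to locate the right $u$ via the alternating operation is correct, and the affine case of your argument---writing $p_u(u) = \sum_{j=1}^{m}(-1)^{j+1}p_u(r_j) = 0$, using the fact that $\sum_j(-1)^{j+1}=1$ so the constant term $\beta_0$ survives exactly once under the alternation---is precisely the paper's proof. However, you read the theorem as being about polynomials of arbitrary degree, and you then (rightly) sense trouble at the ``collapsing identity'' step. That trouble is not a technical gap to be filled: the arbitrary-degree statement is simply false, and the step you flagged is exactly where it breaks.

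Concretely, take $R = \kor{2} = \{(0,1),(1,0),(1,1)\}$, which is not balanced: $(0,1)-(1,1)+(1,0) = (0,0) = u \notin R$. Yet the degree-$2$ polynomial $p(x_1,x_2)=(1-x_1)(1-x_2)$ over $\mathbb{Q}$ vanishes on all of $R$ and has $p(0,0)=1\neq 0$, so it captures $u$ in the sense of Definition~\ref{def:capture}. The monomial $x_1x_2$ exposes exactly the failure of your proposed identity: its value at $u$ is $0$, but the alternating sum of its values on $(0,1),(1,1),(1,0)$ is $0-1+0=-1$, so the degree-$2$ monomial-evaluation row at $u$ is \emph{not} in the affine span of the $R$-rows. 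Neither multilinearization nor the ``reverse linear-algebra'' variant can repair this, since both rest on that affine dependence. The theorem as proved in the appendix is only about degree-$1$ (affine) polynomials---the proof there says ``linear polynomial'' even though the statement omits the qualifier---and that restriction is what makes the telescoping of $\beta_0$ work. If you cut the claim down to degree $1$, your argument and the paper's coincide; trying to push past degree $1$ is not a missing lemma, it is chasing a false statement.
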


\section{Characterization of symmetric CSPs with linear sparsification}
\label{sec:symmetric}

In this section, we characterize the symmetric constraint languages $\Gamma$ for which $\csp(\Gamma)$ has a linear sparsification.

\begin{definition}
We say a $k$-ary Boolean relation $R$ is \emph{symmetric}, if there exists $S \subseteq \{0,1,\ldots,k\}$ such that a tuple $x = (x_1,\ldots,x_k)$ is in $R$ if and only if
$\weight(x) \in S$. We call $S$ the set of \emph{satisfying weights} for $R$.
\end{definition}
We will say that a constraint language $\Gamma$ is symmetric, if it only contains symmetric relations. We will prove the following theorem at the end of this section.

\begin{theorem}\label{thm:classification:symmetric}
Let $\Gamma$ be a finite Boolean symmetric intractable constraint language.
\begin{itemize}
\item If $\Gamma$ is balanced, then $\csp(\Gamma)$ has a kernel with $\Oh(n)$ constraints that can be stored in $\Oh(n\log n)$ bits.
\item If $\Gamma$ is  not balanced, then $\csp(\Gamma)$ does not have a generalized kernel of size $\Oh(n^{2-\varepsilon})$ for any $\varepsilon > 0$, unless \containment.
\end{itemize}
\end{theorem}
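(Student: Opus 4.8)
The plan is to treat the two bullets separately, using the machinery already assembled. The first bullet is immediate: if $\Gamma$ is balanced, then Theorem~\ref{thm:preserved_by_balanced} directly gives a kernel with $\Oh(n)$ constraints storable in $\Oh(n \log n)$ bits, with no use of symmetry needed. So the entire content is in the second bullet, where $\Gamma$ is symmetric, intractable, and \emph{not} balanced, and we must rule out a generalized kernel of size $\Oh(n^{2-\varepsilon})$. The strategy is: (1) locate a single relation $R \in \Gamma$ that is not balanced; (2) use symmetry of $R$ together with non-balancedness to show $R$ cone-defines $\kor{2}$; (3) invoke Theorem~\ref{thm:LB-by-defining-OR} with $k = 2$ to conclude that $\csp(\Gamma)$ has no generalized kernel of size $\Oh(n^{2-\varepsilon})$ unless \containment.

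For step~(1), since $\Gamma$ is not balanced, by definition some $R \in \Gamma$ is not balanced, i.e.\ not preserved by all balanced operations; by Proposition~\ref{prop:balanced-acts-as-alternating}, $R$ is not preserved by the alternating operation of some odd arity. Equivalently (unpacking the definition of ``balanced relation'' from the introduction), there is an odd-length alternating sum $s_1 - s_2 + s_3 - \cdots + s_{2n+1} = u$ with all $s_i \in R$ and $u \in \{0,1\}^k \setminus R$. This witness is what we exploit. For step~(2) I would first argue that because $R$ is symmetric with satisfying-weight set $S$, the existence of such an alternating-sum witness forces a structural obstruction on $S$: writing $w_i := \weight(s_i)$, the alternating sum of the weights equals $\weight(u) \notin S$ while each $w_i \in S$, and moreover the coordinatewise values of the $s_i$ must stay in $\{0,1\}$ along the way. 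I expect the key combinatorial claim to be that a symmetric unbalanced $R$ must have two consecutive integers $a, a+1 \in S$ with $a-1 \notin S$ or $a+2 \notin S$ (or some similarly small local pattern), which is exactly the configuration needed to extract a binary OR by plugging in constants for all but two coordinates and possibly negating. Concretely, by fixing $k-2$ of the coordinates to suitable constants $0/1$ (and negating where needed), the relation $R$ restricted to two free coordinates becomes one of $\{(0,1),(1,0),(1,1)\}$, $\{(0,0),(0,1),(1,0)\}$, etc.\ — all of which are $\kor{2}$ up to negation — so $R$ cone-defines $\kor{2}$ in the sense of Definition~\ref{def:cone}. This is also precisely the statement the introduction promises ("we prove that if a symmetric Boolean relation $R$ is not balanced, then it can implement a binary OR using constants and negations but without having to introduce fresh variables"), so I would state and prove it as a standalone lemma first.

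For step~(3), once we know $R \in \Gamma$ cone-defines $\kor{2}$ and $\Gamma$ is intractable, Theorem~\ref{thm:LB-by-defining-OR} with $k = 2$ immediately yields that $\csp(\Gamma)$ has no generalized kernel of size $\Oh(n^{2-\varepsilon})$ unless \containment, completing the second bullet. The main obstacle I anticipate is step~(2): turning the abstract alternating-sum witness into an explicit cone-definition of $\kor{2}$. The difficulty is purely combinatorial — one must show that an unbalanced symmetric relation cannot have $S$ ``spread out'' so much that every pair of free coordinates, after fixing constants, yields something balanced; the crux is analyzing which weight sets $S \subseteq \{0,\ldots,k\}$ are balanced (intuitively, $S$ must be an arithmetic-progression-like / ``interval-closed under the relevant differences'' structure), and showing any $S$ failing this has a local three-out-of-four weight pattern that directly gives $\kor{2}$. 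I would likely first characterize exactly the balanced symmetric relations (those whose satisfying-weight set $S$ is, up to the shift needed for $\sum \alpha_i = 1$, closed under the alternating operation on weights — concretely $S = \{0,1,\ldots,k\}$, or a single weight, or $\emptyset$, or an arithmetic progression, depending on what the arithmetic works out to), and then contrapositively extract the OR. Everything downstream of that lemma is a routine invocation of previously established results.
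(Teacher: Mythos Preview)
Your high-level plan matches the paper's exactly: the first bullet is immediate from Theorem~\ref{thm:preserved_by_balanced}, and for the second bullet you correctly identify the three steps (locate an unbalanced symmetric $R \in \Gamma$; show $R$ cone-defines $\kor{2}$; invoke Theorem~\ref{thm:LB-by-defining-OR} with $k=2$). The only substantive work is your step~(2), and here your speculation diverges from what the paper actually does.

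The paper does \emph{not} characterize balanced symmetric relations, nor does it look for consecutive weights in~$S$. Instead it passes to weights exactly as you suggest (so $s_i := \weight(r_i) \in S$, $t := \weight(u) \notin S$, $s_1 - s_2 + \cdots + s_m = t$) and then proves, by an inductive shortening argument on~$m$, that any such odd-length alternating identity of integers in $\{0,\ldots,k\}$ can be collapsed to one of length three: there exist $a,b,c \in S$ and $d \in \{0,\ldots,k\}\setminus S$ with $a-b+c=d$. The induction step (Claim~\ref{claim:reduce-to-three}) picks indices $i,j$ odd and $\ell$ even such that $s_i - s_\ell + s_j$ still lies in $\{0,\ldots,k\}$; if this value is outside~$S$ we are done, otherwise we substitute it back and shorten the sum. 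From the length-3 pattern $a-b+c=d$, the cone-definition of $\kor{2}$ is written down explicitly (Lemma~\ref{lem:vector-to-int}) by a short case analysis on which of $a,b,c,d$ is smallest: e.g.\ when $b$ is the minimum, the tuple with $(a-b)$ copies of $\neg x_1$, $(c-b)$ copies of $\neg x_2$, $b$ ones and $(k-d)$ zeros does the job. So the ``local three-out-of-four weight pattern'' you anticipated is precisely $a-b+c=d$, not consecutive integers, and the route to it is a direct shortening of the alternating sum rather than a structural classification of~$S$. Your proposed classification approach might also work, but it is strictly more than what is needed and the paper avoids it.
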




To show this, we use the following lemma.

\begin{lemma}[$\bigstar$]\label{lem:vector-to-int}
Let $R$ be a $k$-ary symmetric relation with satisfying weights $S \subseteq \{0,1,\ldots,k\}$. Let $U:= \{0,1,\ldots,k\} \setminus S$. If there exist $a,b,c \in S$ and $d \in U$ such that $a-b+c=d$, then $R$ cone-defines $2$-\OR.
\end{lemma}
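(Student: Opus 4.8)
The plan is to build an explicit cone-definition of $2$-\OR from $R$ by fixing all but two coordinates of $R$ to suitable constants, and plugging variables (possibly negated) into the remaining two coordinates. Concretely, suppose $a,b,c \in S$, $d \in U$ with $a-b+c=d$. First I would observe that, since weights range over $\{0,\ldots,k\}$ and the relation $a-b+c=d$ holds, we have enough ``room'' in the $k$ coordinates to distribute these weights. The idea is to designate two coordinate positions, say $x_1$ and $x_2$, as the variable positions, set $y_1 \in \{x_1, \neg x_1\}$ and $y_2 \in \{x_2, \neg x_2\}$, and fill the remaining $k-2$ positions with a fixed number of $1$'s and $0$'s chosen so that the four assignments to $(x_1,x_2)$ land on weights $a,b,c,d$ in exactly the pattern matching $2$-\OR (three satisfying, one falsifying).

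The key computation is choosing the orientation of $y_1,y_2$ (whether negated) and the constant part. Since $2$-\OR is $\{(0,1),(1,0),(1,1)\}$, i.e. the unique falsifying assignment is $(0,0)$, I want the assignment $f(x_1)=f(x_2)=0$ to produce the out-of-$R$ weight and the other three assignments to produce in-$S$ weights. Write $b$ as the ``base'' weight: set the constant coordinates to contribute weight $w_0$, and pick the polarities of $y_1,y_2$ so that $(0,0)\mapsto$ weight $d$, $(1,0)\mapsto$ weight $a$, $(0,1)\mapsto$ weight $c$, $(1,1)\mapsto$ weight $b$ — this works precisely because flipping $x_1$ changes the weight by $\pm 1$, flipping $x_2$ changes it by $\pm 1$, and the relation $a - b + c = d$ is exactly the ``parallelogram'' identity $d = a + c - b$ saying these four weights form the corners of a unit square in weight-space (with $b$ and $d$ diagonal, $a$ and $c$ diagonal). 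Explicitly one takes $y_1 := \neg x_1$ if moving from $d$ to $a$ means decreasing weight, else $y_1 := x_1$, and similarly for $y_2$; then $w_0$ is forced and one checks $0 \le w_0 \le k-2$. The slight subtlety is that $a,c$ (and $b,d$) need not be distinct, and one must verify $w_0$ is a legal number of $1$'s among $k-2$ coordinates; this follows since each of $a,b,c,d \in \{0,\ldots,k\}$ and they differ from $w_0$ by at most one in each of two independent coordinates, so $w_0 \in \{0,\ldots,k-2\}$ after the appropriate case analysis on which of $\{a,c\}$ exceeds $b$.

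Having fixed $(y_1,\ldots,y_k)$ — the first $k-2$ of them constants summing to $w_0$ ones, and $y_{k-1},y_k$ the two (possibly negated) variables — the three bullet conditions of Definition~\ref{def:cone} are immediate: each $y_j$ is a constant or a (negated) variable; both $x_1$ and $x_2$ occur; and for every $f\colon\{x_1,x_2\}\to\{0,1\}$, the weight of $(\hat f(y_1),\ldots,\hat f(y_k))$ equals one of $a,b,c,d$ according to the case, so it lies in $U$ iff $f(x_1)=f(x_2)=0$, i.e. iff $(f(x_1),f(x_2))\notin 2\text{-}\OR$. Since $R$ is symmetric, membership in $R$ depends only on this weight, giving the equivalence $(f(x_1),f(x_2)) \in 2\text{-}\OR \iff (\hat f(y_1),\ldots,\hat f(y_k)) \in R$.

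The main obstacle I anticipate is purely bookkeeping: handling the case distinctions for the polarities of $y_{k-1},y_k$ and confirming in each case that the constant weight $w_0$ satisfies $0 \le w_0 \le k-2$, rather than any conceptual difficulty — the identity $a-b+c=d$ does all the real work, and the ``four corners of a unit square in weight space'' picture makes the construction essentially forced.
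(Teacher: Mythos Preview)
Your plan has a genuine gap. You propose to use exactly two variable positions in the tuple (one for $y_1 \in \{x_1,\neg x_1\}$ and one for $y_2 \in \{x_2,\neg x_2\}$), so that flipping $x_1$ or $x_2$ changes the weight by exactly $\pm 1$. You then describe the identity $d=a+c-b$ as saying that $a,b,c,d$ ``form the corners of a unit square in weight-space''. But the hypothesis only gives a parallelogram identity, not a \emph{unit} square: nothing forces $|a-d|=|c-d|=1$. For instance, with $k=5$, $S=\{0,1,4\}$, $a=4$, $b=1$, $c=0$, $d=3$ we have $a-b+c=d$ and $d\notin S$, yet $|c-d|=3$. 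With only one coordinate carrying $x_2$, no choice of polarity and constant block can make the assignment $(0,0)$ have weight $3$ while $(0,1)$ has weight $0$. Your bookkeeping check ``$0\le w_0\le k-2$'' cannot repair this, because the failure is that a single-bit flip cannot realize a weight change of $3$, not that the constants overflow.

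The fix, which is what the paper does, is to allow \emph{multiple} copies of each variable (or its negation) in the tuple. Concretely, one does a case split on which of $a,b,c,d$ is smallest; in the case $b\le a,c,d$, for example, one takes $(a-b)$ copies of $\neg x_1$, $(c-b)$ copies of $\neg x_2$, $b$ copies of $1$, and $(k-d)$ copies of $0$, for a total of $(a-b)+(c-b)+b+(k-d)=k$ positions (using $a-b+c=d$). The four assignments then land on weights $d,a,c,b$ respectively, exactly as needed. The other cases ($a$ smallest, $c$ smallest, $d$ smallest) are handled by analogous tuples with the constant and variable blocks resized. Once you allow repeated occurrences of a variable symbol in the cone-definition (which Definition~\ref{def:cone} permits), the rest of your argument goes through essentially unchanged.
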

\begin{proof}[Proof sketch]
We will demonstrate the result in the case that $b \leq a$, $b \leq c$, and $b \leq d$;
the other cases are similar. We use the following tuple to express $x_1\vee x_2$.
\[(\underbrace{\neg x_1,\ldots,\neg x_1}_{(a-b)\text{ copies}}, \underbrace{\neg x_2,\ldots,\neg x_2}_{(c-b)\text{ copies}},\underbrace{1,\ldots,1}_{b\text{ copies}},\underbrace{0,~\ldots~,0}_{(k-d)\text{ copies}}).\]
Let $f \colon \{x_1,x_2\}\rightarrow \{0,1\}$, then $(\neg f(x_1),\ldots,\neg f(x_1),\neg f(x_2),\ldots,\neg f(x_2),1,\ldots,1,0,\ldots,0)$ has weight $d \notin S$ when $f(x_1) = f(x_2) = 0$. It is easy to verify that in all other cases, the weight is one of $a, b, c \in S$ and hence the tuple belongs to~$R$. The other cases are similar.
\end{proof}

We now give the main lemma that is needed to prove Theorem \ref{thm:classification:symmetric}. It shows that if a relation is symmetric and not balanced, it must cone-define $2$-\OR.

\begin{lemma}\label{lem:symmetric:lb}
Let $R$ be a symmetric (Boolean) relation of arity $k$. If $R$ is not balanced, then $R$ cone-defines $2$-\OR.
\end{lemma}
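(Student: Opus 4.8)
The plan is to reduce this general statement to Lemma~\ref{lem:vector-to-int}, which already handles the case where a ``bad'' alternating sum of length exactly three (i.e.\ $a - b + c = d$ with $a,b,c \in S$ and $d \in U$) exists. By Proposition~\ref{prop:balanced-acts-as-alternating}, since $R$ is not balanced there is some odd $k' \geq 1$ and an alternating operation of arity $k'$ that does not preserve $R$; unpacking the definition of preservation, this means there are tuples $s_1, \ldots, s_{k'} \in R$ whose alternating sum $s_1 - s_2 + s_3 - \cdots + s_{k'}$ equals some $u \in \{0,1\}^k \setminus R$. The first step is to translate this into the language of weights: applying the weight function (which is linear) to the equation $s_1 - s_2 + \cdots + s_{k'} = u$, and writing $w_i := \weight(s_i)$ and $w_u := \weight(u)$, we get $w_1 - w_2 + w_3 - \cdots + w_{k'} = w_u$ with each $w_i \in S$ (because $R$ is symmetric and $s_i \in R$) and $w_u \in U$ (because $u \notin R$ and $R$ is symmetric). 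So from a non-balanced symmetric $R$ we extract an odd-length alternating sum of elements of $S$ that lands in $U$.

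The second step is to shorten this alternating sum down to length three. I would argue that if there is \emph{any} odd-length alternating sum $a_1 - a_2 + \cdots + a_{2t+1} = d$ with all $a_i \in S$ and $d \in U$, then there is one of length exactly $3$. One clean way: among all such witnessing sequences (over all lengths), pick one of minimum length $2t+1$; suppose for contradiction $t \geq 2$. Consider the partial sum $p := a_1 - a_2 + a_3$. If $p \in S$, then replacing the prefix $a_1 - a_2 + a_3$ by the single term $p$ yields a shorter witnessing sequence $p - a_4 + a_5 - \cdots + a_{2t+1} = d$ of length $2t-1$, contradiction. If $p \in U$, then $a_1 - a_2 + a_3 = p$ is itself a length-$3$ witness, and we are done immediately. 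Either way we reach a length-$3$ witness $a - b + c = d$ with $a,b,c \in S$, $d \in U$. (One must check the edge case $t = 0$, i.e.\ a length-$1$ ``sum'' $a_1 = d$ with $a_1 \in S$ and $d \in U$ — but $a_1 = d$ forces $a_1 \in S \cap U = \emptyset$, so this cannot occur, and the minimum length is at least $3$.) The final step is to invoke Lemma~\ref{lem:vector-to-int} on $a - b + c = d$ to conclude that $R$ cone-defines $2$-\OR.

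The main obstacle I anticipate is making the ``shortening'' argument fully rigorous and airtight, in particular being careful that the partial sum $p = a_1 - a_2 + a_3$ is a genuine integer in $\{0, 1, \ldots, k\}$ so that it makes sense to ask whether $p \in S$ or $p \in U$ — this needs the observation that $a_1 - a_2 + a_3$ is the weight of the $0/1$-vector $s_1 - s_2 + s_3$ (which lies in $\{0,1\}^k$ by the structure of alternating operations on the tuples realizing these weights), hence is automatically in $\{0,\ldots,k\}$; one should either carry the vectors along rather than just their weights, or note that by Lemma~\ref{lem:vector-to-int}'s proof the relevant quantities stay in range. A secondary, minor point is that Lemma~\ref{lem:vector-to-int} is stated with a specific role for $b$ (the subtracted term); one should make sure the extracted triple is fed in with the middle term in the subtracted position, which it naturally is. Given Lemma~\ref{lem:vector-to-int} and Proposition~\ref{prop:balanced-acts-as-alternating}, the rest is short.
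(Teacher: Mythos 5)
Your high-level route matches the paper's exactly: invoke Proposition~\ref{prop:balanced-acts-as-alternating}, pass to weights to get an odd-length alternating sum $s_1 - s_2 + \cdots + s_m = t$ with $s_i \in S$ and $t \in U$, shorten to a length-three witness $a - b + c = d$, and hand off to Lemma~\ref{lem:vector-to-int}. The gap is in the shortening step, and it sits precisely where you flagged your ``main obstacle'' --- but your proposed patch is based on a false premise. You claim $a_1 - a_2 + a_3$ lies in $\{0,\ldots,k\}$ because ``$s_1 - s_2 + s_3$ lies in $\{0,1\}^k$ by the structure of alternating operations.'' That is not true: the arity-$m$ alternating operation being defined at $(s_1,\ldots,s_m)$ (coordinatewise) does \emph{not} imply the arity-$3$ alternating operation is defined at $(s_1,s_2,s_3)$. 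A partial sum can leave $\{0,1\}$ in some coordinate and be pulled back in range by later terms. Already for $k=1$: take $s_1 = (1), s_2 = (0), s_3 = (1), s_4 = (1), s_5 = (0)$; the full alternating sum is $(1)$, but $s_1 - s_2 + s_3 = (2) \notin \{0,1\}$. So there is no reason the prefix triple $(a_1, a_2, a_3)$ should produce a value in $\{0,\ldots,k\}$, and without that, ``$p \in S$ or $p \in U$'' is not even a meaningful dichotomy.

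What the paper does instead is not restrict to the prefix: its Claim~\ref{claim:reduce-to-three} asserts that among \emph{all} choices of distinct indices $i,j,\ell \in [m]$ with $i,j$ odd and $\ell$ even, some triple satisfies $s_i - s_\ell + s_j \in \{0,\ldots,k\}$. Proving that claim takes a short but genuine case analysis: either one can find $i,j$ odd and $\ell$ even with $s_i \geq s_\ell \geq s_j$ (in which case $s_i - s_\ell + s_j$ is sandwiched between $s_j \geq 0$ and $s_i \leq k$), or else one of two dominance patterns holds between the odd-indexed and even-indexed weights, and in each of those the bound follows by comparing $s_i - s_\ell + s_j$ to the total alternating sum $t \in \{0,\ldots,k\}$. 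Once such a triple is found, the induction goes as you describe (if $s_i - s_\ell + s_j \in U$ you are done; if it is in $S$, collapse those three terms and recurse on the shorter sum). So to close your argument you should replace ``take the prefix'' by ``choose the triple guaranteed by Claim~\ref{claim:reduce-to-three}'' and supply that claim's proof; the rest of your plan is sound.
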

\begin{proof}
Let $f$ be a  balanced operation that does not preserve $R$.
Since $f$ has integer coefficients, it follows that there exist (not necessarily distinct) $r_1,\ldots,r_m \in R$, such that $r_1 - r_2 + r_3 - r_4 \cdots + r_m = u$ for some  $u \in \{0,1\}^k \setminus R$ and odd~$m \geq 3$. Thereby, $\weight(r_1) - \weight(r_2) + \weight(r_3)-\weight(r_4) \cdots + \weight(r_m) = \weight(u)$. Let $S$ be the set of satisfying weights for $R$ and let $U := \{0,\ldots,k\}\setminus S$. Define $s_i := \weight(r_i)$ for $i \in [m]$, and $t =\weight(u)$, such that $s_1 - s_2 + s_3 - s_4 \ldots + s_m = t$, and furthermore $s_i \in S$ for all $i$, and $t \in U$.  We show that there exist $a,b,c \in S$ and $d\in U$ such that $a - b + c = d$, such that the result follows from Lemma \ref{lem:vector-to-int}. We do this by induction on the length of the alternating sum.

If $m=3$, we have that $s_1 - s_2 + s_3 = t$ and define $a:= s_1$, $b:=s_2$, $c:=s_3$, and $d:=t$.

If $m > 3$, we will use the following claim.
\begin{claim}[$\bigstar$]\label{claim:reduce-to-three}
Let $s_1,\ldots,s_m \in S$ and $t \in U$ such that $s_1 - s_2 + s_3 - s_4\dots + s_m = t$.
There exist distinct {$i,j,\ell \in [m]$} with $i,j$ odd and $\ell$ even, such that $s_i - s_\ell + s_j \in \{0,\ldots,k\}$.
\end{claim}

Use Claim \ref{claim:reduce-to-three} to  find $i,j,\ell$ such that $s_i - s_\ell + s_j \in \{0,\ldots,k\}$. We consider two options. If $s_i - s_\ell + s_j\in U$, then define $d := s_i - s_\ell + s_j$, $a := s_i$, $b:=s_\ell$, and $c := s_j$ and we are done.  The other option is that $s_i - s_\ell + s_j=s\in S$.
Replacing $s_i - s_\ell + s_j$ by $s$ in  $s_1 - s_2 + s_3 - s_4 \cdots + s_m$ gives a shorter alternating sum with result $t$. We obtain $a,b,c$, and $d$ by the induction hypothesis.

Thereby, we have obtained $a,b,c\in S$, $d \in U$ such that $a-b+c=d$. It now follows from Lemma \ref{lem:vector-to-int} that $R$ cone-defines $2$-\OR.
\end{proof}
Using the lemma above, we can now prove Theorem \ref{thm:classification:symmetric}.
\begin{proof}[Proof of Theorem \ref{thm:classification:symmetric}]
If $\Gamma$ is balanced, it follows from Theorem \ref{thm:preserved_by_balanced} that $\csp(\Gamma)$ has a kernel with $\Oh(n)$ constraints that can be stored in $\Oh(n\log n)$ bits. Note that the assumption that $\Gamma$ is symmetric is not needed in this case.

If the symmetric constraint language $\Gamma$ is not balanced, then~$\Gamma$ contains a symmetric relation $R$ that is not balanced. It follows from Lemma \ref{lem:symmetric:lb} that $R$ cone-defines the $2$-\OR relation. Thereby, we obtain from Theorem \ref{thm:LB-by-defining-OR} that  $\csp(\Gamma)$ has no generalized kernel of size $\Oh(n^{2-\varepsilon})$ for any $\varepsilon > 0$, unless \containment.
\end{proof}

\section{Low-arity classification}
\label{sec:arity-3}


In this section, we will give a full classification of the sparsifiability for constraint languages that consist only of low-arity relations. The next results will show that in this case, if the constraint language is not balanced, it can cone-define the $2$-\OR relation.


\begin{observation}
\label{obs:Boolean-arity-one}
Each relation of arity $1$ is balanced.
\end{observation}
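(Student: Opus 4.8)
The statement to prove is Observation~\ref{obs:Boolean-arity-one}: every Boolean relation of arity~$1$ is balanced. There are only four such relations, so the cleanest approach is a short case analysis combined with the characterization of balanced relations in terms of preservation by alternating operations (Proposition~\ref{prop:balanced-acts-as-alternating}), or equivalently directly from the definition via alternating sums of satisfying tuples.

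The plan is as follows. A unary Boolean relation $R \subseteq \{0,1\}$ is one of $\emptyset$, $\{(0)\}$, $\{(1)\}$, or $\{(0),(1)\}$. I would handle the first and last cases trivially: if $R = \emptyset$ there are no tuples in $R$, so no alternating sum $s_1 - s_2 + \cdots + s_{2n+1}$ with all $s_i \in R$ can exist, hence $R$ is vacuously balanced; if $R = \{(0),(1)\} = \{0,1\}^1$ then $\{0,1\}^1 \setminus R = \emptyset$, so no alternating sum can land outside $R$, and again $R$ is balanced. For the two remaining cases $R = \{(0)\}$ and $R = \{(1)\}$, I would observe that any odd-length alternating sum $s_1 - s_2 + s_3 - \cdots - s_{2n} + s_{2n+1}$ of copies of a single value $b \in \{0,1\}$ equals $b$ itself (since there are $n+1$ plus-signs and $n$ minus-signs, the sum telescopes to $(n+1)b - nb = b$). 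Thus the result of the alternating sum is always $b \in R$, so it can never equal an element of $\{0,1\} \setminus R$; hence $R$ is balanced.

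Alternatively, and even more briefly, I would invoke Proposition~\ref{prop:balanced-acts-as-alternating}: $R$ is balanced iff it is preserved by every alternating operation of odd arity. Since every balanced operation is idempotent (by the Observation that each balanced operation is idempotent and self-dual), applying an alternating operation of arity $k$ to $k$ copies of a tuple $(b) \in R$ yields $(f(b,\ldots,b)) = (b) \in R$ by idempotence, and these are the only tuples one can feed in when $|R| \le 1$; when $|R| = 2$ the relation is all of $\{0,1\}^1$ and is trivially preserved by any operation. Either route is routine — there is essentially no obstacle here, the only thing to be slightly careful about is making the parity/telescoping argument explicit so that the reader sees why an \emph{odd}-length alternating sum of a constant is that constant. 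I expect the entire proof to be two or three sentences.

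\begin{proof}
A unary Boolean relation $R$ is one of $\emptyset$, $\{(0)\}$, $\{(1)\}$, or $\{(0),(1)\}$. If $R \in \{\emptyset, \{(0),(1)\}\}$ then either $R$ or $\{0,1\}^1 \setminus R$ is empty, so there is trivially no sequence witnessing non-balancedness. If $R = \{(b)\}$ for some $b \in \{0,1\}$, then every odd-length alternating sum of tuples from $R$ has the form $\sum_{i=1}^{2n+1}(-1)^{i+1} b = (n+1)b - nb = b$, which lies in $R$; hence no such sum yields an element of $\{0,1\}^1 \setminus R$. In all cases $R$ is balanced.
\end{proof}
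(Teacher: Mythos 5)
Your proof is correct, and since the paper treats this as a self-evident observation with no written proof, your short case analysis on the four unary relations (with the telescoping argument for the singleton cases, or equivalently the idempotence shortcut) is exactly the intended routine verification. Nothing is missing: the vacuous cases $R = \emptyset$ and $R = \{0,1\}$ are handled correctly, and for $R = \{(b)\}$ the observation that any coefficient vector summing to $1$ (in particular any alternating one) applied to copies of $b$ returns $b$ settles the matter.
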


\begin{theorem}[$\bigstar$]
\label{thm:Boolean-arity-two}
A relation of arity $2$ is balanced
if and only if it is
not cone-interdefinable with the $\kor{2}$ relation.
\end{theorem}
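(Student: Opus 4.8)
The plan is to prove Theorem~\ref{thm:Boolean-arity-two} by a direct case analysis over all binary Boolean relations, using the characterization of balancedness via alternating operations (Proposition~\ref{prop:balanced-acts-as-alternating}) in one direction and the witness structure of an unbalanced relation in the other. First I would establish the easy direction: if a binary relation $R$ is cone-interdefinable with $\kor{2}$, then $R$ is not balanced. Since $\kor{2}$ is not balanced (witnessed by $(0,1)-(1,1)+(1,0)=(0,0)\notin\kor{2}$, as noted in the introduction), it suffices to observe that balancedness is preserved under cone-definability: if $T$ is cone-definable from $U$ and $U$ is balanced, then $T$ is balanced, because an alternating sum of tuples of $T$ yielding a tuple outside $T$ would, after applying the cone-definition coordinatewise (constants are fixed by idempotence, negated coordinates behave correctly by self-duality of alternating operations), produce an alternating sum of tuples of $U$ yielding a tuple outside $U$. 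Hence cone-interdefinability with the unbalanced $\kor{2}$ forces $R$ to be unbalanced.

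For the converse, I would argue that every binary Boolean relation $R$ that is not balanced is cone-interdefinable with $\kor{2}$. The clean way is to enumerate: a binary Boolean relation is a subset of $\{0,1\}^2$, so there are $16$ of them, but up to cone-interdefinability (which allows swapping coordinates, negating coordinates, and substituting constants) there are only a handful of equivalence classes. I would check that the relations of size $0$, $1$, $3$, $4$, and the size-$2$ relations that are either ``projections'' (depending on one coordinate, like $\{(0,0),(0,1)\}$), the equality relation $\{(0,0),(1,1)\}$, or the disequality relation $\{(0,1),(1,0)\}$, are all balanced, typically because they can be written in the degree-$1$ form of Definition~\ref{def:balanced_op} (e.g. equality is ``$x_1 - x_2 + $ something'', disequality uses coefficients summing to $1$ with the domain constraint, a single satisfying tuple or the full relation is trivially balanced), and that the only remaining size-$2$ relation up to cone-equivalence is $\kor{2}$ itself (equivalently the NAND relation $\{(0,0),(0,1),(1,0)\}$ is size $3$; actually the genuinely ``OR-like'' class is the four relations of size $3$, each cone-interdefinable with $\kor{2}$ since a $3$-element subset of $\{0,1\}^2$ is $\{0,1\}^2$ minus one point, and negating coordinates moves that point to $(0,0)$). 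So the classification reduces to: a binary relation is unbalanced iff, up to negations and coordinate swap, it equals $\kor{2}$, and in every such case cone-interdefinability with $\kor{2}$ is immediate from Definition~\ref{def:cone} via a tuple built from $x_1,x_2$ and their negations.

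Concretely the steps in order would be: (1) prove the lemma that balancedness transfers along cone-definability; (2) deduce the ``only if'' direction of the theorem; (3) list representatives of binary Boolean relations up to cone-interdefinability, splitting by cardinality $|R|\in\{0,1,2,3,4\}$; (4) for $|R|\in\{0,1,4\}$ observe balancedness trivially (empty alternating-sum obstruction, or the whole cube, or a single tuple preserved by any idempotent operation); (5) for $|R|=2$, handle the three cone-classes (unary-dependent, equality, disequality) by exhibiting explicit coefficients $\alpha_1,\alpha_2$ as in Definition~\ref{def:balanced_op} witnessing that $R$ is the graph of a balanced operation, hence balanced; (6) for $|R|=3$, show each such relation cone-defines (indeed is cone-interdefinable with) $\kor{2}$ and exhibit the explicit $3$-term alternating witness showing it is not balanced; (7) assemble these into the biconditional.

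The main obstacle I expect is the bookkeeping in steps (4)--(6): making sure the cone-interdefinability classes are correctly identified (in particular that there genuinely are only three size-$2$ classes and one size-$3$ class, which hinges on whether the constant substitution in Definition~\ref{def:cone} can collapse relations unexpectedly), and verifying that each purportedly balanced relation really satisfies \emph{all} three bullets of Definition~\ref{def:balanced_op} simultaneously with a single integer coefficient vector summing to $1$ --- for instance the disequality relation $\{(0,1),(1,0)\}$ needs coefficients whose weighted sums land in $\{0,1\}$ exactly on those two tuples and outside $\{0,1\}$ on $(0,0)$ and $(1,1)$, which requires a small but careful choice (e.g. $\alpha_1=-1,\alpha_2=2$ gives values $2,-1,0,1$, so the domain is $\{(1,0),(1,1)\}$ --- wrong, so one must instead observe disequality is balanced because it is closed under alternating sums directly, or pick the coefficient vector $(\alpha_1,\alpha_2)$ more cleverly, or simply verify the no-obstruction condition by hand since there are only two tuples). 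Rather than forcing the degree-$1$ form everywhere, the cleanest route for the borderline size-$2$ cases is to directly check the absence of an odd alternating-sum obstruction, which for a two-element relation is a finite, small verification.
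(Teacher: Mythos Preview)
Your approach is correct and reaches the same conclusion as the paper, but the route differs. For the forward direction (cone-interdefinable with $\kor{2}$ implies not balanced) you and the paper argue essentially the same way, via the idempotence and self-duality of balanced operations transferring the $\kor{2}$ witness across the cone-definition. For the converse, however, the paper does not enumerate the $16$ binary relations. Instead it takes a minimum-arity balanced operation $f$ violating $R$, argues that the witnessing tuples $s^1,\ldots,s^k$ must be pairwise distinct (else collapse two coefficients and contradict minimality), rules out $k\in\{1,2\}$ by a short coefficient argument, and concludes $k=3$, hence $|R|=3$; a size-$3$ subset of $\{0,1\}^2$ is then immediately cone-interdefinable with $\kor{2}$ by negating coordinates to move the missing point to $(0,0)$. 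Your enumeration buys a completely elementary proof with no cleverness, at the cost of more bookkeeping; the paper's structural argument is shorter and isolates \emph{why} unbalanced binary relations must have exactly three tuples, which is the conceptual content.

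One caution: in step (5) you propose showing the size-$2$ relations are balanced by exhibiting them as ``the graph of a balanced operation'' via coefficients as in Definition~\ref{def:balanced_op}. That definition concerns balanced \emph{operations}, and a relation being such a graph does not by itself make it a balanced \emph{relation} (i.e.\ preserved by all balanced operations). Your fallback---directly checking that no odd alternating sum of tuples from a two-element $R$ lands in $\{0,1\}^2\setminus R$---is the right fix and is a one-line verification in each case (equality preserves ``both coordinates equal'', disequality preserves ``coordinates sum to $1$'', unary-dependent preserves ``first coordinate is $0$'').
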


\begin{theorem}[$\bigstar$]\label{thm:not-balanced-implies-or}
Suppose that $U \subseteq \{ 0, 1 \}^3$ is an arity $3$
Boolean relation
that is not balanced.
Then, the $\kor{2}$ relation is cone-definable from $U$.
\end{theorem}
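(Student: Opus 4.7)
My plan is to extract an explicit cone-definition of $\kor{2}$ from a non-balance witness of $U$, first reducing to a length-$3$ witness and then carrying out a coordinate-wise case analysis. Since $U$ is not balanced, Proposition~\ref{prop:balanced-acts-as-alternating} gives an odd $m \geq 3$ and tuples $r_1, \ldots, r_m \in U$ with $r_1 - r_2 + \cdots + r_m = u$ for some $u \in \{0,1\}^3 \setminus U$.

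\textbf{Step 1 (shortening to length $3$).} For $m \geq 5$, among the $\binom{(m+1)/2}{2} \cdot (m-1)/2$ sub-triples $(r_i, r_{j'}, r_k)$ with $i, k$ distinct positive-sign indices and $j'$ a negative-sign index, I claim some triple has $r_i - r_{j'} + r_k \in \{0,1\}^3$. At each of the three coordinates, a direct $0/1$-pattern count shows the number of triples whose coordinate-wise alternating sum leaves $\{0,1\}$ is a small fraction of the total; summed over the three coordinates, the union of such bad triples remains strictly smaller than the total, so some good triple exists. Replacing the chosen triple by its ternary alternating value $s'$ either yields a shorter witness of length $m-2$ (if $s' \in U$) or directly gives a length-$3$ witness (if $s' \notin U$). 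Iterating produces a length-$3$ witness $r_1 - r_2 + r_3 = u$ with $r_i \in U$ and $u \notin U$.

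\textbf{Step 2 (building the cone-definition).} Given such a length-$3$ witness I define a candidate cone-definition tuple $(y_1, y_2, y_3)$ by mapping the four Boolean assignments $(0,0), (0,1), (1,0), (1,1)$ to the tuples $u, r_1, r_3, r_2$ respectively. The constraint $r_{1,j} - r_{2,j} + r_{3,j} = u_j \in \{0,1\}$ at each coordinate restricts $(r_{1,j}, r_{2,j}, r_{3,j}, u_j)$ to exactly six patterns, and a short case-check shows that each is realised by a unique admissible symbol $y_j \in \{0, 1, x_1, \neg x_1, x_2, \neg x_2\}$: the two homogeneous patterns give constants; patterns with $r_2 = r_3$ give $\pm x_1$; and patterns with $r_1 = r_2$ give $\pm x_2$. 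Hence $(y_1, y_2, y_3)$ is syntactically a valid cone-definition tuple from $U$.

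\textbf{Step 3 (both variables appear).} The main obstacle is to ensure that the cone-definition genuinely involves both $x_1$ and $x_2$, as otherwise it would define a relation of arity less than two. This is handled by a short contradiction: if no $y_j$ depends on $x_2$, then the evaluations at the Boolean assignments $(0,0)$ and $(0,1)$ coincide, forcing $u = r_1$, which contradicts $u \notin U$ and $r_1 \in U$; if no $y_j$ depends on $x_1$, then $u = r_3$ by the analogous argument. Hence both variables appear, and the cone-defined binary relation contains exactly $(0,1), (1,0), (1,1)$ while excluding $(0,0)$, so it equals $\kor{2}$. The coordinate-wise counting in Step 1 is what I expect to require the most care; Steps 2 and 3 are explicit checks.
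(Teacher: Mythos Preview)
Your plan is sound and yields a correct proof. For Step~1 the count works cleanly: at a fixed coordinate with $P=(m+1)/2$ positive-sign indices carrying $p$ ones and $Q=P-1$ negative-sign indices carrying $q$ ones (so $p-q\in\{0,1\}$), the number of bad triples is $\binom{p}{2}(Q-q)+\binom{P-p}{2}q$. Writing $a=p$, $b=P-p$, this simplifies in the case $q=p$ to $\tfrac12 a(b-1)(P-1)$ and in the case $q=p-1$ to $\tfrac12(a-1)b(P-1)$; dividing by the total $\binom{P}{2}Q=\tfrac12 P(P-1)^2$ and using $a+(b-1)=(a-1)+b=P-1$ with AM--GM gives a per-coordinate bad fraction of at most $\tfrac{P-1}{4P}<\tfrac14$. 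Three coordinates then exclude fewer than $3/4$ of all triples, so a good triple exists. Steps~2 and~3 are correct as written.

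The paper takes a quite different route. It introduces \emph{min-no-goods} (minimal partial assignments $f\colon S\to\{0,1\}$ that extend to no tuple of $U$ but do extend to a balance-witness), normalises $f$ to the all-zero map, and splits on $|S|\in\{2,3\}$. For $|S|=2$ it invokes the arity-$2$ classification (Theorem~\ref{thm:Boolean-arity-two}) on $U\restr S$ and then analyses the possible ``realisations'' in the remaining coordinate, using a weight-modulo lemma (Lemma~\ref{lemma:witness-modulo}) to exclude the one problematic case; for $|S|=3$ the same modular lemma forces a weight-$2$ tuple into $U$, from which a cone-definition is read off. Your approach bypasses the min-no-good machinery, the appeal to the arity-$2$ theorem, and the modular lemma entirely, in favour of a direct length reduction plus an explicit six-line coordinate table; it is more elementary and self-contained. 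The paper's approach, in exchange, builds reusable structural tools and extracts more information about $U$ (for instance, which projections of $U$ already equal $\kor{2}$).
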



Combining the results in this section with the results in previous sections, allows us to give a  full classification of the sparsifiability of constraint languages that only contain relations of arity at most three. Observe that any $k$-ary relation $R$ such that $R \neq \emptyset$ and $\{0,1\}^k\setminus R \neq \emptyset$ cone-defines the $1$-\OR relation. Since we assume that $\Gamma$ is intractable in the next theorem, it follows that $k$ is always defined and $k\in \{1,2,3\}$.

\begin{theorem}
Let $\Gamma$ be an intractable Boolean constraint language such that each relation therein
has arity $\leq 3$. Let $k \in \mathbb{N}$ be the largest value for which $k$-\OR can be cone-defined from a relation in $\Gamma$. Then $\csp(\Gamma)$ has a kernel with $\Oh(n^k)$ constraints that can be encoded in $\Oh(n^k\log k)$ bits, but for any $\varepsilon > 0$ there is no kernel of size $\Oh(n^{k-\varepsilon})$, unless \containment.
%
\end{theorem}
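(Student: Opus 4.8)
The plan is to assemble this final theorem from the building blocks already developed, splitting into the upper-bound (kernelization) direction and the lower-bound direction. First I would fix notation: let $\Gamma$ be intractable with all relations of arity at most three, and let $k \in \{1,2,3\}$ be the largest value such that $\kor{k}$ is cone-definable from some relation in $\Gamma$; as noted in the excerpt, $k$ is well-defined since any relation that is neither empty nor full cone-defines $\kor{1}$, and intractability guarantees such a relation exists. The two directions to prove are: (i) $\csp(\Gamma)$ has a kernel with $\Oh(n^k)$ constraints storable in $\Oh(n^k\log n)$ bits; and (ii) for every $\varepsilon > 0$ there is no generalized kernel of size $\Oh(n^{k-\varepsilon})$ unless \containment.

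For the lower bound (ii), the argument is immediate: since $\kor{k}$ is cone-definable from a relation $R \in \Gamma$ and $\Gamma$ is intractable, Theorem~\ref{thm:LB-by-defining-OR} directly gives that $\csp(\Gamma)$ has no generalized kernel of size $\Oh(n^{k-\varepsilon})$ unless \containment. No further work is needed here.

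For the upper bound (i), I would split on the value of $k$. If $k = 3$: since some relation has arity $3$ and $\Gamma$ is intractable, the maximum arity is exactly $3$; applying the first branch of Theorem~\ref{thm:trivial-vs-nontrivial} is not quite enough (it only gives $\Oh(n^{k_{\max}-1}) = \Oh(n^2)$ when no maximum-arity relation has $2^{k_{\max}}-1$ satisfying tuples), so instead I argue relation-by-relation: each arity-$\leq 3$ relation $R \in \Gamma$ that is not equivalent to $\kor{3}$ up to negation has $|R| \neq 2^{3}-1$ among the arity-3 relations (by Lemma~\ref{lem:1-unsat-defines-OR}, $|R| = 2^k-1$ forces $R$ to cone-define $\kor{k}$), and by Lemma~\ref{lem:2-unsat-implies-kernel} its constraints reduce to $\Oh(n^{k-1}) \subseteq \Oh(n^3)$, while constraints over a relation that is cone-interdefinable with $\kor{3}$ number at most $\Oh(n^3)$ trivially (there are only $\Oh(n^3)$ distinct applications); summing over the constantly many relations gives $\Oh(n^3)$ constraints. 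If $k = 2$: every relation $R \in \Gamma$ has arity $\leq 3$ and does not cone-define $\kor{3}$; by Theorem~\ref{thm:Boolean-arity-two} and Theorem~\ref{thm:not-balanced-implies-or}, any relation of arity $\leq 3$ that is not balanced cone-defines $\kor{2}$ (arity-1 relations are balanced by Observation~\ref{obs:Boolean-arity-one}), so every relation that fails to cone-define $\kor{3}$ but is "worse than $\kor{2}$" — wait, more carefully: if $R$ is not balanced then it cone-defines $\kor{2}$, which is consistent with $k \geq 2$; the point for the upper bound is that a $\kor{3}$ can only be cone-defined by a relation that is itself "as hard as" $\kor{3}$, and $k = 2$ rules this out, so each relation either is balanced (hence reduces to $\Oh(n)$ constraints by Theorem~\ref{thm:preserved_by_balanced}) or is an arity-$\leq 3$ relation that is not $\kor{3}$-hard, for which I apply the arity-based kernelization: such a relation has at most $\Oh(n^3)$... no — I need to be sharper. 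The cleanest route for $k=2$ is: a relation of arity $3$ with $|R| = 2^3 - 1$ cone-defines $\kor{3}$ (Lemma~\ref{lem:1-unsat-defines-OR}), contradicting $k=2$, so every arity-$3$ relation in $\Gamma$ has $|R| \neq 2^3-1$ and reduces to $\Oh(n^2)$ constraints by Lemma~\ref{lem:2-unsat-implies-kernel}; every arity-$2$ relation trivially has $\Oh(n^2)$ applications; every arity-$1$ relation has $\Oh(n)$ applications. Summing: $\Oh(n^2)$. If $k = 1$: every relation $R \in \Gamma$ with arity $\geq 2$ cone-defines neither $\kor{2}$ nor $\kor{3}$; by Theorem~\ref{thm:Boolean-arity-two} and Theorem~\ref{thm:not-balanced-implies-or}, every relation of arity $2$ or $3$ in $\Gamma$ is therefore balanced, and arity-$1$ relations are balanced by Observation~\ref{obs:Boolean-arity-one}; hence $\Gamma$ is balanced and Theorem~\ref{thm:preserved_by_balanced} yields a kernel with $\Oh(n)$ constraints. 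In each case the constraints retained are a sub-multiset of the original, and storing $\Oh(n^k)$ constraints over a fixed finite $\Gamma$, each a relation symbol plus up to three variable indices in $[n]$, takes $\Oh(n^k \log n)$ bits, matching the claimed $\Oh(n^k \log k)$ up to the stated bound (with $k \leq 3$ a constant, the $\log k$ factor is absorbed and the dominant cost is $\log n$ per variable index).

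The main obstacle I anticipate is the bookkeeping in the $k=2$ and $k=3$ cases: one must be careful that "$\kor{k}$ cone-definable from a relation in $\Gamma$" with $k$ maximal does not merely control the maximum arity but genuinely controls, relation by relation, which of the three regimes (balanced / $|R| \neq 2^{\mathrm{arity}}-1$ / cone-defines a large OR) each relation falls into. The key technical linchpin making this go through cleanly is the combination of Lemma~\ref{lem:1-unsat-defines-OR} (a $k$-ary relation missing exactly one tuple cone-defines $\kor{k}$) with Theorems~\ref{thm:Boolean-arity-two} and~\ref{thm:not-balanced-implies-or} (a non-balanced relation of arity $\leq 3$ cone-defines $\kor{2}$), which together guarantee that if $\Gamma$'s "OR-cone-definition number" is $k$, then every relation of arity $> k$ in $\Gamma$ has at least two falsifying assignments (so Lemma~\ref{lem:2-unsat-implies-kernel} applies and gives $\Oh(n^{\mathrm{arity}-1}) \subseteq \Oh(n^k)$ when $\mathrm{arity} \leq k+1 = 3$... and when $k=1$ this instead forces balancedness), and every relation of arity $\leq k$ trivially contributes $\Oh(n^k)$ applications. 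After this observation the summation over the constantly many relations of $\Gamma$ is routine.
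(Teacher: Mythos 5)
Your proof is correct and takes essentially the same route as the paper: a case split on $k \in \{1,2,3\}$, reducing to Theorem~\ref{thm:preserved_by_balanced} via balancedness for $k=1$, to Lemma~\ref{lem:2-unsat-implies-kernel} (equivalently the first branch of Theorem~\ref{thm:trivial-vs-nontrivial}) after ruling out $|R| = 2^3-1$ via Lemma~\ref{lem:1-unsat-defines-OR} for $k=2$, and to trivial deduplication for $k=3$, with the lower bound in all cases being an immediate application of Theorem~\ref{thm:LB-by-defining-OR}. Your side observation that ``$\Oh(n^k\log k)$'' in the statement should read ``$\Oh(n^k\log n)$'' is also well-taken: the subroutines invoked produce $\Oh(n\log n)$ and $\Oh(n^2\log n)$ bits for $k=1,2$, and $\log 1 = 0$ would make the stated bound vacuous, so this is evidently a typo.
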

\begin{proof}
To show that there is a kernel with $\Oh(n^k)$ constraints, we do a case distinction on $k$.
\begin{itemize}
\item $(k=1)$ If $k=1$, there is no relation in $\Gamma$ that cone-defines the $2$-\OR relation. It follows from Observation \ref{obs:Boolean-arity-one} and Theorems \ref{thm:Boolean-arity-two} and \ref{thm:not-balanced-implies-or} that thereby, $\Gamma$ is balanced. It now follows from Theorem \ref{thm:preserved_by_balanced} that $\csp(\Gamma)$ has a kernel with $\Oh(n)$ constraints that can be stored in $\Oh(n\log n)$ bits.
\item $(k=2)$ If $k=2$, there is no relation $R \in \Gamma$ with $|R| = 2^3-1=7$, as otherwise by Lemma \ref{lem:1-unsat-defines-OR} such a relation $R$ would cone-define $3$-\OR which is a contradiction. Thereby, it follows from Theorem \ref{thm:trivial-vs-nontrivial} that $\csp(\Gamma)$ has a sparsification with $\Oh(n^{3-1}) = \Oh(n^2)$ constraints that can be encoded in $\Oh(n^2\log n)$ bits.
\item $(k=3)$ Given an instance $(\C,V)$, it is easy to obtain a kernel of with $\Oh(n^3)$ constraints by simply removing duplicate constraints. This kernel can be stored in $\Oh(n^3)$ bits, by storing for each relation $R \in \Gamma$ and for each tuple $(x_1,x_2,x_3) \in V^3$ whether $R(x_1,x_2,x_3) \in \C$. Since $|\Gamma|$ is constant and there are $\Oh(n^3)$ such tuples, this results in using $\Oh(n^3)$ bits.
\end{itemize}

It remains to prove the lower bound.
By definition, there exists $R \in \Gamma$ such that $R$ cone-defines the $k$-\OR relation. Thereby, the result  follows immediately from Theorem \ref{thm:LB-by-defining-OR}. Thus,  $\csp(\Gamma)$ has no kernel of size $\Oh(n^{k-\varepsilon})$ for any $\varepsilon > 0$, unless \containment.
\end{proof}

\section{Conclusion}
In this paper we analyzed the best-case and worst-case sparsifiability of $\csp(\Gamma)$ for intractable finite Boolean constraint languages~$\Gamma$. First of all, we characterized those Boolean CSPs for which a nontrivial sparsification is possible, based on the number of non-satisfying assignments. Then we presented our key structural contribution: the notion of balanced constraint languages. We have shown that $\csp(\Gamma)$ allows a sparsification with $\Oh(n)$ constraints whenever $\Gamma$ is balanced. The constructive proof of this statement can be transformed into an effective algorithm to find a series of low-degree polynomials to capture the constraints, which earlier had to be done by hand. By combining the resulting upper and lower bound framework,  we fully classified the symmetric constraint languages for which $\csp(\Gamma)$ allows a linear sparsification. Furthermore, we fully classified the sparsifiability of $\csp(\Gamma)$ when $\Gamma$ contains relations of arity at most three, based on the arity of the largest \OR that can be cone-defined from~$\Gamma$. It follows from results of Lagerkvist and Wahlstr\"{o}m~\cite{LagerkvistW17} that for constraint languages of arbitrary arity, the exponent of the best sparsification size does not always match the arity of the largest \OR cone-definable from~$\Gamma$. (This will be described in more detail in the upcoming journal version of this work.) Hence the type of characterization we presented is inherently limited to low-arity constraint languages. It may be possible to extend our characterization to languages of arity at most four, however.

The ultimate goal of this line of research is to fully classify the sparsifiability of $\csp(\Gamma)$, depending on $\Gamma$. In particular, we would like to classify those $\Gamma$ for which $\Oh(n)$ sparsifiability is possible. In this paper, we have shown that $\Gamma$ being balanced is a sufficient condition to obtain a linear sparsification; it is tempting to conjecture that this condition is also necessary.

We conclude with a brief discussion on the relation between our polynomial-based framework for linear compression and the framework of Lagerkvist and Wahlstr\"{o}m~\cite{LagerkvistW17}. They used a different method for sparsification, based on embedding a Boolean constraint language~$\Gamma$ into a constraint language~$\Gamma'$ defined over a larger domain~$D$, such that~$\Gamma'$ is preserved by a Maltsev operation. This latter condition ensures that~CSP$(\Gamma')$ is polynomial-time solvable, which allows CSP$(\Gamma)$ to be sparsified to~$\Oh(n)$ constraints when~$D$ is finite. It turns out that the Maltsev-based linear sparsification is more general than the polynomial-based linear sparsification presented here: all finite Boolean constraint languages~$\Gamma$ that are balanced, admit a Maltsev embedding over a finite domain (the direct sum of the rings~$\mathbb{Z}/q_u\mathbb{Z}$ over which the capturing polynomials are defined) and can therefore be linearly sparsified using the algorithm of Lagerkvist and Wahlstr\"{o}m. Despite the fact that our polynomial-based framework is not more general than the Maltsev-based approach, it has two distinct advantages. First of all, there is a straight-forward decision procedure to determine whether a constraint can be captured by degree-1 polynomials, which follows from the proof of Theorem~\ref{thm:preserved_by_balanced}. To the best of our knowledge, no decision procedure is known to determine whether a Boolean constraint language admits a Maltsev embedding over a finite domain. The second advantage of our method is that when the polynomial framework for linear compression does not apply, this is witnessed by a relation in~$\Gamma$ that is violated by a balanced operation. As we have shown, in several scenarios this violation can be used to construct a sparsification lower bound to give provably optimal bounds.

It would be interesting to determine whether the Maltsev-based framework for sparsification is strictly more general than the polynomial-based framework. We are not aware of any concrete Boolean constraint language~$\Gamma$ for which CSP$(\Gamma)$ admits a Maltsev embedding over a finite domain, yet is not balanced.

\bibliography{refs}

\clearpage
\appendix
\section{Omitted proofs}
\label{app:missing-proofs}

\subsection{Proofs omitted from Section~\ref{sec:preliminaries}}

\begin{proof}[{Proof of Proposition \ref{prop:balanced-acts-as-alternating}}]
It suffices to show that if a relation $T$ is not balanced, then there exists an alternating operation
that does not preserve $T$.  Let $f$ be a $k$-ary balanced operation that does not preserve $T$.
Then there exist tuples $t^1, \ldots, t^k$ in $T$ such that $\alpha_1 t^1 + \cdots + \alpha_k t^k$
is not in $T$, where the sum of the $\alpha_i$ is equal to $1$ (and where we may assume that no $\alpha_i$ is equal to $0$).  For each positive $\alpha_i$,
replace $\alpha_i t^i$ in the sum with $t^i + \cdots + t^i$ ($\alpha_i$ times);
likewise, for each negative $\alpha_i$, replace $\alpha_i t^i$ in the sum with
$-t^i - \cdots -t^i$ ($-\alpha_i$ times).  Each tuple then has coefficient
$+1$ or $-1$ in the sum; since the sum of coefficients is $+1$,
by permuting the sum's terms, the coefficients can be made to alternate
between $+1$ and $-1$.
\end{proof}

For the proof of Proposition \ref{prop:lpt-from-cone-definability}, we will need  the following additional theorem.

\begin{theorem}
\label{thm:adding-constants}
(Follows from \cite{BulatovJeavonsKrokhin05-finitealgebras}.)
Let $\Gamma$ be a constraint language over a finite set $D$
such that each unary operation $u \colon D \to D$ that preserves $\Gamma$
is a bijection.
Then, there exists a linear-parameter transformation from $\csp(\Gamma^*)$ to $\csp(\Gamma)$.
\end{theorem}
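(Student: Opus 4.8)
The plan is to use the classical ``indicator gadget'' reduction that underlies~\cite{BulatovJeavonsKrokhin05-finitealgebras}, while noting that this reduction adds only a constant number of variables and hence is a linear-parameter transformation. Write $D = \{a_1,\ldots,a_m\}$; since $\Gamma$ is fixed, $m = |D| = \Oh(1)$. The hypothesis states precisely that $\Gamma$ is a \emph{core}: every unary operation preserving $\Gamma$ is a bijection. Because $D$ is finite, the set of bijective unary polymorphisms of $\Gamma$ is closed under composition and under inversion (the inverse of $u$ is a suitable power of $u$), so it forms a group $\mathrm{Aut}(\Gamma)$ of automorphisms of $\Gamma$ that contains the identity.

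The central object is the \emph{indicator instance} $I$ on a fresh set of variables $Z = \{z_a \mid a \in D\}$: for every $R \in \Gamma$ of arity $r$ and every tuple $(t_1,\ldots,t_r) \in R$, include the constraint $R(z_{t_1},\ldots,z_{t_r})$. A short argument shows that an assignment $\psi \colon Z \to D$ satisfies $I$ if and only if the map $a \mapsto \psi(z_a)$ is a unary polymorphism of $\Gamma$, which by the core hypothesis is then an automorphism $\sigma \in \mathrm{Aut}(\Gamma)$. In particular $I$ is satisfiable (take $\sigma$ to be the identity), and in any solution of $I$ distinct variables of $Z$ receive distinct values.

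I would then define the transformation as follows. Given an instance $(\C, V)$ of $\csp(\Gamma^*)$, first inspect the constant constraints: if some variable is forced to two distinct values, output a fixed no-instance. Otherwise every constrained variable $x$ carries a unique constraint $x = a_x$; build $(\C', V')$ where $V' = V \cup Z$ (renamed so the parts are disjoint) and $\C'$ consists of all $\Gamma$-constraints of $\C$ with each such $x$ replaced throughout by $z_{a_x}$, together with all constraints of $I$. This runs in polynomial time, and $|V'| = |V| + m = |V| + \Oh(1)$, so the parameter grows only linearly. For correctness: a solution $\phi$ of $(\C,V)$ yields a solution of $(\C',V')$ by additionally setting $\psi(z_a) := a$ for all $a$ --- this satisfies $I$ via the identity and is consistent with the substitutions because $\phi(x) = a_x$; conversely, from a solution $\psi$ of $(\C',V')$ we extract the automorphism $\sigma \in \mathrm{Aut}(\Gamma)$ determined by $\psi$ on $Z$ and set $\phi(x) := \sigma^{-1}(\psi(x))$ on $V$. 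Then $\phi$ satisfies every $\Gamma$-constraint of $\C$ because $\sigma^{-1}$ preserves each $R \in \Gamma$, and for each constrained $x$ we get $\phi(x) = \sigma^{-1}(\psi(z_{a_x})) = \sigma^{-1}(\sigma(a_x)) = a_x$, so the constant constraints hold as well.

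I expect this to be essentially routine, with no deep obstacle; the two places that need care are the handling of conflicting constant constraints (the fixed no-instance is correct here precisely because two variables of $Z$ can never collide in a solution of $I$), and making explicit that the solutions of $I$ are exactly the automorphisms of $\Gamma$ --- the step where the core hypothesis is used, together with finiteness of $D$ to form $\sigma^{-1}$. The linear parameter bound is then immediate, as both $|D|$ and the size of $I$ depend only on the fixed language $\Gamma$.
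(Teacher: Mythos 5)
Correct, and essentially the same approach: the paper simply cites the final reduction in the proof of Theorem 4.7 of Bulatov, Jeavons, and Krokhin, and you have reconstructed that reduction explicitly --- it is exactly the indicator-instance gadget on fresh variables $\{z_a \mid a \in D\}$, whose solutions are precisely the (bijective) unary polymorphisms of $\Gamma$ under the core hypothesis. The only cosmetic difference is that you substitute $z_{a_x}$ for each constant-constrained $x$ directly, whereas the paper first produces a $\csp(\Gamma \cup \{=_D\})$ instance and then eliminates the equality constraints by variable identification; both yield the same $|V| + |D|$ variable bound.
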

Note that in particular, an \emph{intractable} Boolean constraint language can only be preserved by unary operations that are bijections. Hence for intractable Boolean~$\Gamma$, there is a linear-parameter transformation from~$\csp(\Gamma^*)$ to~$\csp(\Gamma)$.
\begin{proof}[Proof of Theorem~\ref{thm:adding-constants}]
The desired transformation is the final polynomial-time reduction given in the proof of Theorem 4.7
of~\cite{BulatovJeavonsKrokhin05-finitealgebras}.
This reduction translates an instance of $\csp(\Gamma^*)$ with $n$ variables
to an instance of $\csp(\Gamma \cup \{ =_D \})$ with $n + |D|$ variables;
here, $=_D$ denotes the equality relation on domain $D$.
Each constraint of the form $=_D(v,v')$ may be removed (while preserving satisfiability) by
taking one of the variables $v,v'$, and replacing each instance of that variable with the other.
The resulting instance of $\csp(\Gamma)$ has $\leq n + |D|$ variables.
\end{proof}

\begin{definition}
A relation $T \subseteq D^k$ is \emph{pp-definable} (short for \emph{primitive positive definable}) from a constraint language $\Gamma$ over $D$ if
there exists an instance $(\C,V)$ of $\csp(\Gamma)$
and there exist pairwise distinct variables $x_1, \ldots, x_k \in V$
such that, for each map $f \colon \{ x_1, \ldots, x_k \} \to \{ 0, 1 \}$,
it holds that $f$ can be extended to a satisfying assignment of the instance
if and only if $(f(x_1), \ldots, f(x_k)) \in T$.
\end{definition}

The following is a known fact;
for an exposition, we refer the reader to
Theorems 3.13 and 5.1 of \cite{Chen09-Rendezvous}.

\begin{proposition}\label{prop:gammastar-is-br}
If $\Gamma$ is an intractable Boolean constraint language, then
every Boolean relation is pp-definable from $\Gamma^*$.
\end{proposition}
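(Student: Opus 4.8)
The plan is to derive the statement from the standard Galois correspondence between primitive positive definability and polymorphism clones over a finite domain. Recall that for a constraint language $\Delta$ over a finite set $D$, a relation $T$ over $D$ is pp-definable from $\Delta$ if and only if $T$ is preserved by every (total) operation that preserves $\Delta$; equivalently, writing $\mathrm{Pol}(\Delta)$ for the set of polymorphisms of $\Delta$ and $\mathrm{Inv}(F)$ for the set of relations preserved by an operation set $F$, the set of relations pp-definable from $\Delta$ equals $\mathrm{Inv}(\mathrm{Pol}(\Delta))$. This is the content of the relevant direction of Theorems~3.13 and~5.1 of \cite{Chen09-Rendezvous}. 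Applying this with $\Delta = \Gamma^*$, it suffices to prove that $\mathrm{Pol}(\Gamma^*)$ contains only the projection operations, since the only operation clone whose invariant relations include \emph{every} Boolean relation is the clone of projections.

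Next I would observe that augmenting $\Gamma$ with the singleton relations $\{(0)\}$ and $\{(1)\}$ forces every polymorphism to be idempotent: if $f$ is a $k$-ary operation preserving $\Gamma^*$, then preserving $\{(0)\}$ gives $f(0,\dots,0)=0$ and preserving $\{(1)\}$ gives $f(1,\dots,1)=1$. Hence $\mathrm{Pol}(\Gamma^*)$ is contained in the clone of idempotent Boolean operations, and it is in fact exactly the set of idempotent members of $\mathrm{Pol}(\Gamma)$.

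The final ingredient is the structure of the lattice of idempotent Boolean operation clones (Post's lattice, restricted to idempotent clones): every idempotent Boolean clone that is not the clone of projections contains at least one of the four operations $\wedge$, $\vee$, $\minor$, $\major$, since these generate precisely the minimal nontrivial idempotent Boolean clones. I would invoke this classical fact directly. Combining it with the previous paragraph: if $\mathrm{Pol}(\Gamma^*)$ were not the clone of projections, it would contain one of $\wedge$, $\vee$, $\minor$, $\major$, and therefore $\Gamma$ itself would be preserved by one of these operations, contradicting the hypothesis that $\Gamma$ is intractable (intractability means $\Gamma$ is preserved by none of $u_0, u_1, \wedge, \vee, \minor, \major$). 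Thus $\mathrm{Pol}(\Gamma^*)$ is the projection clone, and by the Galois correspondence every Boolean relation is pp-definable from $\Gamma^*$.

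The only real content here is importing two well-known structural results — the pp-definability/polymorphism Galois connection for finite domains and the description of the minimal idempotent Boolean clones — so the main obstacle is simply citing these correctly; the combinatorial argument that remains (idempotency from the constant relations, and the contradiction with intractability) is immediate. An alternative, more self-contained route would be to exhibit explicit pp-definitions: use intractability to pp-define, from $\Gamma^*$, relations witnessing the failure of each of $\wedge$, $\vee$, $\minor$, $\major$, and then bootstrap these to define an arbitrary relation; but this essentially re-proves the Galois/Post machinery by hand, and I would not pursue it unless a fully elementary argument were required.
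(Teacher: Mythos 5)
The paper gives no proof of this proposition; it simply cites Theorems~3.13 and~5.1 of~\cite{Chen09-Rendezvous} as a reference. Your reconstruction is correct and is precisely the standard argument that those theorems package: the Galois connection between pp-definability and polymorphism clones over a finite domain (Geiger / Bodnarchuk--Kaluzhnin--Kotov--Romov), together with the description of the minimal clones in Post's lattice. The three steps you take are all sound: (i) the singleton unary relations in $\Gamma^*$ force every polymorphism of $\Gamma^*$ to be idempotent; (ii) the atoms of Post's lattice above the projection clone are the clones generated by $u_0$, $u_1$, $\neg$, $\wedge$, $\vee$, $\minor$, $\major$, of which only the last four are idempotent, so an idempotent clone strictly above the projections must contain one of $\wedge$, $\vee$, $\minor$, $\major$; and (iii) intractability of $\Gamma$ (hence of $\Gamma^*$, since $\mathrm{Pol}(\Gamma^*)\subseteq\mathrm{Pol}(\Gamma)$) rules those out, leaving the projection clone, whose invariant relations are all Boolean relations. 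One small caveat to make explicit in a careful write-up: the Inv--Pol correspondence is usually stated for pp-formulas over a signature that includes equality, so one should observe that the paper's formulation of pp-definability via $\csp(\Gamma)$ instances (which permits repeated variables in constraint scopes, and hence variable identification) does capture equality; this is routine but should not be left implicit.
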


\begin{proof}[Proof of Proposition \ref{prop:lpt-from-cone-definability}]
It suffices to give a linear-parameter transformation from
$\csp(\Gamma^* \cup \Delta)$ to $\csp(\Gamma^*)$, by Theorem~\ref{thm:adding-constants}.
Let $(\C,V)$ be an instance of $\csp(\Gamma^* \cup \Delta)$,
and let $n$ denote $|V|$.
We generate an instance $(\C',V')$ of $\csp(\Gamma^*)$ as follows.
\begin{itemize}

\item For each variable $v \in V$, introduce a primed variable $v'$.
By Proposition~\ref{prop:gammastar-is-br}, the relation $\neq$ (that is, the relation
$\{ (0,1),(1,0) \}$) is pp-definable from $\Gamma^*$.
Fix such a pp-definition, and let $d$ be the number of variables in the definition.
For each $v \in V$,
include in $\C'$ all constraints in the pp-definition of $\neq$,
but where the variables are renamed so that
$v$ and $v'$ are the distinguished variables, and the other variables are fresh.

The number of variables used so far in $\C'$ is $nd$.

\item For each $b \in \{ 0, 1 \}$, introduce a variable $z_b$,
and include the constraint $\{ (b) \}(z_b)$ in $\C'$.

\item For each constraint $T(v_1, \ldots, v_k)$ in $\C$ such that $T \in \Gamma^*$,
include the constraint in $\C'$.

\item For each constraint $T(v_1, \ldots, v_k)$ in $\C$ such that $T \in \Delta \setminus \Gamma^*$,
we use the assumption that $T$ is cone-definable from a relation in $\Gamma$ to include a constraint
in $\C'$ that has the same effect as $T(v_1, \ldots, v_k)$.
In particular, assume that $T$ is cone-definable from $U \in \Gamma$
via the tuple $(y_1, \ldots, y_{\ell})$,
and that $U$ has arity $\ell$.
Include in $\C'$ the constraint $U(w_1, \ldots, w_\ell)$,
where, for each $i \in [\ell]$, the entry $w_i$ is defined as follows:
\begin{equation*}
w_i = \begin{cases}
v_j \text{ if $y_i = x_j$,}\\
v'_j \text{ if $y_i = \neg x_j$,}\\
z_0 \text{ if $y_i = 0$, and}\\
z_1 \text{ if $y_i = 1$.}
\end{cases}
\end{equation*}

\end{itemize}
The set $V'$ of variables used in $\C'$ is the union of
$V \cup \{ v' ~|~ v \in V \} \cup \{ z_0,z_1 \}$ with the other variables used in the copies
of the pp-definition of $\neq$.
We have $|V'| = nd + 2$.
It is straightforward to verify that an assignment $f \colon V \to \{ 0, 1 \}$
satisfies $\C$ if and only if there exists an assignment $f' \colon V' \to \{ 0, 1 \}$ of $f$
that satisfies $\C'$.
\end{proof}

\subsection{Proofs omitted from Section \ref{sec:trivial-vs-nontrivial}}
\label{app:trivial-vs-nontrivial}
We start by proving the main Theorem of this section, using the other lemmas in the section.
\begin{proof}[Proof of Theorem \ref{thm:trivial-vs-nontrivial}]
Suppose that for all $R \in \Gamma$, it holds that $|R| \neq 2^k-1$. We give the following kernelization procedure.  Suppose we are given an instance of $\csp(\Gamma)$, with set of constraints $\C$. We show how to define $\C' \subseteq \C$. For each constraint $R(x_1,\ldots,x_\ell) \in \C$ where $R$ is a relation of arity $\ell < k$, add one such constraint to $\C'$ (thus removing duplicate constraints). Note that this adds at most $\Oh(n^\ell)$ constraints for each $\ell$-ary relation $R \in \Gamma$.

For a $k$-ary relation $R\in\Gamma$, let $\C_R$ contain all constraints of the form $R(x_1,\ldots,x_k)$. For all $k$-ary relations $R$ with $|R| < 2^k-1$, apply Lemma \ref{lem:2-unsat-implies-kernel} to obtain $\C'_R \subseteq \C_R$ such that $|\C'_R| = \Oh(n^{k-1})$ and any Boolean assignment satisfying $\C'_R$ also satisfies $\C_R$. Add $\C'_R$ to $\C'$. This concludes the definition of $\C'$. Note that the procedure removes constraints of the form $R(x_1,\ldots,x_k)$ with $|R|=2^k$, as these are always satisfied. It is easy to verify that $|\C'| \leq |\Gamma|\cdot \Oh(n^{k-1}) = \Oh(n^{k-1})$. Since each constraint can be stored in $\Oh(\log n)$ bits, this gives a kernel of bitsize $\Oh(n^{k-1}\log n)$.

Suppose that there exists $R \in \Gamma$ with $|R| = 2^k-1$. It follows from Lemma \ref{lem:1-unsat-defines-OR} that $R$ cone-defines $k$-\OR. Since $\Gamma$ is intractable, it now follows from Theorem \ref{thm:LB-by-defining-OR} that $\csp(\Gamma)$ has no generalized kernel of size $\Oh(n^{k-\varepsilon})$, unless \containment.
\end{proof}

To prove Theorem \ref{thm:only-need-polynomials}, we  use the following two theorems, that were proven by a subset of the current authors \cite{JansenP18}. We recall the required terminology. Let $E$ be a ring. Define \rootCSP over $E$ as the problem whose input consists of a set $L$ of polynomial equalities over $E$ of degree at most $d$, over a set of variables $V$. Each equality is of the form $p(x_1,\ldots,x_k) = 0$ (over $E$). The question is whether there exists a Boolean assignment to the variables in $V$ that satisfies all equalities in $L$.

\begin{theorem}[{\cite[Theorem 16]{JansenP18}}]\label{thm:subset_kernel_modm}
There is a polynomial-time algorithm that, given an instance~$(L,V)$ of \rootCSP over $\mathbb{Z}/m\mathbb{Z}$ for some fixed integer $m \geq 2$ with~$r$ distinct prime divisors, outputs an equivalent instance~$(L',V)$ of \rootCSP over $\mathbb{Z}/m\mathbb{Z}$ with at most $r\cdot(n^d+1)$ constraints such that~$L' \subseteq L$.
\end{theorem}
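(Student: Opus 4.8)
The plan is to reduce the number of equations by linear algebra on the coefficient vectors of the polynomials, treating one prime-power factor of $m$ at a time and recombining via the Chinese Remainder Theorem. First I would multilinearize: since only Boolean assignments are relevant, repeatedly substituting $x_i^2 \mapsto x_i$ converts every polynomial into a multilinear one of degree at most $d$ with the same values on $\{0,1\}^V$. Let $\mathcal{M}$ be the set of multilinear monomials of degree at most $d$ over the $n = |V|$ variables, so $N := |\mathcal{M}| = \sum_{i=0}^{d}\binom{n}{i} \le n^d+1$. Identify each multilinear polynomial $p$ with its coefficient vector $c_p \in (\mathbb{Z}/m\mathbb{Z})^{\mathcal{M}}$, and each Boolean assignment $a$ with the vector $\phi(a) \in (\mathbb{Z}/m\mathbb{Z})^{\mathcal{M}}$ recording the value of every monomial, so that $p(a) = \langle c_p, \phi(a)\rangle$. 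The governing observation is: if $L' \subseteq L$ and $c_p$ lies in the $(\mathbb{Z}/m\mathbb{Z})$-module generated by the coefficient vectors of $L'$, then every assignment satisfying all equations of $L'$ also satisfies $p = 0$. Hence it suffices to select $L' \subseteq L$ of the claimed size whose coefficient vectors generate the same module as those of $L$.

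Next I would handle the prime-power case $m = p^e$, writing $R := \mathbb{Z}/p^e\mathbb{Z}$, a local ring with maximal ideal $(p)$. The algorithm is to start from $L' = L$ and repeatedly discard an equation whose coefficient vector lies in the $R$-span of the remaining ones, until none is redundant; submodule membership in $R^{\mathcal{M}}$ is polynomial-time decidable, e.g.\ by lifting to $\mathbb{Z}$ and using Hermite or Smith normal forms. The result is an inclusion-minimal generating subset. To bound its size, let $M_0 \subseteq R^{\mathcal{M}}$ be the module generated by all the $c_p$; its preimage in $\mathbb{Z}^{\mathcal{M}}$ contains $p^e\mathbb{Z}^{\mathcal{M}}$, hence is free abelian of rank $N$, so $M_0$ is generated by $N$ elements and $\dim_{\mathbb{F}_p}(M_0/pM_0) \le N$. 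By Nakayama's lemma over the local ring $R$, every inclusion-minimal generating set of $M_0$ has cardinality exactly $\dim_{\mathbb{F}_p}(M_0/pM_0)$, so $|L'| \le N \le n^d+1$, and by the governing observation $L'$ is equivalent to $L$.

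For general $m = \prod_{i=1}^{r}p_i^{e_i}$, the equation $p \equiv 0 \pmod m$ is equivalent to the conjunction of $p \equiv 0 \pmod{p_i^{e_i}}$ over $i \in [r]$. I would run the prime-power procedure, for each $i$ separately, on the system obtained by reducing every polynomial of $L$ modulo $p_i^{e_i}$; this selects a set $J_i$ of at most $N$ original equations such that satisfying $\{p \equiv 0 \pmod{p_i^{e_i}} : p \in J_i\}$ forces $p \equiv 0 \pmod{p_i^{e_i}}$ for every polynomial occurring in $L$. Setting $L' := \bigcup_{i\in[r]} J_i$ (viewed as equations over $\mathbb{Z}/m\mathbb{Z}$) yields $L' \subseteq L$ with $|L'| \le r(n^d+1)$, and any assignment satisfying $L'$ modulo $m$ satisfies each mod-$p_i^{e_i}$ subsystem, hence all of $L$ modulo every $p_i^{e_i}$, hence modulo $m$. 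Since $m$ is fixed, $r$ and the $e_i$ are constants and every step runs in polynomial time.

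The hard part is the prime-power case: because $\mathbb{Z}/p^e\mathbb{Z}$ is not a field, the naive idea of keeping coefficient vectors that remain linearly independent after reduction modulo $p$ fails, since $c_p \equiv \sum_i \lambda_i c_{q_i} \pmod p$ does not imply $c_p$ is an $R$-combination of the $c_{q_i}$. Routing the size bound through Nakayama's lemma (so that all inclusion-minimal generating sets have size $\dim_{\mathbb{F}_p}(M_0/pM_0) \le N$) is what simultaneously delivers the bound and keeps $L'$ an honest subset of $L$; the remaining bookkeeping is routine linear algebra over $\mathbb{Z}/p^e\mathbb{Z}$.
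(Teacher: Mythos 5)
This theorem is not proved in the present paper at all; it is quoted verbatim from reference~[JansenP18, Theorem~16], so there is no in-paper proof to compare against. I will therefore evaluate your argument on its own terms.

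Your proof is correct. The three ingredients are exactly what one needs: multilinearization so that the coefficient vectors live in $(\mathbb{Z}/m\mathbb{Z})^{\mathcal M}$ with $|\mathcal M|\le n^d+1$; the ``governing observation'' that if $c_p$ lies in the $(\mathbb{Z}/m\mathbb{Z})$-module generated by the retained coefficient vectors then the retained equations imply $p\equiv 0$ on every Boolean point; and a size bound on an inclusion-minimal generating subset. The delicate point, as you rightly emphasize, is the prime-power case: over $\mathbb{Z}/p^e\mathbb{Z}$ one cannot simply pass to $\mathbb{F}_p$ and take a basis, because $\mathbb{F}_p$-dependence does not imply $\mathbb{Z}/p^e\mathbb{Z}$-dependence. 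Routing the bound through Nakayama's lemma is a clean way to get it: since $\mathbb{Z}/p^e\mathbb{Z}$ is a commutative local ring with maximal ideal $(p)$, every inclusion-minimal generating set of the submodule $M_0\subseteq (\mathbb{Z}/p^e\mathbb{Z})^{\mathcal M}$ has size exactly $\dim_{\mathbb{F}_p}(M_0/pM_0)$, and your lifting argument to $\mathbb{Z}^{\mathcal M}$ shows this dimension is at most $N=|\mathcal M|$. (An alternative, more hands-on route is to observe that $\mathbb{Z}/p^e\mathbb{Z}$ is a principal ideal ring so every submodule of a free rank-$N$ module is generated by at most $N$ elements; the source [JansenP18] argues in that more elementary style rather than invoking Nakayama explicitly.) The CRT step and the polynomial-time claim (membership testing by lifting and Smith/Hermite normal form, $m$ fixed) are both fine. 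One small bookkeeping note: the subset $L'$ produced by the greedy elimination depends on the order in which equations are examined, but any inclusion-minimal generating subset has the same size over a local ring, so this does not affect the bound.
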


Note that if $m$ is a prime power, $m$ has only one distinct prime divisor and thereby $r=1$ in the above theorem statement.

We say a field $F$ is \emph{efficient} if the field operations and Gaussian elimination can be done in polynomial time in the size of a reasonable input encoding.

\begin{theorem}[{\cite[Theorem 5]{JansenP18}}]\label{thm:subset_kernel_Q}
There is a polynomial-time algorithm that, given an instance $(L,V)$ of \rootCSP over an efficient field $F$, outputs an equivalent instance~$(L',V)$ with at most $n^d + 1$ constraints such that $L' \subseteq L$.
\end{theorem}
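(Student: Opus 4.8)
The plan is to observe that, over Boolean assignments, each polynomial equality of $L$ imposes a \emph{linear} condition on the vector of monomial evaluations, and then to keep only a maximal linearly independent set of these conditions. First I would exploit that we only care about Boolean assignments: using the identities $x_i^2 = x_i$, rewrite every polynomial occurring in $L$ as an equivalent \emph{multilinear} polynomial of degree at most $d$, which changes its value at no point of $\{0,1\}^V$. Let $\mathcal{M}$ be the set of multilinear monomials over the $n := |V|$ variables of degree at most $d$, the constant monomial $1$ included, and let $N := |\mathcal{M}| = \sum_{i=0}^{d}\binom{n}{i}$; a routine count (e.g.\ by injecting the nonempty subsets of $[n]$ of size at most $d$ into $[n]^d$) shows $N \le n^d + 1$.

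Next, for each equality $(p_j = 0) \in L$ I would let $v_j \in F^{N}$ be the coefficient vector of the multilinearization of $p_j$ in the basis $\mathcal{M}$, and for each Boolean assignment $a \colon V \to \{0,1\}$ I would let $e(a) \in F^{N}$ be the vector of values of the monomials of $\mathcal{M}$ at $a$; then $p_j(a) = \langle v_j, e(a)\rangle$ for all $a \in \{0,1\}^V$ and all $j$. I would then run Gaussian elimination over $F$ on the family $\{v_j\}$ to select a maximal linearly independent subfamily, and let $L' \subseteq L$ consist of the corresponding equalities. Since these vectors all lie in $F^N$, we get $|L'| \le N \le n^d + 1$; and as $F$ is an efficient field, each $v_j$ has at most $n^d+1$ coordinates, and there are at most $|L|$ of them (with $|L|$ bounded by the input size), this step runs in polynomial time.

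It remains to verify that $(L',V)$ and $(L,V)$ have the same Boolean solutions. One direction is immediate because $L' \subseteq L$. For the other, suppose $a \in \{0,1\}^V$ satisfies every equality of $L'$, so $\langle v_j, e(a)\rangle = 0$ for every $v_j$ in the chosen basis; since every $v_i$ with $(p_i = 0) \in L$ is an $F$-linear combination of these basis vectors, $\langle v_i, e(a)\rangle = 0$ and hence $p_i(a) = 0$, so $a$ satisfies $L$.

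The conceptual core --- and the only step demanding care --- is the passage from the combinatorial statement ``$p_i = 0$ is forced on every Boolean assignment satisfying $L'$'' to the purely linear-algebraic statement ``$v_i$ lies in the $F$-span of $\{v_j : (p_j = 0)\in L'\}$''. The multilinearization in the first step is precisely what makes the single vector $v_i$ encode all Boolean evaluations of $p_i$ at once; without it, even the cardinality bound $n^d+1$ would fail. Everything after that is a textbook basis extraction, the one technicality being that the Gaussian elimination run in polynomial time, which is exactly what the hypothesis of an \emph{efficient} field provides (for instance, via fraction-free elimination when $F = \mathbb{Q}$).
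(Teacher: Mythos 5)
Your proof is correct, and it is worth noting up front that the paper you were given does not actually prove this statement: it imports it verbatim as Theorem~5 of the cited work of Jansen and Pieterse, so there is no in-paper proof to compare against. That said, your argument is precisely the one used in that reference (and in this line of work generally, going back to Lov\'asz-style redundancy removal): multilinearize using $x_i^2 = x_i$ so that each constraint becomes a linear functional on the $N \le n^d + 1$ dimensional space spanned by the multilinear monomials of degree at most $d$, extract a basis of the coefficient vectors by Gaussian elimination over the efficient field $F$, and observe that any Boolean assignment annihilating a spanning set of linear forms annihilates their whole span. Your injection of nonempty subsets of size at most $d$ into $[n]^d$ (sort and pad with the maximum) cleanly justifies $\sum_{i=1}^d \binom{n}{i} \le n^d$, and the passage from ``$p_i$ vanishes on all common Boolean zeros of $L'$'' to ``$v_i \in \mathrm{span}_F\{v_j : j \in L'\}$'' is exactly the point where multilinearization is load-bearing, as you correctly emphasize. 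No gaps.
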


Observe that the above theorem statement in particular applies to instances of \rootCSP over $\mathbb{Q}$, since $\mathbb{Q}$ is an efficient field.

\begin{proof}[Proof of Theorem \ref{thm:only-need-polynomials}]
Let $\C$ be a set of constraints over $R$ and let $V$ be the set of variables used. We will create $| \{0,1\}^k\setminus R|$ instances of \rootCSP with variable set $V$. For each $u \in R \setminus \{0,1\}^k$, we create an instance $(L_u,V)$ of \rootCSP over $E_u$, as follows. Choose a ring~$E_u \in \{\mathbb{Q}\}	\cup \{\mathbb{Z}/q_u\mathbb{Z} \mid q_u \text{ is a prime power}\}$ and a polynomial $p_u$ over $E_u$  such that \eqref{eq:sat-R} and \eqref{eq:unsat-u} are satisfied for $u$. For each constraint $(x_1,\ldots,x_k) \in \C$, add the equality $p_u(x_1,\ldots,x_k)= 0$ to the set $L_u$; note that these are equations over the ring $E_u$. Let $L:= \bigcup_{u\notin R} L_u$ be the union of all created sets of equalities. From this construction, we obtain the following claim.

\begin{claim}\label{claim:poly-implies-constraints}
Any Boolean assignment $f$ that satisfies all equalities in $L$, satisfies all constraints in $\C$.
\end{claim}
\begin{claimproof}
Let $f$ be a Boolean assignment that satisfies all equalities in $L$. Suppose $f$ does not satisfy all equalities in $\C$, thus there exists $(x_1,\ldots,x_k) \in \C$, such that $(f(x_1),\ldots,f(x_k)) \notin R$. Let $u := (f(x_1),\ldots,f(x_k))$. Since $u \notin R$, the equation $p_u(x_1,\ldots,x_k) = 0$ was added to $L_u\subseteq L$. However, it follows from \eqref{eq:unsat-u} that $p_u(f(x_1),\ldots,f(x_k))\neq 0$, which contradicts the assumption that $f$ satisfies all equalities in $L$.
%
\end{claimproof}
For each instance $(L_u,V)$ of \rootCSP over $E_u$ with $E_u \neq \mathbb{Q}$, apply Theorem \ref{thm:subset_kernel_modm} to obtain an equivalent instance $(L_u',V)$ with $L_u'\subseteq L_u$ and $|L_u'| =\Oh(n^{d})$. Similarly, for each instance $(L_u,V)$ of \rootCSP over $E_u$ with $E_u = \mathbb{Q}$, apply Theorem \ref{thm:subset_kernel_Q} and obtain an equivalent instance $(L_u',V)$ with $L_u'\subseteq L_u$ and $|L_u'| =\Oh(n^{d})$. Let $L' := \bigcup L_u'$.
By this definition, any Boolean assignment satisfies all equalities in $L$, if and only if it satisfies all equalities in $L'$. Construct $\C'$ as follows. For any $(x_1,\ldots,x_k) \in \C$, add $(x_1,\ldots,x_k)$ to $\C'$ if there exists $u \in \{0,1\}^k\setminus R$ such that $p_u(x_1,\ldots,x_k) = 0 \in L'$. Hereby, $\C' \subseteq \C$. The following two claims show the correctness of this sparsification procedure.
\begin{claim}\label{claim:kernel-correct}
Any Boolean assignment $f$ satisfies all constraints in $\C'$, if and only if it satisfies all constraints in \C.
\end{claim}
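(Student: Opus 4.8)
The plan is to prove the two implications of the biconditional separately, with the forward direction being essentially immediate and the reverse direction resting on the setup of the polynomial framework already in place.

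For the direction ``satisfies $\C$ implies satisfies $\C'$'', I would simply note that $\C' \subseteq \C$ by construction, so any Boolean assignment satisfying every constraint of $\C$ a fortiori satisfies every constraint of $\C'$; nothing more is needed here.

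For the converse, let $f$ be a Boolean assignment satisfying all constraints in $\C'$. The crucial step is to argue that $f$ then satisfies every equality in $L'$. Take an arbitrary equality of $L'$; since $L' = \bigcup_u L_u'$ with $L_u' \subseteq L_u$, it has the form $p_u(x_1,\ldots,x_k) = 0$ for some $u \in \{0,1\}^k \setminus R$, where $(x_1,\ldots,x_k)$ is a constraint occurring in $\C$ (because each $L_u$ only contains equalities arising from constraints of $\C$). By the definition of $\C'$, the mere presence of $p_u(x_1,\ldots,x_k) = 0$ in $L'$ places $(x_1,\ldots,x_k)$ into $\C'$; hence $f$ satisfies this constraint, i.e.\ $(f(x_1),\ldots,f(x_k)) \in R$. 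Condition~\eqref{eq:sat-R}, the defining property of the capturing polynomial $p_u$, then gives $p_u(f(x_1),\ldots,f(x_k)) = 0$ over $E_u$, so $f$ satisfies the chosen equality.

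Having shown $f$ satisfies all of $L'$, I would invoke the equivalence between $L$ and $L'$ established just above the claim (a consequence of Theorems~\ref{thm:subset_kernel_modm} and~\ref{thm:subset_kernel_Q}) to conclude that $f$ satisfies every equality in $L$, and then apply Claim~\ref{claim:poly-implies-constraints} to deduce that $f$ satisfies all constraints in $\C$. I do not anticipate a genuine obstacle: the only point requiring care is the bookkeeping in the reverse direction, namely verifying that each equality surviving into $L'$ can be traced back to a constraint that was inserted into $\C'$ --- but this is exactly how $\C'$ was defined, so the argument goes through cleanly.
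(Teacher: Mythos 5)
Your proof is correct and follows essentially the same route as the paper's: the easy direction from $\C' \subseteq \C$, then using the definition of $\C'$ to show a satisfying assignment of $\C'$ satisfies all equalities of $L'$ via condition~\eqref{eq:sat-R}, then passing through the $L \leftrightarrow L'$ equivalence and Claim~\ref{claim:poly-implies-constraints}.
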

\begin{claimproof}
Since $\C'\subseteq \C$, it follows immediately that any Boolean assignment satisfying the constraints in $\C$ also satisfies all constraints in $\C'$. It remains to prove the opposite direction.

Let $f$ be a Boolean assignment satisfying all constraints in $\C'$. We show that $f$ satisfies all equalities in $L'$. Let $p_u(x_1,\ldots,x_k) = 0 \in L'$. Thereby, $(x_1,\ldots,x_k) \in \C'$ and since $f$ is a satisfying assignment, $(f(x_1),\ldots,f(x_k)) \in R$.  It follows from property \eqref{eq:sat-R} that $p_u(f(x_1),\ldots,f(x_k)) = 0 $ as desired.

Since $f$ satisfies all equalities in $L'$, it satisfies all equalities in $L$ by the choice of $L'$. It follows from Claim \ref{claim:poly-implies-constraints} that thereby $f$ satisfies all constraints in $\C$.
\end{claimproof}

\begin{claim}\label{claim:kernel-size}
$|\C'| = \Oh(n^d)$.
\end{claim}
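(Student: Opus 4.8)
The plan is to leverage that $R$ is a \emph{fixed} relation, so that both its arity~$k$ and the number $|\{0,1\}^k \setminus R| \le 2^k$ of falsifying assignments are constants; the bound $|\C'| = \Oh(n^d)$ will then follow simply by summing the $\Oh(n^d)$ bounds provided by Theorems~\ref{thm:subset_kernel_modm} and~\ref{thm:subset_kernel_Q} over the constantly many choices of~$u$.

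Concretely, I would first recall that for each $u \in \{0,1\}^k \setminus R$ the list $L_u$ was built by adding exactly one equality $p_u(x_1,\ldots,x_k)=0$ for every constraint $(x_1,\ldots,x_k) \in \C$, so there is a natural surjection $\phi_u \colon \C \to L_u$; it is convenient to retain, during the construction, the bookkeeping that lets us regard the members of $L_u$ as indexed by the constraints that produced them. Applying Theorem~\ref{thm:subset_kernel_modm} when $E_u \neq \mathbb{Q}$ and Theorem~\ref{thm:subset_kernel_Q} when $E_u = \mathbb{Q}$ yields $L_u' \subseteq L_u$ with $|L_u'| = \Oh(n^d)$. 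Writing $\C'_u := \{(x_1,\ldots,x_k) \in \C \mid (p_u(x_1,\ldots,x_k)=0) \in L_u'\}$, the definition of $\C'$ gives $\C' \subseteq \bigcup_{u \notin R} \C'_u$, and hence $|\C'| \le \sum_{u \notin R} |\C'_u|$.

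It then remains to bound each $|\C'_u|$ by $\Oh(n^d)$ and to note that the sum has at most $2^k = \Oh(1)$ terms. This last bound is the only point that needs a small amount of care, and I expect it to be the main (indeed, essentially the only) obstacle: a priori $\phi_u$ need not be injective---two distinct constraints can give syntactically identical equalities if $p_u$ is symmetric in some of its arguments or ignores some of them---so $|\C'_u|$ could exceed $|L_u'|$ if one is careless. I would dispose of this either by maintaining the constraint-indexed bookkeeping of $L_u$ mentioned above, so that $L_u' \subseteq L_u$ corresponds to precisely $|L_u'| = \Oh(n^d)$ constraints, or by observing that every fibre of $\phi_u$ has size at most $k! = \Oh(1)$ once $p_u$ is taken to depend on all $k$ coordinates (the degenerate case $R = \emptyset$ being handled trivially by keeping a single constraint). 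In either case $|\C'_u| = \Oh(n^d)$, so $|\C'| \le 2^k \cdot \Oh(n^d) = \Oh(n^d)$, completing the proof of the claim and, together with Claims~\ref{claim:poly-implies-constraints}--\ref{claim:kernel-correct}, of Theorem~\ref{thm:only-need-polynomials}.
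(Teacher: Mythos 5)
Your argument is essentially the paper's: the paper's proof is one line, bounding $|\C'| \leq |L'| = \sum_{u \notin R}|L_u'| \leq 2^k \cdot \Oh(n^d) = \Oh(n^d)$ using that~$k$ is a constant, exactly as you do. The subtlety you flag---that $\phi_u \colon \C \to L_u$ need not be injective, so $|\C'_u|$ could a priori exceed $|L_u'|$---is real and is silently elided by the paper's assertion that $|\C'| \leq |L'|$; your first remedy (treat $L_u$ as a constraint-indexed list so that $L_u' \subseteq L_u$ corresponds to exactly $|L_u'|$ constraints) is the right way to close it. Your second, alternative remedy is not sound as stated, however: $p_u$ can fail to depend on all $k$ coordinates in many cases beyond $R = \emptyset$ (e.g.\ a ternary $R$ captured by the degree-$1$ polynomial $p_u(x_1,x_2,x_3) = x_1$ has $\phi_u$-fibres of size $\Theta(n^2)$), and one cannot in general force $p_u$ to use every coordinate without raising its degree. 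So the $k!$ bound on fibre sizes only holds under an assumption you haven't justified, whereas the bookkeeping route goes through unconditionally and should be the one you keep.
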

\begin{claimproof}
By the construction of $\C'$, it follows that $|\C'| \leq |L'|$. We know $|L'| = \sum_{u \notin R} |L_u'|\leq \sum_{u \notin R} \Oh(n^d) \leq  2^k \Oh(n^d) = \Oh(n^d)$, as $k$ is considered constant.
\end{claimproof}
Claims \ref{claim:kernel-correct} and \ref{claim:kernel-size} complete the proof of Theorem~\ref{thm:only-need-polynomials}.
\end{proof}

Next, we present the full construction of degree-$(k-1)$ polynomials that capture relations~$R \subseteq \{0,1\}^k$ for which~$|R| < 2^k - 1$.

\begin{proof}[Proof of Lemma \ref{lem:2-unsat-implies-kernel}]
We will prove this by showing that for every $u \in \{0,1\}^k \setminus R$, there exists a $k$-ary polynomial $p_u$ over $\mathbb{Q}$ of degree at most $k-1$ satisfying
\eqref{eq:sat-R} and \eqref{eq:unsat-u}, such that the result follows  from Theorem~\ref{thm:only-need-polynomials}.

We will prove the existence of such a polynomial by induction on $k$. For $k=1$, the lemma statement implies that $R  = \emptyset$. Thereby, for any $u \notin R$, we simply choose $p_u(x_1) := 1$. This polynomial satisfies the requirements, and has degree $0$.

Let $k > 1$ and let $u=(u_1,\ldots,u_k) \in \{0,1\}^k \setminus R$. Since $|R| < 2^k-1$, we can choose $w = (w_1,\ldots,w_k)$ such that $w \in \{0,1\}^k \setminus R$ and $w \neq {u}$. Choose such $w$ arbitrarily, we now do a case distinction.

\textbf{(There exists no $i \in [k]$ for which $u_i = w_i$)} This implies $u_i = \neg w_i$ for all $i$. One may note that for $u = (0,\ldots,0)$ and $w = (1,\ldots,1)$ this situation corresponds  to \textsc{monotone $k$-nae-sat}. We show that there exists a polynomial $p_u$ such that $p_u(u_1,\ldots,u_k) \neq 0$,  and $p_u(x_1,\ldots,x_k) = 0$ for all $(x_1,\ldots,x_k) \in R$. Hereby $p_u$ satisfies conditions \eqref{eq:sat-R} and \eqref{eq:unsat-u} for  $u$. For $i \in [k]$, define $r_i(x) := (1-x)$ if $u_i = 1$ and $r_i(x) := x$ if $u_i = 0$. It follows immediately from this definition that $r_i(u_i) = 0$ and~$r_i(w_i) = 1$ for all $i \in [k]$. Define
\[p_u(x_1,\ldots,x_k) := \prod_{i=1}^{k-1} \left(i - \sum_{j=1}^k r_j(x_j)\right). \]
By this definition, $p_u$ has degree $k-1$. It remains to verify that $p_u$ has the desired properties. First of all, since $\sum_{j=1}^k r_j(u_j) = 0$ by definition, it follows that
\[p_u(u_1,\ldots,u_k) = \prod_{i=1}^{k-1} i \neq 0,\]
as desired. Since~$r_i(w_i) = 1$ for all~$i$, we obtain $p_u(w_1,\ldots,w_k) = \prod_{i=1}^{k-1} (i-k) \neq 0$, which is allowed since $w \notin R$.
It is easy to verify that in all other cases, $\sum_{j=1}^k r_j(x_j) \in \{1,2,\ldots,k-1\}$ and thereby one of the terms of the product is zero, implying $p_u(x_1,\ldots,x_k) = 0$.

\textbf{(There exists $i \in [k]$, such that $u_i = w_i$)} Let ${u'}$ and ${w'}$ be defined as the results of removing coordinate $i$ from ${u}$ and ${w}$ respectively. Note that ${u'} \neq {w'}$. Define
\[R' := \{(x_1,\ldots,x_{i-1},x_{i+1},\ldots,x_k) \mid (x_1,\ldots,x_{i-1},u_i,x_{i+1}, \ldots, x_k)\in R\}.\]
By this definition, ${u'},{w'} \notin R'$ and thereby $R'$ is a $(k-1)$-ary relation with $|R'| < 2^{k-1}-1$. By the induction hypothesis, there exists a polynomial $p_{u'}$ of degree at most $k-1$, such that $p_{u'}(u_1',\ldots,u_{k-1}') \neq 0$ and $p_{u'}(x_1',\ldots,x_{k-1}') = 0$ for all ${x'} \in R'$. Now define
\[p_u(x_1,\ldots,x_k) := (1-x_i-u_i)\cdot p_{u'}(x_1,\ldots,x_{i-1},x_{i+1},\ldots,x_k).\]
We show that $p_u$ has the desired properties. By definition, $p_u$ has the degree of $p_{u'}$ plus one. Since $p_{u'}$ has degree $k-2$ by the induction hypothesis, it follows that $p_u$ has degree $k-1$. Let $(x_1,\ldots,x_k) \in R$. We do a case distinction on the value taken by $x_i$.
\begin{itemize}
\item $x_i \neq u_i$. In this case, $(1-x_i-u_i) = 0$, and thereby $p_u(x_1,\ldots,x_k) = 0$, thus satisfying condition \eqref{eq:sat-R}.
\item $x_i = u_i$. Since ${x} = (x_1,\ldots,x_{i-1},u_i,x_{i+1}, \ldots,x_k) \in R$, it follows that $(x_1,\ldots,x_{i-1}, x_{i+1},\allowbreak\ldots,x_k) \in R'$. By definition of $p_{u'}$, it follows that $p_{u'}(x_1,\ldots,x_{i-1},x_{i+1},\ldots,x_k) = 0$ and thus $p_u(x_1,\ldots,x_k) = 0$, showing \eqref{eq:sat-R}.
\end{itemize}
It remains to show that $p_u(u_1,\ldots,u_k) \neq 0$. This follows from $(1-u_i-u_i) \in \{-1,1\}$, and $p_{u'}(u_1,\ldots,u_{i-1},u_{i+1},\ldots,u_k) \neq 0$, showing that \eqref{eq:unsat-u} holds.

 Since we have shown for all $u \in \{0,1\}^k \setminus R$ that there exists a polynomial $p_u$ over $\mathbb{Q}$ satisfying \eqref{eq:sat-R} and \eqref{eq:unsat-u}, the proof of Lemma~\ref{lem:2-unsat-implies-kernel} now follows from Theorem \ref{thm:only-need-polynomials}.
\end{proof}

The following theorem presents several lower bounds, which combine various existing results from the literature.

\begin{proof}[Proof of Theorem \ref{thm:LB-by-defining-OR}]
We do a case distinction on $k$.

$(k=1)$ Suppose that there exists~$\varepsilon > 0$ such that $\csp(\Gamma)$ has a (generalized) kernel of size~$\Oh(n^{1-\varepsilon})$. Using this hypothetical generalized kernel, one could obtain a polynomial-time algorithm that takes as input a series of instances~$(\mathcal{C}_1, V_1), \ldots, (\mathcal{C}_t, V_t)$ of $\csp(\Gamma)$, and outputs in polynomial time an instance~$x^*$ of some fixed decision problem~$L$ such that:
\begin{itemize}
	\item $x^* \in L$ if and only if \emph{all}~$(\mathcal{C}_i, V_i)$ are \yes-instances of $\csp(\Gamma)$, and
	\item $x^*$ has bitsize~$\Oh(N^{1-\varepsilon})$, where~$N := \sum_{i=1}^t |V_i|$.
\end{itemize}
To obtain such an \textsc{and}-compression algorithm from a hypothetical generalized kernel of $\csp(\Gamma)$ into a decision problem~$L$, it suffices to do the following:
\begin{enumerate}
	\item On input a series of instances~$(\mathcal{C}_1, V_1), \ldots, (\mathcal{C}_t, V_t)$ of~$\csp(\Gamma)$, form a new instance~$(\mathcal{C}^* := \bigcup _{i=1}^t \mathcal{C}_i, V^* := \bigcup _{i=1}^t V_i)$ of~$\csp(\Gamma)$. Hence we take the disjoint union of the sets of variables and the sets of constraints, and it follows that the new instance has answer \yes if and only if all the inputs~$(\mathcal{C}_i, V_i)$ have answer \yes.
	\item Run the hypothetical generalized kernel on~$(\mathcal{C}^*, V^*)$, which has~$|V^*| = N$ variables and is therefore reduced to an equivalent instance~$x^*$ of~$L$ with bitsize~$\Oh(N^{1-\varepsilon})$.
\end{enumerate}
If we apply this \textsc{and}-compression scheme to a sequence of~$t_1(m) := m^\alpha$ instances of~$m$ bits each (which therefore have at most~$m$ variables each), the resulting output has~$\Oh(|V^*|^{1- \varepsilon}) = \Oh((m \cdot m^\alpha)^{1-\varepsilon}) = \Oh(m^{(1+\alpha)(1-\varepsilon)})$ bits.
By picking~$\alpha$ large enough that it satisfies~$(1+\alpha)(1-\varepsilon) \leq \alpha$, we therefore compress a sequence of~$t_1(m)$ instances of bitsize~$m$ into one instance expressing the logical AND, of size at most~$t_2(m) \leq \Oh(m^{(1+\alpha)(1-\varepsilon)}) \leq C \cdot t_1(m)$ for some suitable constant~$C$. Drucker~\cite[Theorem 5.4]{Drucker15} has shown that an error-free deterministic \textsc{and}-compression algorithm with these parameters for an NP-complete problem into a fixed decision problem~$L$, implies \containment. Hence the lower bound for~$k=1$ follows since~$\csp(\Gamma)$ is NP-complete.

($k \geq 2)$ For~$k \geq 2$, we prove the lower bound using a linear-parameter transformation (recall Definition \ref{def:lpt}). Let $\Delta$ be the set of $k$-ary relations given by $\Delta := \{\{0,1\}^k\setminus \{u\} \mid u \in \{0,1\}^k\}$. In particular, note that $\Delta$ contains the \kor{k} relation. Since $R$ cone-defines \kor{k}, it is easy to see that by variable negations, $R$ cone-defines all relations in $\Delta$. Thereby, it follows from Proposition \ref{prop:lpt-from-cone-definability} that there is a linear-parameter transformation from $\csp(\Gamma^*\cup \Delta)$ to $\csp(\Gamma)$. Thus, to prove the lower bound for $\csp(\Gamma)$, it suffices to prove the desired lower bound for $\csp(\Gamma^*\cup \Delta)$.

$(k=2)$ If $k=2$, we do a linear-parameter transformation from \textsc{Vertex Cover} to $\csp(\Gamma^*\cup \Delta)$. Since it is known that \textsc{Vertex Cover} parameterized by the number of vertices $n$ has no generalized kernel of size $\Oh(n^{2-\varepsilon})$ for any $\varepsilon > 0$, unless \containment~\cite{DellM14}, the result will follow.

Suppose we are given a graph $G = (V,E)$ on~$n$ vertices and integer $k\leq n$, forming an instance of the \textsc{Vertex Cover} problem. The question is whether there is a set~$S$ of~$k$ vertices, such that each edge has at least one endpoint in~$S$. We create an equivalent instance $(\mathcal{C},V')$ of $\csp(\Gamma^*)$ as follows. We introduce a new variable $x_{v}$ for each $v \in V$. For each edge $\{u,v\} \in E$, we add the constraint $\kor{2}(x_u,x_v)$ to $\C$.

At this point, any vertex cover in $G$ corresponds to a satisfying assignment, and vice versa. It remains to ensure that the size of the vertex cover is bounded by $k$. Let $H_{n,k}$ be the $n$-ary relation given by $H_{n,k} = \{(x_1,\ldots,x_n) \mid x_i \in \{0,1\} \text{ for all } i\in[n] \text{ and } \sum_{i\in[n]}x_i = k\}$. By Proposition \ref{prop:gammastar-is-br}, we obtain that $\Gamma^*$ pp-defines all Boolean relations. It follows from \cite[Lemma 17]{LagerkvistW17} that $\Gamma^*$ pp-defines $H_{n,k}$ using $\Oh(n+k)$ constraints and $\Oh(n+k)$ existentially quantified variables. We add the constraints from this pp-definition to \C, and add the existentially quantification variables to~$V'$. This concludes the construction of \C. It is easy to see that \C has a satisfying assignment if and only if $G$ has a vertex cover of size $k$. Furthermore, we used $\Oh(n+k) \in \Oh(n)$ variables and thereby this is a linear-parameter transformation from \textsc{Vertex Cover} to $\csp(\Gamma^*\cup \Delta)$.

$(k\geq3)$
In this case there is a trivial linear-parameter transformation from $\csp(\Delta)$ to $\csp(\Gamma^* \cup \Delta)$. It is easy to verify that $\csp(\Delta)$ is equivalent to $k$-CNF-SAT. The result now follows from the fact that for $k \geq 3$, $k$-CNF-SAT has no kernel of size $\Oh(n^{k-\varepsilon})$ for any $\varepsilon > 0$, unless \containment~\cite{DellM14}.
\end{proof}

As the last result of the section, we prove that $k$-ary Boolean relations with exactly one falsifying assignment cone-define \kor{k}.

\begin{proof}[Proof of Lemma \ref{lem:1-unsat-defines-OR}]
Let ${u} = (u_1,\ldots,u_k)$ be the unique $k$-tuple not contained in $R$. Define the tuple $(y_1,\ldots,y_k)$ as follows. Let $y_i := x_i$ if $u_i = 0$, and let $y_i := \neg x_i$ otherwise. Clearly, this satisfies the first two conditions of cone-definability. It remains to prove the last condition. Let $f \colon\{x_1,\ldots,x_m\} \rightarrow \{0,1\}$. Suppose $(f(x_1),\ldots,f(x_k)) \in k\text{-\OR}$. We show $(\hat{f}(y_1),\ldots,\hat{f}(y_k)) \in R$. Since $(f(x_1),\ldots,f(x_k)) \in k\text{-\OR}$, there exists  at least one $i \in [k]$ such that $f(x_i) \neq 0$. Thereby, $\hat{f}(y_i) \neq u_i$ and thus $(\hat{f}(y_1),\ldots,\hat{f}(y_k)) \neq u$, implying $(\hat{f}(y_1),\ldots,\hat{f}(y_k)) \in R$.

Suppose $(f(x_1),\ldots,f(x_k)) \notin k\text{-\OR}$, implying $f(x_i) = 0$ for all $i \in [k]$. But this implies $\hat{f}(y_i) = u_i$ for all $i \in [k]$ and thus $(\hat{f}(y_1),\ldots,\hat{f}(y_k)) = {u} \notin R$.
\end{proof}

\subsection{Proofs omitted from Section \ref{sec:balanced-to-linear}}
\label{app:balanced-to-linear}
To give the proofs that were omitted from Section \ref{sec:balanced-to-linear}, we need the following additional definitions.
\begin{definition}
We say an $m\times n$ matrix $A$ is a \emph{diagonal matrix}, if all entries $a_{i,j}$
 with $i \neq j$ are zero. Thus, all non-zero elements occur on the diagonal.
\end{definition}
Note that by the above definition of diagonal matrices,  a matrix can be diagonal even if it is not a square matrix.

We denote the greatest common divisor of two integers $x$ and $y$ as $\text{gcd}(x,y)$. Recall that by B\'{e}zout's lemma, if $\text{gcd}(x,y) = z$ then there exist integers $a$ and $b$ such that $ax + by = z$.  We will use $x \mid y$ to indicate that $x$ divides $y$ (over the integers) and $x \nmid y$ to indicate that it does not. The proof of the following lemma was contributed by Emil Je\v{r}\'abek.

\begin{proof}[Proof of Lemma \ref{lem:prime-powers-to-int}]
We prove the contrapositive. Suppose $u \notin \text{span}_{\mathbb{Z}}(\{s_1,\ldots,s_m\})$, thus $u$ cannot be written as a linear combination of the rows of $S$ over $\mathbb{Z}$; equivalently, the system $yS = u$ has no solutions for $y$ over $\mathbb{Z}$. We will show that there exists a prime power~$q$, such that $yS \equiv_q u$ has no solutions over $\mathbb{Z}/q\mathbb{Z}$ and thus $u \notin \text{span}_q(\{s_1,\ldots,s_m\})$.

There exist an $m\times m$ matrix $M$ and an $n\times n$ matrix $N$
 over $\mathbb{Z}$, such that $M$ and $N$ are invertible over $\mathbb{Z}$ and furthermore $S' := MSN$ is in Smith Normal Form (cf.~\cite[Theorem 368]{gockenbach2011finite}).  In particular, this implies that $S'$ is a diagonal matrix.
Define $u' := uN$.
\begin{claim}\label{claim:SNF-equivalent}
If~$y'S'=u'$ is solvable for $y'$ over $\mathbb{Z}$, then $yS = u$ is solvable for $y$ over~$\mathbb{Z}$.
\end{claim}
\begin{claimproof}
Consider $y'$ such that $y'S'=u'$. One can verify that $y := y'M$ solves $yS = u$, as
\[yS = y'M S= y'MSNN^{-1} = y'S'N^{-1} =u'N^{-1} = uNN^{-1} = u.\qedhere\]
\end{claimproof}

\begin{claim} \label{claim:modsolutions:to:normal}
Let~$q \in \mathbb{N}$. If~$yS \equiv_q u$ is solvable for~$y'$, then~$y'S' \equiv_q u'$ is solvable for~$y$.
\end{claim}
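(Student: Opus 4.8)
The plan is to mirror the proof of Claim~\ref{claim:SNF-equivalent}, but now over the ring $\mathbb{Z}/q\mathbb{Z}$ instead of over $\mathbb{Z}$. The key observation is that the matrices $M$ and $N$ producing the Smith Normal Form $S' = MSN$ are invertible \emph{over $\mathbb{Z}$}, hence their inverses have integer entries; reducing these integer matrices modulo $q$ yields matrices that are invertible over $\mathbb{Z}/q\mathbb{Z}$, since the identities $MM^{-1} = I$ and $NN^{-1} = I$ remain valid after reduction modulo $q$.

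First I would take a solution $y$ of $yS \equiv_q u$ and set $y' := yM^{-1}$, where here $M^{-1}$ denotes the (integer) inverse of $M$ reduced modulo $q$. Then I would compute, working throughout modulo $q$,
\[
y'S' = yM^{-1} \cdot MSN = yM^{-1}M SN \equiv_q y S N \equiv_q u N = u',
\]
using associativity of matrix multiplication over the ring $\mathbb{Z}/q\mathbb{Z}$, the fact that $M^{-1}M \equiv_q I$, and the hypothesis $yS \equiv_q u$. This shows $y'$ solves $y'S' \equiv_q u'$, as required. (Note the slight mismatch between the roles of the primed and unprimed variables in the statement as typeset; the substance is that solvability of $yS \equiv_q u$ implies solvability of $y'S' \equiv_q u'$, which is exactly what the displayed computation gives.)

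There is no real obstacle here: the only point that needs a word of care is the justification that $M$ and $N$, being invertible over $\mathbb{Z}$, stay invertible modulo $q$ — which follows because their integer inverses witness invertibility and this witnessing identity is preserved under the ring homomorphism $\mathbb{Z} \to \mathbb{Z}/q\mathbb{Z}$. I would state this explicitly before doing the computation, and then the claim is immediate. This claim (together with Claim~\ref{claim:SNF-equivalent}, which reduces the problem to the diagonal matrix $S'$) sets up the subsequent argument: on a diagonal matrix, solvability of $y'S' \equiv_q u'$ can be analysed coordinate by coordinate, and one then chooses $q$ to be an appropriate prime power (a prime power dividing a diagonal entry to a higher power than it divides the corresponding entry of $u'$, or a prime not dividing a zero diagonal entry's coordinate) to defeat solvability, thereby finishing the contrapositive proof of Lemma~\ref{lem:prime-powers-to-int}.
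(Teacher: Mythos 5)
Your proof is correct and follows exactly the paper's own argument: set $y' := yM^{-1}$ and compute $y'S' = yM^{-1}MSN = ySN \equiv_q uN = u'$. The extra remark that invertibility of $M$ over $\mathbb{Z}$ persists after reduction modulo $q$ is a small but welcome bit of bookkeeping that the paper leaves implicit.
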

\begin{claimproof}
Let $y$ be such that $yS \equiv_q u$. Define $y' := yM^{-1}$. We verify that $y'S' \equiv_q u'$ as follows.
\[y'S' = y'MSN = yM^{-1}MSN = ySN \equiv_q uN = u'.\qedhere\]
\end{claimproof}

Using these two claims, our proof by contraposition proceeds as follows. From our starting assumption~$u \notin \text{span}_{\mathbb{Z}}(\{s_1,\ldots,s_m\})$, it follows by Claim~\ref{claim:SNF-equivalent} that~$y'S'=u'$ has no solution~$y'$ over~$\mathbb{Z}$. Below, we prove that this implies there exists a prime power $q$ such that $y'S' \equiv_q u'$ is unsolvable. By Claim~\ref{claim:modsolutions:to:normal} this will imply that~$y S \equiv_q u$ is unsolvable and complete the proof. 

Suppose $y'S' = u'$ has no solutions over~$\mathbb{Z}$. Since all non-zero elements of $S'$ are on the diagonal, this implies that either there exists $i \in [n]$, such that $u'_i$ is not divisible by $s'_{i,i}$, or $s'_{i,i}$ is zero while $u'_i \neq 0$. We finish the proof by a case distinction.

\begin{itemize}
\item Suppose there exists $i \in [n]$ such that $s'_{i,i} = 0$, while $u'_i \neq 0$. Choose a prime power $q$ such that $q \nmid u'_i$. It is easy to see that thereby, $u'_i \not\equiv_q 0$. Since $s'_{i,i} \equiv_q 0$ holds trivially in this case, the system has $y'S' \equiv_q u'$ no solution.
\item Otherwise, there exists $i \in [n]$ such that $s'_{i,i} \nmid u'_i$. Choose a prime power $q$ such that $q \nmid u'_i$ and $q \mid s'_{i,i}$. Such a prime power can be chosen by letting $q := p^\ell$ for a prime $p$ that occurs $\ell \geq 1$ times in the prime factorization of $s'_{i,i}$, but less often in the prime factorization of $u'_i$. Thereby, $u'_i \not\equiv_q 0$, while $s'_{i,i} \equiv_q 0$. It again follows that the system $y'S' \equiv_q u'$ has no solutions.
\qedhere
\end{itemize}
\end{proof}

We remark that the proof of Lemma~\ref{lem:prime-powers-to-int} can be made constructive in the following sense: there is an algorithm that either finds a linear combination showing that~$u \in \text{span}_{\mathbb{Z}}(\{s_1,\ldots,s_m\})$, or produces a prime power~$q$ for which~$u \notin \text{span}_q(\{s_1,\ldots,s_m\})$. The running time of this algorithm is superpolynomial due to the necessity to factor integers, but for moderately-sized integers this is not a big issue in practice.

\begin{proof}[Proof of Lemma \ref{lem:no-solution-implies-span}]
Let $A'$ be the $(m-1) \times n$ matrix consisting of the first $m-1$ rows of $A$. Find the Smith normal form \cite{gockenbach2011finite} of $A'$ over $\mathbb{Z}$, thus there exist an $(m-1)\times (m-1)$ matrix $M'$ and an $n \times n$ matrix $N$, such that $S':= M'A'N$ is in Smith Normal Form and $M'$ and $N$ are invertible over $\mathbb{Z}$. (The only property of Smith Normal Form we rely on is that~$S'$ is a diagonal matrix.)

We show that similar properties hold over $\mathbb{Z}/q\mathbb{Z}$. Let $(M')^{-1}$, $N^{-1}$ be the inverses of $M'$ and $N$ over $\mathbb{Z}$. It is easy to verify that $NN^{-1} = I \equiv_q I$ and $M'(M')^{-1} = I \equiv_q I$, such that $M'$ and $N$ are still invertible over $\mathbb{Z}/q\mathbb{Z}$. Furthermore, $S'\ (\text{mod }{q})$ remains a diagonal matrix.

Define $M$ to be the following $m\times m$ matrix
\[M :=
\left(
\begin{array}{c|c}
M' & \vect{0} \\
\hline
\vect{0} & 1
\end{array}
\right),\]
then $M$ has an inverse over $\mathbb{Z}/q\mathbb{Z}$ that is given by the following matrix
 \[M^{-1} \equiv_q
\left(
\begin{array}{c|c}
(M')^{-1} & \vect{0} \\
\hline
\vect{0} & 1
\end{array}
\right).\]
Define $S:= MAN$ and verify that
\begin{equation}\label{eq:definition_of_S}
S:= MAN \equiv_q \left(
\begin{array}{c}
S' \\
\hline
a_mN
\end{array}
\right),
\end{equation}
meaning that the first $m-1$ rows of $S$ are equal to the first $m-1$ rows of  $S'$, and the last row of $S$ is given by the row vector $a_mN$.

The following two claims will be used to show that proving the lemma statement for matrix $S$, will give the desired result for $A$.
\begin{claim}\label{claim:solutions-equivalent}
Let $b:= (0,\ldots,0,c)$ for some constant $c$. The system $Sx' \equiv_q b$ has a solution, if and only if the system $Ax\equiv_q b$ has a solution.
\end{claim}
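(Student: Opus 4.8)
The plan is to derive both implications from a single change of variables, using that $M$ and $N$ are invertible over $\mathbb{Z}/q\mathbb{Z}$ (already noted above, as $\det M, \det N \in \{\pm 1\}$ over $\mathbb{Z}$) together with the key observation that the special right-hand side $b=(0,\ldots,0,c)^T$ is a fixed point of multiplication by $M$ and by $M^{-1}$ modulo $q$.

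First I would record this fixed-point fact: by the displayed block shape of $M$, its last column equals $(0,\ldots,0,1)^T$, so $Mb \equiv_q c\cdot (0,\ldots,0,1)^T = b$, and since $M^{-1}$ has the same block shape modulo $q$, likewise $M^{-1}b \equiv_q b$. For the implication ``$Sx'\equiv_q b$ solvable $\Rightarrow$ $Ax\equiv_q b$ solvable'', I take a solution $x'$ and compute $M\bigl(A(Nx')\bigr) = (MAN)x' = Sx' \equiv_q b$; multiplying on the left by $M^{-1}$ and applying $M^{-1}b \equiv_q b$ yields $A(Nx')\equiv_q b$, so $x := Nx'$ is the required solution. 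For the converse, starting from a solution $x$ of $Ax\equiv_q b$, I multiply on the left by $M$ and use $Mb \equiv_q b$ to obtain $MAx \equiv_q b$; inserting $NN^{-1}\equiv_q I$ (valid since $N$ is a unit mod $q$) gives $(MAN)(N^{-1}x) \equiv_q b$, so $x' := N^{-1}x$ solves $Sx'\equiv_q b$.

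I expect the only genuine point to watch is the fixed-point observation for $b$: it is precisely what lets us avoid tracking how $M$ would otherwise distort the right-hand side (without the zero pattern of $b$ one could only match solvability of $Sx'\equiv_q b$ with that of $Ax\equiv_q Mb$, which would be useless here). The remaining steps are routine arithmetic with invertible matrices over $\mathbb{Z}/q\mathbb{Z}$, so no further obstacle is anticipated; one just has to keep in mind throughout that all inverses are taken modulo $q$.
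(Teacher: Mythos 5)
Your proof is correct and follows essentially the same approach as the paper's: both directions use the substitutions $x' := N^{-1}x$ and $x := Nx'$, and both hinge on the observation that the special right-hand side $b=(0,\ldots,0,c)^T$ is fixed by $M$ and $M^{-1}$ modulo $q$ because of their block structure. Your preliminary justification of invertibility via $\det M,\det N\in\{\pm1\}$ is a minor cosmetic variation on the paper's remark that $MM^{-1}\equiv_q I$ and $NN^{-1}\equiv_q I$.
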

\begin{claimproof}
Let $x$ be a solution for $Ax\equiv_q b$. Define $x' := N^{-1}x$. Then $MANx' \equiv_q MAx \equiv_q Mb$. Observe that by the definitions of $M$ and $b$, $Mb \equiv_q b$, which concludes this direction of the proof.

For the other direction, let $x'$ be a solution for $MANx' \equiv_q b$. Define $x := Nx'$. Then $M^{-1}MANx' \equiv_q M^{-1}b$ and thus $ANx' \equiv_q M^{-1}b$ and thereby $Ax \equiv_q M^{-1}b$. By the definition of $M^{-1}$ and $b$, we again have $M^{-1}b \equiv_q b$.
\end{claimproof}

\begin{claim}\label{claim:span-equivalent}
$s_m \in \text{span}_q(\{s_1,\ldots,s_{m-1}\})$ if and only if $a_m \in \text{span}_q(\{a_1,\ldots,a_{m-1}\})$.
\end{claim}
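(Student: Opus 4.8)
\textbf{Proof plan for Claim~\ref{claim:span-equivalent}.}
The idea is that both span conditions are invariant under multiplication by the invertible matrices $M'$ and $N$, so the claim reduces to a purely mechanical chain of equivalences. The plan is to work entirely over $\mathbb{Z}/q\mathbb{Z}$ and use only that $M'$ and $N$ are invertible there (established just above in the proof of Lemma~\ref{lem:no-solution-implies-span}); no field structure is needed.

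First I would record the two consequences of~\eqref{eq:definition_of_S}. The last row gives $s_m \equiv_q a_m N$. For the first $m-1$ rows: writing $A'$ for the matrix with rows $a_1,\ldots,a_{m-1}$, the matrix $A'N$ has rows $a_1 N,\ldots,a_{m-1} N$, and $S' \equiv_q M'(A'N)$ means $s_i \equiv_q \sum_{j\in[m-1]} (M')_{i,j}\, (a_j N)$ for each $i\in[m-1]$. Hence every $s_i$ lies in $\text{span}_q(\{a_1 N,\ldots,a_{m-1} N\})$, and applying the same argument with $(M')^{-1}$ in place of $M'$ shows every $a_j N$ lies in $\text{span}_q(\{s_1,\ldots,s_{m-1}\})$. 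Therefore $\text{span}_q(\{s_1,\ldots,s_{m-1}\}) = \text{span}_q(\{a_1 N,\ldots,a_{m-1} N\})$.

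Next I would observe that right-multiplication by $N$, i.e.\ the map $v \mapsto vN$ on row vectors over $\mathbb{Z}/q\mathbb{Z}$, is a bijective $\mathbb{Z}/q\mathbb{Z}$-linear map (its inverse is $v \mapsto v N^{-1}$). A bijective linear map sends spans to spans, so $a_m \in \text{span}_q(\{a_1,\ldots,a_{m-1}\})$ holds if and only if $a_m N \in \text{span}_q(\{a_1 N,\ldots,a_{m-1} N\})$: for the forward direction multiply a witnessing combination on the right by $N$, and for the reverse direction multiply on the right by $N^{-1}$, in each case using that $\sum \beta_i (a_i N)$ with coefficients $\beta_i \in \mathbb{Z}/q\mathbb{Z}$ equals $(\sum \beta_i a_i) N$.

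Finally I would chain the pieces: $s_m \in \text{span}_q(\{s_1,\ldots,s_{m-1}\})$ iff $a_m N \in \text{span}_q(\{a_1 N,\ldots,a_{m-1} N\})$ (using $s_m \equiv_q a_m N$ and the span equality from the first step) iff $a_m \in \text{span}_q(\{a_1,\ldots,a_{m-1}\})$ (by the bijectivity of multiplication by $N$). I do not anticipate a genuine obstacle here; the only thing to be careful about is keeping the side of multiplication straight ($M'$ acts on the left to recombine rows, $N$ acts on the right and must be cancelled with $N^{-1}$) and remembering that all identities are modulo $q$, which is fine since invertibility of $M'$ and $N$ over $\mathbb{Z}/q\mathbb{Z}$ was already argued.
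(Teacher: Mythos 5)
Your proof is correct and proceeds along essentially the same lines as the paper's: both rely only on the invertibility of $M'$ and $N$ over $\mathbb{Z}/q\mathbb{Z}$ and the block identity in~\eqref{eq:definition_of_S}, translating span membership by left-multiplying with $M'$ or $(M')^{-1}$ and cancelling $N$ on the right. You package the computation slightly more modularly (first establishing $\text{span}_q(\{s_1,\ldots,s_{m-1}\}) = \text{span}_q(\{a_1 N,\ldots,a_{m-1} N\})$, then transporting by the bijection $v \mapsto vN$), while the paper carries out the corresponding matrix manipulations inline for each implication, but the underlying arithmetic is identical.
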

\begin{claimproof}
Suppose $s_m \in \text{span}_q(\{s_1,\ldots,s_{m-1}\})$. This implies that there exist $\alpha_1,\ldots,\alpha_{m-1}$ such that $\sum_{i\in[m-1]} \alpha_is_i \equiv_q s_m \equiv_q a_mN$. Thus,
$\sum_{i\in[m-1]} \alpha_is'_i \equiv_q a_mN$, and for $\alpha = (\alpha_1,\ldots,\alpha_{m-1})$ we therefore have
$\alpha S' \equiv_q a_mN$, implying
$(\alpha M')A'N \equiv_q a_mN$. Since $N$ is invertible, it follows that
\[(\alpha M')A' \equiv_q a_m\] and thus $a_m \in \text{span}_q(\{a_1,\ldots,a_{m-1}\})$.

For the other direction, suppose $a_m \in \text{span}_q(\{a_1,\ldots,a_{m-1}\})$. Thus, there exists $\alpha \equiv_q (\alpha_1,\ldots,\alpha_{m-1})$ such that $\alpha A' \equiv_qa_m$. Let $\alpha' := \alpha (M')^{-1}$. Then \[\alpha'S' \equiv_q \alpha'M'A'N \equiv_q \alpha A'N \equiv_q a_mN \equiv_q s_m,\] and now it follows from the definition of $S$ given in \eqref{eq:definition_of_S} that $s_m \in \text{span}_q(\{s_1,\ldots,s_{m-1}\})$.
\end{claimproof}

It follows from Claims \ref{claim:solutions-equivalent} and \ref{claim:span-equivalent}, that it suffices to show that if $Sx = (0,\ldots,0,c)^T$ has no solutions for any $c \not\equiv_q 0$, then $s_m \in \text{span}_q(\{s_1,\ldots,s_{m-1}\})$. So suppose $s_m \notin \text{span}_q(\{s_1,\ldots,s_{m-1}\})$, we show that the system has a solution for some non-zero $c$. Observe that since $S'$ (the first $m-1$ rows of $S$) is a diagonal matrix, there must exist $i\in [m-1]$ for which  there is no $\alpha_i$ satisfying $s_{i,i} \cdot \alpha_i \equiv_q s_{m,i}$. Otherwise, it is easy to see that $\sum_{i\in[m-1]}\alpha_is_i \equiv_q s_m$, contradicting that $s_m \notin \text{span}_q(\{s_1,\ldots,s_{m-1}\})$. We now do a case distinction.

Suppose there exists $i \in [m-1]$ such that $s_{i,i} \equiv_q 0$, while $s_{m,i}\not\equiv_q 0$. Let $x = (0,\ldots,0,1,0,\ldots,0)$ be the vector with $1$ in the $i$'th position and zeros in all other positions. It is easy to verify that $Sx \equiv_q (0,\ldots,0,s_{m,i})^T$ and thereby the system $Sx \equiv_q b$ has a solution for $c=s_{m,i}$.

Otherwise, choose $i$ such that there exists no integer $\alpha_i$ satisfying $s_{i,i}\cdot \alpha_i \equiv_q s_{m,i}$ and $s_{i,i}\not\equiv_q 0$. It is given that $q$ is a prime power, let $q = p^k$ for prime $p$. Let $0\leq \ell < k$ be the largest integer such that $p^\ell \mid s_{i,i}$ over the integers. We consider the following two cases.
\begin{itemize}
\item Suppose $p^\ell \mid s_{m,i}$. Let $c$ such that $s_{i,i} \equiv_q c \cdot p^\ell$ and choose $d$ such that $s_{m,i}\equiv_q d \cdot p^\ell$. Note that $\text{gcd}(c,q) = 1$. It follows from B\'{e}zout's lemma that $c$ has an inverse $c^{-1}$ such that $cc^{-1} \equiv_q 1$. Then
    \[s_{i,m} \equiv_q (d\cdot c^{-1})s_{i,i}, \]
    which is a contradiction with the assumption that no integer $\alpha_i$ exists such that $s_{i,i}~\cdot~\alpha_i~\equiv_q~s_{i,m}$.
\item Suppose $p^\ell \nmid s_{m,i}$. Define $x := (0,\ldots,0,p^{k-\ell},0,\ldots,0)^{\text{T}}$ as the vector with $p^{k-\ell}$ in position $i$. Then
    \[Sx \equiv_q (0,\ldots,0,p^{k-\ell}\cdot s_{i,i},0,\ldots,0,p^{k-\ell}\cdot s_{m,i})^{\text{T}}.\]
    Since $p^\ell \mid s_{i,i}$ it follows that $p^{k-\ell}\cdot s_{i,i} \equiv_q 0$. Furthermore, since $p^\ell \nmid s_{m,i}$, it follows that $p^{k-\ell}\cdot s_{m,i} \not\equiv_q 0$, and thereby the system $Sx \equiv_q b$ has a solution for $b := (0,\ldots,0,p^{k-\ell}s_{m,i})$.\qedhere
\end{itemize}
\end{proof}

The contrapositive of Lemma~\ref{lem:no-solution-implies-span} states that if~$a_m \notin \text{span}_q(\{a_1,\ldots,a_{m-1}\})$, then there exists~$c \not \equiv_q 0$ for which~$Ax \equiv_q b$ has a solution. A method to construct such a solution~$x$ follows from our proof above. In the context of capturing a Boolean relation~$R$ by degree-1 polynomials, this constructive proof effectively shows the following: given a prime power~$q$ over which a certain tuple~$u \notin R$ can be captured, one can constructively find the coefficients~$x$ of a polynomial that captures~$u$ by following the steps in the proof.

\newcommand{\rplus}{+}
\newcommand{\rmin}{-}
\newcommand{\rdot}{\cdot}
\begin{proof}[{Proof of Theorem \ref{thm:not-balanced-no-polynomial}}]
Suppose $R$ is not balanced. By Proposition~\ref{prop:balanced-acts-as-alternating}, this implies~$R$ is violated by an alternating operation. Let $f$ be an alternating operation that does not preserve $R$, such that $f(y_1,\ldots,y_m):= \sum_{i=1}^m (-1)^{i+1} y_i$ for some odd~$m$, and for some (not necessarily distinct) $r_1,\ldots,r_m \in R$ we have $f(r_1,\ldots,r_m) = u$ with $u \notin R$.

Suppose for contradiction that there exists a linear polynomial $p_u$ over a ring $E_u$, such that $p_u$ captures $u$ over $E_u$. Let $r_i := (r_{i,1},\ldots,r_{i,k})$ for $i \in [m]$. Since $f(r_1,\ldots,r_m) = u$, we have the following equality over $\mathbb{Z}$:
\begin{equation}\label{eq:balanced-over-ring} u_i = r_{1,i} \rmin r_{2,i} \ldots \rplus r_{m,i}.\end{equation}

Since $r_{j,i} \in \{0,1\}$ for all $i \in [k]$ and $j \in [m]$, equation \eqref{eq:balanced-over-ring} holds over any ring, so in particular over $E_u$.

Let $p_u(x_1,\ldots,x_k)$ be given by $p_u(x_1,\ldots,x_k) := \beta_0 \rplus \beta_1 \rdot x_1 \rplus \beta_2 \rdot x_2 \rplus \dots \rplus \beta_k \rdot x_k$ for ring elements $\beta_0,\ldots,\beta_k$ from $E_u$. By Definition~\ref{def:capture}, $p_u(r_{i,1},\ldots,r_{i,k}) = 0$ for all $i \in [m]$. Thereby the following equalities hold over $E_u$:
\begin{align}
p_u(u_1,\ldots,u_k) &= \beta_0 \rplus \sum_{i=1}^k \beta_i \rdot u_i \nonumber\\
&= \beta_0 \rplus \sum_{i=1}^k \beta_i \rdot (r_{1,i} \rmin r_{2,i}\ldots \rplus r_{m,i}) \nonumber\\
&= \beta_0 \rplus \sum_{i=1}^k \beta_i \rdot r_{1,i} \rmin \beta_i\rdot r_{2,i} \ldots \rplus \beta_i \rdot r_{m,i} \nonumber\\
&= (\beta_0 \rplus \sum_{i=1}^k \beta_i \rdot r_{1,i}) \rmin (\beta_0 \rplus \sum_{i=1}^k \beta_i \rdot r_{2,i}) \dots \rplus (\beta_0 \rplus \sum_{i=1}^k \beta_i \rdot r_{m,i})\label{eq:cancel-beta0}\\
&= p_u(r_{1,1},\ldots,r_{1,k}) \rmin p_u(r_{2,1},\ldots,r_{2,k}) \dots \rplus p_u(r_{m,1},\ldots,r_{m,k})\nonumber\\
&= 0 \nonumber,
\end{align}
where the fourth equality follows from the fact that in line \eqref{eq:cancel-beta0} all but one of the terms $\beta_0$ cancel, since the summation alternates between addition and subtraction. This contradicts the fact that $p_u(u_1,\ldots,u_k)\neq 0$. Thereby, there exists no linear polynomial that captures $u$ with respect to $R$.
\end{proof}
\subsection{Proofs omitted from Section \ref{sec:symmetric}}
\label{app:symmetric}
\begin{proof}[{Proof of Lemma \ref{lem:vector-to-int}}]
We first show the result when $b \leq a$, $b \leq c$, and $b \leq d$. In this case, we use the following tuple to express $x_1\vee x_2$.
\[(\underbrace{\neg x_1,\ldots,\neg x_1}_{(a-b)\text{ copies}}, \underbrace{\neg x_2,\ldots,\neg x_2}_{(c-b)\text{ copies}},\underbrace{1,\ldots,1}_{b\text{ copies}},\underbrace{0,\ldots,0}_{(k-d)\text{ copies}}).\]
Let $f\colon \{x_1,x_2\}\rightarrow \{0,1\}$, then $(\neg f(x_1),\ldots,\neg f(x_1),\neg f(x_2),\ldots,\neg f(x_2),1,\ldots,1,0,\ldots,0)$ has weight $(a-b)(1-f(x_1)) + (c-b)(1-f(x_2)) + b$. It is easy to verify that for $f(x_1) = f(x_2) = 0$, this implies the tuple has weight $a+c-b=d \notin S$ and thus the tuple is not in $R$. Otherwise, the weight is either $a$, $b$, or $c$. In these cases the tuple is contained in $R$, as the weight is contained in $S$.

Note that the above case applies when $b$ is the smallest of all four integers. We now consider the remaining cases. Suppose
$a \leq b$, $a \leq c$, and $a \leq d$ (the case where $c$ is smallest is symmetric by swapping~$a$ and~$c$). In this case, use the tuple
\[(\underbrace{\neg x_1,\ldots,\neg x_1}_{(d-a)\text{ copies}}, \underbrace{ x_2,\ldots, x_2}_{(b-a)\text{ copies}},\underbrace{1,\ldots,1}_{a\text{ copies}},\underbrace{0,\ldots,0}_{(k-c)\text{ copies}}).\]
Consider an assignment $f$ satisfying $x_1 \vee x_2$, verify that the weight of the above tuple under this assignment lies in $\{a,b,c\}$, and thus the tuple is contained in $R$. Assigning $0$ to both $x_1$ and $x_2$ gives weight $d$, such that the tuple is not in $R$.

Otherwise, we have $d \leq a$, $d\leq b$, and $d \leq c$ and use the tuple
\[(\underbrace{ x_1,\ldots, x_1}_{(a-d)\text{ copies}}, \underbrace{ x_2,\ldots, x_2}_{(c-d)\text{ copies}},\underbrace{1,\ldots,1}_{d\text{ copies}},\underbrace{0,\ldots,0}_{(k-b)\text{ copies}}).\]
It is again easy to verify that any assignment to $x_1$ and $x_2$ satisfies this tuple if and only if it satisfies $(x_1 \vee x_2)$, using the fact that $a-d+c = b \in S$.
\end{proof}

\begin{claimproof}[{Proof of Claim \ref{claim:reduce-to-three}}]
If there exist distinct $i,j,\ell\in [m]$ with $i,j$ odd and $\ell$ even, such that $s_i \geq s_\ell \geq s_j$, then these $i,j,\ell$ satisfy the claim statement. Suppose these do not exist, we consider two options.
\begin{itemize}
\item Suppose $s_i \geq s_\ell$ for all $i,\ell\in[m]$ with $i$ odd and $\ell$ even. It is easy to see that thereby, for any $i,j,\ell$ with $i,j$ odd and $\ell$ even it holds that $s_i - s_\ell + s_j \geq 0$. Furthermore, $s_i - s_\ell + s_j\leq s_1 - s_2 +  s_3-s_4 \cdots + s_m = t$ and $t \leq k$ since $t \in U$. Thus, any distinct $i,j,\ell \in [m]$ with $i,j$ odd and $\ell$ even satisfy the statement.
\item Otherwise, $s_i \leq s_\ell$ for all $i,\ell\in[m]$ with $i$ odd and $\ell$ even. It follows that  for any $i,j,\ell$ with $i,j$ odd and $\ell$ even $s_i - s_\ell + s_j \leq k$, as $s_i - s_\ell \leq 0$ and $s_j \leq k$. Furthermore, $s_i - s_\ell + s_j \geq s_1 - s_2 +  s_3-s_4 \cdots + s_m = t$ and $t\geq 0$ by definition. Thus, any distinct $i,j,\ell \in [m]$ with $i,j$ odd and $\ell$ even satisfy the statement.\qedhere
\end{itemize}
\end{claimproof}

\subsection{Proofs omitted from Section \ref{sec:arity-3}}
\label{app:arity-3}
To state the proofs that were omitted from this section, we will first give a number of relevant observations and propositions.
\begin{observation}
\label{obs:defining-lowers-arity}
If a  relation $T$ of arity $m$ is
cone-definable from a relation $U$ of arity $n$,
then $m \leq n$.
\end{observation}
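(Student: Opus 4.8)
The plan is to extract an injection from $[m]$ to $[n]$ directly from the second condition in the definition of cone-definability (Definition~\ref{def:cone}). Suppose $T$ of arity $m$ is cone-definable from $U$ of arity $n$ via the tuple $(y_1,\ldots,y_n)$. For each $i \in [m]$, the second bullet of Definition~\ref{def:cone} guarantees the existence of some index $j \in [n]$ with $y_j \in \{x_i, \neg x_i\}$; I would fix one such index and call it $\sigma(i)$, obtaining a function $\sigma \colon [m] \to [n]$.

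The key step is to verify that $\sigma$ is injective. Suppose toward a contradiction that $\sigma(i) = \sigma(i') = j$ for some distinct $i, i' \in [m]$. Then $y_j$ lies in $\{x_i, \neg x_i\}$ and simultaneously in $\{x_{i'}, \neg x_{i'}\}$. But for distinct $i, i'$ the four symbols $x_i, \neg x_i, x_{i'}, \neg x_{i'}$ are pairwise distinct, so these two sets are disjoint, a contradiction. Hence $\sigma$ is an injection $[m] \hookrightarrow [n]$, and therefore $m \leq n$.

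I expect no genuine obstacle here: the statement is an immediate structural consequence of the definition of cone-definability, and the only thing that needs checking is the (purely syntactic) disjointness of the literal pairs $\{x_i,\neg x_i\}$ and $\{x_{i'},\neg x_{i'}\}$ for $i \neq i'$. The first and third bullets of Definition~\ref{def:cone} play no role in this argument.
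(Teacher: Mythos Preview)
Your argument is correct: the injection $\sigma\colon[m]\to[n]$ built from the second bullet of Definition~\ref{def:cone} immediately yields $m\le n$, and the disjointness of $\{x_i,\neg x_i\}$ and $\{x_{i'},\neg x_{i'}\}$ for $i\ne i'$ is exactly what makes $\sigma$ injective. The paper does not supply a proof of this observation at all (it is stated as self-evident), so your write-up is more detailed than what appears there, but it is precisely the intended reasoning.
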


\begin{observation}
\label{obs:defining-still-preserves}
Suppose a relation $T$ is cone-definable from a relation $U$,
and that $g$ is a partial operation that is idempotent
and self-dual.
If $g$ preserves $U$, then $g$ preserves $T$.
\end{observation}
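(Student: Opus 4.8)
The plan is to unpack the definition of cone-definability and push a coordinate-wise application of $g$ through it. Write $\ell$ for the arity of $g$ and fix a tuple $(y_1,\ldots,y_n)$ witnessing that $T$ (of arity $m$) is cone-definable from $U$ (of arity $n$), as in Definition~\ref{def:cone}. To show that $g$ preserves $T$, I would start from tuples $t^1,\ldots,t^\ell \in T$ whose coordinate-wise image $t^* := (g(t^1_1,\ldots,t^\ell_1),\ldots,g(t^1_m,\ldots,t^\ell_m))$ is totally defined, aiming to conclude $t^* \in T$. Each $t^p$ corresponds to the assignment $f_p \colon \{x_1,\ldots,x_m\} \to \{0,1\}$ given by $f_p(x_i) = t^p_i$, and $t^*$ corresponds to an assignment $f^*$ in the same way; since $t^p \in T$, the third clause of cone-definability gives $u^p := (\hat f_p(y_1),\ldots,\hat f_p(y_n)) \in U$ for each $p \in [\ell]$.

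The core step is then to check that applying $g$ coordinate-wise to $u^1,\ldots,u^\ell$ is totally defined and produces exactly $(\hat f^*(y_1),\ldots,\hat f^*(y_n))$, by a case analysis on the form of each coordinate $y_j$. If $y_j = x_i$, the $j$-th coordinate of $u^p$ is $t^p_i$, so $g$ yields $g(t^1_i,\ldots,t^\ell_i) = t^*_i = f^*(x_i) = \hat f^*(y_j)$, which is defined by the choice of $t^*$. If $y_j = \neg x_i$, the $j$-th coordinate of $u^p$ is $\neg t^p_i$, and self-duality of $g$ makes $g(\neg t^1_i,\ldots,\neg t^\ell_i)$ defined with value $\neg t^*_i = \neg f^*(x_i) = \hat f^*(y_j)$. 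If $y_j = b \in \{0,1\}$, every $u^p$ has $b$ in coordinate $j$, and idempotence of $g$ gives $g(b,\ldots,b) = b = \hat f^*(y_j)$. With this established, preservation of $U$ by $g$ yields $(\hat f^*(y_1),\ldots,\hat f^*(y_n)) \in U$, and applying the cone-definition of $T$ from $U$ in the other direction gives $(f^*(x_1),\ldots,f^*(x_m)) = t^* \in T$.

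I do not expect a genuine obstacle here; the one thing to be careful about is tracking definedness of the partial operation through the three cases — constants require idempotence, negated variables require self-duality, and unnegated variables require only the standing assumption that $t^*$ is totally defined. The requirement in Definition~\ref{def:cone} that each $x_i$ appear among the $y_j$ is not needed for this implication; it merely guarantees that $f^*$ is determined by the $U$-tuple, which is irrelevant once we already have $f^*$ in hand.
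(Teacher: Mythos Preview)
Your argument is correct and is precisely the natural unpacking of the definitions; the paper itself states this result as an observation without proof, so your write-up is strictly more detailed than what the paper provides. The case analysis you give (idempotence for constants, self-duality for negated variables, the standing definedness assumption for unnegated variables) is exactly the intended verification.
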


\begin{observation} (transitivity of cone-definability)
Suppose that $T_1$, $T_2$, $T_3$ are relations such that
$T_2$ is cone-definable from $T_1$,
and $T_3$ is cone-definable from $T_2$.
Then $T_3$ is cone-definable from $T_1$.
\end{observation}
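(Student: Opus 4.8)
The plan is to compose the two witnessing tuples in the obvious way and check that the composition satisfies all three clauses of Definition~\ref{def:cone}. Say $T_2$ has arity $m_2$ and is cone-definable from $T_1$ (arity $m_1$) via the tuple $(y_1,\ldots,y_{m_1})$ over variables $x_1,\ldots,x_{m_2}$, and $T_3$ has arity $m_3$ and is cone-definable from $T_2$ via the tuple $(z_1,\ldots,z_{m_2})$ over variables $w_1,\ldots,w_{m_3}$. For each $j\in[m_1]$ I would set $v_j := y_j$ if $y_j\in\{0,1\}$; $v_j := z_i$ if $y_j = x_i$; and $v_j := \overline{z_i}$ if $y_j = \neg x_i$, where $\overline{\cdot}$ denotes syntactic negation with double negations cancelled (so $\overline{b} = 1-b$ for a constant $b$, $\overline{w_\ell} = \neg w_\ell$, and $\overline{\neg w_\ell} = w_\ell$). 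I then claim $(v_1,\ldots,v_{m_1})$ witnesses that $T_3$ is cone-definable from $T_1$.

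\textbf{Syntactic conditions.} Each $v_j$ lies in $\{0,1\}\cup\{w_1,\ldots,w_{m_3}\}\cup\{\neg w_1,\ldots,\neg w_{m_3}\}$, since each $z_i$ does and this set is closed under $\overline{\cdot}$. For the covering clause, fix $\ell\in[m_3]$: by the second cone-definition there is $i\in[m_2]$ with $z_i\in\{w_\ell,\neg w_\ell\}$, and by the first cone-definition there is $j\in[m_1]$ with $y_j\in\{x_i,\neg x_i\}$; hence $v_j\in\{z_i,\overline{z_i}\}\subseteq\{w_\ell,\neg w_\ell\}$, as required.

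\textbf{Semantic condition.} Given an assignment $g\colon\{w_1,\ldots,w_{m_3}\}\to\{0,1\}$, define $f\colon\{x_1,\ldots,x_{m_2}\}\to\{0,1\}$ by $f(x_i) := \hat g(z_i)$. The cone-definition of $T_3$ from $T_2$ gives $(g(w_1),\ldots,g(w_{m_3}))\in T_3$ iff $(f(x_1),\ldots,f(x_{m_2}))\in T_2$, and the cone-definition of $T_2$ from $T_1$ gives $(f(x_1),\ldots,f(x_{m_2}))\in T_2$ iff $(\hat f(y_1),\ldots,\hat f(y_{m_1}))\in T_1$. Thus it suffices to prove $\hat f(y_j) = \hat g(v_j)$ for every $j\in[m_1]$, which I would establish by the three-way case distinction on the form of $y_j$: the constant and positive-literal cases are immediate from the definitions of $f$ and $v_j$, and the case $y_j=\neg x_i$ reduces to the identity $\hat g(\overline{z_i}) = \neg\,\hat g(z_i)$, itself checked by a sub-case distinction on whether $z_i$ is a constant, a positive literal $w_\ell$, or a negated literal $\neg w_\ell$ (the last case using that double negation cancels). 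Transitivity of cone-definability then follows.

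\textbf{Main obstacle.} I do not expect a genuine obstacle; the argument is essentially bookkeeping. The only point requiring care is the handling of negations: one must track that the extension $\hat g$ commutes with syntactic negation, and that forming $\overline{z_i}$ when $z_i$ is itself a negated literal collapses to a plain variable, so that the composed tuple still consists only of constants and (possibly negated) $w$-variables rather than doubly-negated terms.
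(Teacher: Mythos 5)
The paper states this observation without proof (it is one of the "straightforwardly verified" facts preceding Theorem~\ref{thm:Boolean-arity-two} and its use of cone-interdefinability), so there is no paper proof to compare against. Your composition argument is correct and complete: the substitution $v_j$ is well-defined thanks to the explicit double-negation cancellation in $\overline{\cdot}$, the covering clause is chained through the intermediate index $i$, and the semantic equivalence correctly factors through the auxiliary assignment $f(x_i) := \hat g(z_i)$ together with the pointwise identity $\hat f(y_j) = \hat g(v_j)$. This is exactly the natural proof the authors presumably had in mind when declaring the fact an observation.
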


\begin{definition}
Let us say that two Boolean relations $T$, $U$
are \emph{cone-interdefinable}
if each is cone-definable from the other.
\end{definition}

The following two propositions are consequences
of Observations~\ref{obs:defining-lowers-arity}
and~\ref{obs:defining-still-preserves}.
We will tacitly use them in the sequel.
Morally, they show that the properties of relations
that we are interested in
are invariant under cone-interdefinability.

\begin{proposition}
If relations $T$, $U$ are cone-interdefinable,
then they have the same arity.
\end{proposition}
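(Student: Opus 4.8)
The plan is to derive this immediately from Observation~\ref{obs:defining-lowers-arity}, which states that cone-definability can only lower (or preserve) arity. Concretely, suppose $T$ has arity $m$ and $U$ has arity $n$. Since $T$ and $U$ are cone-interdefinable, by definition $T$ is cone-definable from $U$ and $U$ is cone-definable from $T$.

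First I would apply Observation~\ref{obs:defining-lowers-arity} to the fact that $T$ is cone-definable from $U$, obtaining $m \leq n$. Then I would apply the same observation to the fact that $U$ is cone-definable from $T$, obtaining $n \leq m$. Combining these two inequalities yields $m = n$, so $T$ and $U$ have the same arity.

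There is no real obstacle here: the statement is an immediate corollary of Observation~\ref{obs:defining-lowers-arity} applied in both directions, together with the antisymmetry of $\leq$ on the integers. The only thing to be careful about is to explicitly unpack the definition of cone-interdefinability (Definition preceding the proposition) into its two constituent cone-definability statements before invoking the observation.
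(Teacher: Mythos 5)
Your proof is correct and matches the paper's intent: the paper explicitly states that this proposition is a consequence of Observation~\ref{obs:defining-lowers-arity}, and your argument (apply the observation in both directions to get $m \leq n$ and $n \leq m$) is exactly the natural way to carry that out.
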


\begin{proposition}
\label{prop:interdefinable-preservation}
Suppose that $T$ and $U$ are relations that are cone-interdefinable,
and that $g$ is a partial operation that is idempotent
and self-dual.
Then, $g$ preserves $T$ if and only if $g$ preserves $U$.
\end{proposition}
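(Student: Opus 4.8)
The plan is to obtain this as an immediate two-directional corollary of Observation~\ref{obs:defining-still-preserves}. Since $T$ and $U$ are cone-interdefinable, by the definition of cone-interdefinability we have simultaneously that $T$ is cone-definable from $U$ \emph{and} that $U$ is cone-definable from $T$. The operation $g$ is assumed idempotent and self-dual, so the hypotheses of Observation~\ref{obs:defining-still-preserves} are met regardless of which of $T,U$ we take as the ``defined'' relation.

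Concretely, I would argue as follows. First suppose $g$ preserves $U$. Applying Observation~\ref{obs:defining-still-preserves} to the fact that $T$ is cone-definable from $U$ (together with $g$ idempotent and self-dual) yields that $g$ preserves $T$. For the converse, suppose $g$ preserves $T$; now apply Observation~\ref{obs:defining-still-preserves} to the fact that $U$ is cone-definable from $T$ to conclude that $g$ preserves $U$. Chaining the two implications gives that $g$ preserves $T$ if and only if $g$ preserves $U$, as claimed.

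There is essentially no obstacle: the statement is a one-line consequence of Observation~\ref{obs:defining-still-preserves}, and all the substantive work (pushing an application of $g$ through a cone-definition, using idempotence to handle the constant coordinates $0,1$ and self-duality to handle the negated coordinates $\neg x_i$) lives in the proof of that observation rather than here. The only point worth a sentence of care is to note explicitly that idempotence and self-duality are properties of $g$ alone, so they transfer unchanged when we swap the roles of $T$ and $U$; hence Observation~\ref{obs:defining-still-preserves} is applicable in both directions without any additional assumptions.
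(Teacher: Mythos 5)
Your proposal is correct and matches the paper exactly: the paper states that Proposition~\ref{prop:interdefinable-preservation} is a consequence of Observations~\ref{obs:defining-lowers-arity} and~\ref{obs:defining-still-preserves}, and your two applications of Observation~\ref{obs:defining-still-preserves} (one for each direction of cone-interdefinability) are precisely the intended argument.
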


\begin{proof}[{Proof of Theorem \ref{thm:Boolean-arity-two}}]
Let~$R \subseteq \{0,1\}^2$ be a relation. We prove the two directions separately.

($\Rightarrow$) Proof by contraposition. Suppose that~$R$ is cone-interdefinable with $\kor{2}$. Then in particular,~$R$ cone-defines the $\kor{2}$ relation. Let~$(y_1, y_2)$ be a tuple witnessing cone-definability as in Definition~\ref{def:cone}. Since~$\kor{2}$ is symmetric in its two arguments, we may assume without loss of generality that~$y_i$ is either~$x_i$ or~$\neg x_i$ for~$i \in [2]$. Define~$g \colon \{0,1\}^2 \to \{0,1\}^2$ by letting~$g(i_1,i_2) := (\hat{i}_1, \hat{i}_2)$ where~$\hat{i}_\ell = i_\ell$ if~$y_i = x_i$ and~$\hat{i}_\ell = 1 - i_{\ell}$ if~$y_i = \neg x_i$. By definition of cone-definability we then have~$g(1,0), g(0,1), g(1,1) \in R$ while~$g(0,0) \notin R$. But~$g(1,0) - g(1,1) + g(0,1) = g(0,0)$, showing that~$R$ is not preserved by all alternating operations and therefore is not balanced.

($\Leftarrow$) We again use contraposition. Suppose~$R$ is not balanced; we will prove~$R$ is cone-interdefinable with $\kor{2}$. Let~$f \colon \{0,1\}^k \to \{0,1\}$ be a balanced partial Boolean operation \emph{of minimum arity} that does not preserve~$R$. Let~$\alpha_1, \ldots, \alpha_k \in \mathbb{Z}$ be the coefficients of~$f$, as in Definition~\ref{def:balanced_op}. Let~$s^1, \ldots, s^k \in R$ such that~$f(s^1, \ldots, s^k) = u \in \{0,1\}^2 \setminus R$ witnesses that~$f$ does not preserve~$R$. By Definition~\ref{def:balanced_op} we have~$u = \sum_{i=1}^k \alpha_i s^i$ and~$\sum_{i=1}^k \alpha_i = 1$. This shows that if~$\alpha_i = 0$ for some coordinate~$i$, then that position does not influence the value of~$f$, implying the existence of a smaller-arity balanced relation that does not preserve~$T$. Hence our choice of~$f$ as a minimum-arity operation ensures that~$\alpha_i$ is nonzero for all~$i \in [k]$.

\begin{claim}
The tuples~$s^1, \ldots, s^k$ are all distinct.
\end{claim}
\begin{claimproof}
Suppose that~$s^i = s^j$ for some distinct~$i,j \in [k]$, and assume without loss of generality that~$i = k-1$ and~$j=k$. But then the balanced operation~$f'$ of arity~$k-1$ defined by the coefficients~$(\alpha_1, \ldots, \alpha_{k-2}, \alpha_{k-1} + \alpha_k)$ does not preserve~$T$ since~$f'(s^1, \ldots, s^{k-1}) = f(s^1, \ldots, s^k) = u \notin R$, contradicting that~$f$ is a minimum-arity balanced operation that does not preserve~$R$.
\end{claimproof}

\begin{claim}
The arity~$k$ of operation~$f$ is~$3$.
\end{claim}
\begin{claimproof}
Since~$s^1, \ldots, s^k \in R \subseteq \{0,1\}^2$ are all distinct, while~$u \in \{0,1\}^2 \setminus R$, we have~$k \leq 3$. We cannot have~$k=1$ since that would imply~$f(s^1) = s^1 \in R$ and~$f(s^1, \ldots, s^k) = f(s^1) = u \notin R$. It remains to show that~$k \neq 2$. So assume for a contradiction that~$k=2$. Since~$s^1$ and~$s^2$ are distinct, there is a position~$\ell \in [2]$ such that~$s^1_\ell \neq s^2_\ell$. Assume without loss of generality that~$s^1_\ell = 1$ while~$s^2_\ell = 0$. Since~$f(s^1, \ldots, s^k) = f(s^1, s^2) = \alpha_1 s^1 + \alpha_2 s^2 = u \in \{0,1\}^2 \setminus R$, we find~$u_\ell = \alpha_1 s^1_\ell + \alpha_2 s^2_\ell = \alpha_1 \cdot 1 + \alpha_2 \cdot 0 \in \{0,1\}$. Since~$\alpha_1$ is a nonzero integer, we must have~$\alpha_1 = 1$. But since~$\alpha_1 + \alpha_2 = 1$ by definition of a balanced operation, this implies~$\alpha_2 = 0$, contradicting that~$f$ is a minimum-arity balanced operation that does not preserve~$R$.
\end{claimproof}

The previous two claims show that there are at least three distinct tuples in~$R \subseteq \{0,1\}^2$. Since~$u \in \{0,1\}^2 \setminus R$ it follows that~$|R| = 3$. Hence~$R$ and \kor{2} are both Boolean relations of arity two that each have three tuples. To cone-define one from the other, one may easily verify that it suffices to use the tuple~$(y_1, y_2)$, where~$y_i = x_i$ if~$u_i = 0$ and~$y_i = \neg x_i$ otherwise.
\end{proof}

In order to prove Theorem \ref{thm:not-balanced-implies-or}, we first present some additional lemmas and definitions. 
Let $U \subseteq \{ 0, 1 \}^n$ be a relation.
We say that $w \in \{ 0, 1 \}^n$
is a \emph{witness} for $U$ if
$w \notin U$, and there exists a balanced operation
$f \colon \{ 0, 1 \}^k \to \{ 0, 1 \}$
and tuples $t^1, \ldots, t^k \in U$
such that $w = f(t^1, \ldots, t^k)$.
Observe that $U$ is not balanced
if and only if there exists a witness for $U$.

\begin{lemma}
\label{lemma:witness-modulo}
Suppose that $U \subseteq \{ 0, 1 \}^n$ is a Boolean relation,
and
that there exist an integer $c$ and a natural number
$m > 1$ such that, for each $u \in U$,
it holds that
$$\weight(u) \equiv_m c.$$
Then, if $w$ is a witness for $U$,
it holds that $\weight(w) \equiv_m c$.
\end{lemma}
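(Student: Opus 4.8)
The plan is to unwind the definition of \emph{witness} and then invoke the single fact that the weight function is linear. Since $w$ is a witness for $U$, there is a balanced operation $f \colon \{0,1\}^k \to \{0,1\}$ with integer coefficients $\alpha_1,\ldots,\alpha_k$ satisfying $\sum_{i \in [k]} \alpha_i = 1$, together with tuples $t^1,\ldots,t^k \in U$, such that $w = f(t^1,\ldots,t^k)$ (the operation being applied coordinatewise). First I would observe that, since $f(t^1,\ldots,t^k)$ is defined, the second and third bullets of Definition~\ref{def:balanced_op} give the exact identity $w_j = \sum_{i \in [k]} \alpha_i t^i_j$ over $\mathbb{Z}$ for every coordinate $j \in [n]$; this is a genuine integer equation because all $t^i_j$ and $w_j$ lie in $\{0,1\}$, so no reduction is hidden in it.

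Next I would sum over all coordinates. Writing $\weight(x) = \sum_{j \in [n]} x_j$ and swapping the order of summation yields $\weight(w) = \sum_{j \in [n]} \sum_{i \in [k]} \alpha_i t^i_j = \sum_{i \in [k]} \alpha_i \weight(t^i)$.

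Finally I would reduce modulo $m$. By hypothesis $\weight(t^i) \equiv_m c$ for each $i \in [k]$, hence $\weight(w) \equiv_m \sum_{i \in [k]} \alpha_i c = c \cdot \sum_{i \in [k]} \alpha_i = c$, where the last step uses $\sum_{i \in [k]} \alpha_i = 1$. This is exactly the assertion of the lemma. There is essentially no obstacle here; the only point requiring a moment's care is justifying that the coordinatewise application of $f$ is an identity over $\mathbb{Z}$ rather than merely a congruence, which is immediate from Definition~\ref{def:balanced_op} once we know $f(t^1,\ldots,t^k)$ is defined, together with the fact that all entries involved are $0/1$.
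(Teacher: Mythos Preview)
Your proof is correct and follows essentially the same approach as the paper: both derive the integer identity $\weight(w) = \sum_{i\in[k]} \alpha_i \weight(t^i)$ from the coordinatewise definition of a balanced operation, then reduce modulo $m$ using $\weight(t^i)\equiv_m c$ and $\sum_i \alpha_i = 1$. You are simply a bit more explicit about the coordinate-summation step that the paper states in one line.
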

\begin{proof}
Since $w$ is a witness for $U$,
there exist tuples
$t^1 = (t^1_1, \ldots, t^1_n)$, $\ldots$,
$t^k = (t^k_1, \ldots, t^k_n)$
and a balanced operation $f \colon \{ 0,1 \}^k \to \{ 0, 1 \}$
such that
$f(t^1, \ldots, t^k) = w$.
Let $\alpha_1, \ldots, \alpha_k$ be the coefficients of $f$.
From
$f(t^1, \ldots, t^k) = w$, we obtain that
$ \alpha_1 \weight(t^1) +
  \cdots
+ \alpha_k \weight(t^k) = \weight(w) $.
Since $\sum _{i\in[k]}\alpha_i = 1$ by definition of a balanced operation, we have
$$\alpha_1 \weight(t^1) +
  \cdots
+ \alpha_k \weight(t^k) \equiv_m
\alpha_1 c + \cdots + \alpha_k c = \left(\sum_{i \in [k]} \alpha_i \right)c = c$$
and the result follows.
\end{proof}

We will view (Boolean) tuples of arity $n$
as maps $f \colon [n] \to \{ 0, 1 \}$,
via the natural correspondence where
such a map $f$ represents the tuple
$(f(1),\ldots,f(n))$.  We freely interchange between
these two representations of tuples.

For~$S \subseteq \mathbb{N}$, we say that $f \colon S \to \{ 0, 1 \}$
is a \emph{no-good} of $U \subseteq \{0,1\}^n$
when:
\begin{itemize}
	\item $S \subseteq [n]$;
	\item each extension $g \colon [n] \to \{ 0, 1 \}$ of $f$ is not an element of $U$; and
	\item there exists an extension $h \colon [n] \to \{ 0, 1 \}$ of $f$ that is a witness for $U$.
\end{itemize}
We say that $f \colon S \to \{ 0, 1 \}$
is a \emph{min-no-good} if $f$ is a no-good,
but no proper restriction of $f$ is a no-good.
Observe that the following are equivalent,
for a relation:
the relation is not balanced; it has a witness;
it has a no-good; it has a min-no-good.

When $U \subseteq \{ 0, 1 \}^n$ is a relation and $S \subseteq [n]$,
let $s_1 < \cdots < s_m$ denote the elements of $S$; then, we use
$U \restr S$ to denote the relation $\{ (h(s_1), \ldots, h(s_m)) ~|~ h \in U \}$.

\begin{proposition}
\label{prop:project-min-no-good}
Let $U \subseteq \{ 0, 1 \}^n$ be a relation,
let $S \subseteq [n]$,
and suppose that $f \colon S \to \{ 0, 1 \}$ is a min-no-good of $U$.
Then $f$ is a min-no-good of $U \restr S$.
\end{proposition}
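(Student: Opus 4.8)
The plan is to unwind the three clauses defining a no-good and verify them directly: first that $f$, viewed as a (total) assignment to the coordinates of $U\restr S$, is a no-good of $U\restr S$, and then that no proper restriction of $f$ is a no-good of $U\restr S$. The one structural fact I will use repeatedly is that restriction commutes with the coordinate-wise application of a partial balanced operation: if $w=\phi(t^1,\ldots,t^k)$ with $\phi$ balanced of coefficients $\alpha_1,\ldots,\alpha_k$ and $t^1,\ldots,t^k$ tuples of arity $n$, then for every $S\subseteq[n]$ the value $\phi(t^1\restr S,\ldots,t^k\restr S)$ is defined and equals $w\restr S$, since the condition $\sum_j\alpha_j x_j\in\{0,1\}$ that defines the domain of $\phi$ holds at each coordinate of $S$ because it holds at all $n$ coordinates for $w$. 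In particular, restricting a witness for $U$ to $S$ exhibits it as a balanced combination of tuples of $U\restr S$.

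The first step would be to check $f\notin U\restr S$: if some $h\in U$ satisfied $h\restr S=f$, then $h$ would be an extension of $f$ lying in $U$, contradicting that $f$ is a no-good of $U$. Next, using that $f$ is a no-good of $U$, I would fix an extension $h\colon[n]\to\{0,1\}$ of $f$ that is a witness for $U$, write $h=\phi(t^1,\ldots,t^k)$ with $t^1,\ldots,t^k\in U$, and restrict to obtain $f=h\restr S=\phi(t^1\restr S,\ldots,t^k\restr S)$ with each $t^j\restr S\in U\restr S$; together with $f\notin U\restr S$ this says exactly that $f$ is a witness for $U\restr S$. Since $f$ is total on the coordinates of $U\restr S$, the only extension of $f$ is $f$ itself, so the two facts just established are precisely the second and third clauses of ``no-good'' for $U\restr S$ (the first, a triviality about domains, holds automatically); hence $f$ is a no-good of $U\restr S$.

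For minimality I would argue by contradiction. Suppose some proper restriction $f\restr T$, with $T\subsetneq S$, is a no-good of $U\restr S$; I claim $f\restr T$ is then a no-good of $U$, which contradicts that $f$ is a min-no-good of $U$. The domain clause $T\subseteq[n]$ is immediate. For the ``no extension lies in the relation'' clause, any $g\colon[n]\to\{0,1\}$ extending $f\restr T$ with $g\in U$ would give $g\restr S\in U\restr S$ extending $f\restr T$, contradicting that $f\restr T$ is a no-good of $U\restr S$. The witness clause is obtained for free: the witness $h$ for $U$ fixed in the previous paragraph extends $f$, hence (as $T\subseteq S$) also extends $f\restr T$. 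So $f\restr T$ is a no-good of $U$, a proper restriction of $f$ — the desired contradiction. Thus $f$ is a min-no-good of $U\restr S$.

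The only points needing care are the bookkeeping around the implicit re-indexing hidden in the notation $U\restr S$ (its coordinates, the elements of $S$, being relabelled to $[|S|]$), and the compatibility of partial balanced operations with restriction recorded in the first paragraph; beyond these the argument is a mechanical unwinding of definitions, so I do not anticipate a genuine obstacle — indeed the pleasant feature is that the witness clause in both halves is inherited directly from $f$ being a no-good of $U$.
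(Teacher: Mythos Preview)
Your proof is correct and follows essentially the same route as the paper's: show $f\notin U\restr S$, use restriction of the witnessing balanced combination to see that $f$ itself is a witness for $U\restr S$ (hence a no-good, being total on $S$), and for minimality transfer a hypothetical proper no-good of $U\restr S$ back to a no-good of $U$ via the observation that extensions to $[n]$ lying in $U$ restrict to extensions to $S$ lying in $U\restr S$. The paper states the minimality step as a biconditional (no-good of $U$ iff no-good of $U\restr S$) but only the direction you prove is actually needed; your explicit handling of the compatibility of balanced operations with coordinate restriction is a detail the paper leaves implicit.
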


\begin{proof}
Observe that $f$ is not in $U \restr S$;
since $f$ has an extension that is a witness for $U$,
it follows that $f$ is a witness for $U \restr S$.
Thus, $f$ is a no-good of $U \restr S$.
In order to obtain that $f$ is a min-no-good of $U \restr S$,
it suffices to establish that,
for any restriction $f^- \colon S^- \to \{ 0, 1 \}$ of $f$,
it holds that $f^-$ is a no-good of $U$
if and only if $f^-$ is a no-good of $U \restr S$.
This follows from what we have established concerning $f$
and the following fact: all extensions
$h \colon S \to \{ 0, 1 \}$ of $f^-$ are not in $U \restr S$
if and only if
all extensions
$h' \colon [n] \to \{ 0, 1 \}$ of $f^-$ are not in $U$.
\end{proof}

Using these tools we are finally in position to prove Theorem~\ref{thm:not-balanced-implies-or}.

\begin{proof}[{Proof of Theorem \ref{thm:not-balanced-implies-or}}]
Let $f \colon S \to \{ 0, 1 \}$ be a min-no-good of $U$.

It cannot hold that $|S| = 0$, since then $U$
would be empty and hence preserved by all balanced operations.
It also cannot hold that $|S| = 1$, since then
$f$ would be a min-no-good of $U \restr S$
(by Proposition~\ref{prop:project-min-no-good}),
which is not possible since $U \restr S$ would have arity $1$
and hence would be preserved by all balanced operations
(by Observation~\ref{obs:Boolean-arity-one}).

For the remaining cases, by replacing $U$ with a
relation that is interdefinable with it,
we may assume that $f \colon S \to \{ 0, 1 \}$
maps each $s \in S$ to $0$.

Suppose that $|S| = 2$, and assume for the sake
of notation that $S = \{ 1, 2 \}$
(this can be obtained by replacing $U$
with a relation that is interdefinable with it).
By
Proposition~\ref{prop:project-min-no-good},
$f$ is a min-no-good of $U \restr S$.
By Theorem~\ref{thm:Boolean-arity-two},
we obtain that $U \restr S$
contains all tuples other than $f$,
that is,
we have $\{ (0,1),(1,0),(1,1) \} = U \restr S$.
It follows that there exists a \emph{realization},
where we define a \emph{realization} to be a tuple
$(a_1,a_2,a_3) \in \{0,1\}^3$
such that
$(0,1,a_1),(1,0,a_2),(1,1,a_3) \in U$.
Let us refer to $(0,0,1)$ and $(1,1,0)$ as \emph{bad} tuples,
and to all other arity $3$ tuples as \emph{good} tuples.

\begin{claim}
If there is a realization that is a good tuple, then the \kor{2} relation is cone-definable from~$U$.
\end{claim}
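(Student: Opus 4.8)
The plan is to exhibit explicitly a cone-definition of $\kor{2}$ from $U$ using a tuple $(x_1, x_2, y_3)$ whose third coordinate $y_3$ is a single literal or constant, chosen according to the realization $(a_1,a_2,a_3)$. Two facts drive the argument. First, since the min-no-good $f \colon \{1,2\} \to \{0,1\}$ with $f(1)=f(2)=0$ is in particular a no-good of $U$, every extension of $f$ lies outside $U$; that is, $(0,0,0) \notin U$ and $(0,0,1) \notin U$. Second, since $(a_1,a_2,a_3)$ is a realization, the tuples $(0,1,a_1)$, $(1,0,a_2)$, $(1,1,a_3)$ all lie in $U$.

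Next I would observe that for a tuple $(x_1, x_2, y_3)$ with $y_3 \in \{0,1\} \cup \{x_1, x_2, \neg x_1, \neg x_2\}$, the three conditions of Definition~\ref{def:cone} for cone-defining $\kor{2} = \{0,1\}^2 \setminus \{(0,0)\}$ reduce to a single requirement on $y_3$: that it evaluate to $a_1$, $a_2$, $a_3$ under the assignments $(x_1,x_2) = (0,1)$, $(1,0)$, $(1,1)$ respectively. Indeed, the first two clauses of Definition~\ref{def:cone} are immediate since $x_1$ and $x_2$ occur among the coordinates $y_1 = x_1$ and $y_2 = x_2$; and for the third clause, the three satisfying assignments of $\kor{2}$ then map into $U$ by the realization property, while the unique falsifying assignment $(0,0)$ maps to a tuple of the form $(0,0,\ast)$ with $\ast \in \{0,1\}$, which lies outside $U$ regardless of $\ast$ by the no-good property.

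Finally I would enumerate the six admissible values of $y_3$ together with the triple $\bigl(y_3(0,1),\, y_3(1,0),\, y_3(1,1)\bigr)$ each one produces: the constant $0$ gives $(0,0,0)$, the constant $1$ gives $(1,1,1)$, $x_1$ gives $(0,1,1)$, $x_2$ gives $(1,0,1)$, $\neg x_1$ gives $(1,0,0)$, and $\neg x_2$ gives $(0,1,0)$. These six triples are precisely the elements of $\{0,1\}^3$ other than the bad tuples $(0,0,1)$ and $(1,1,0)$. Hence every good realization $(a_1,a_2,a_3)$ coincides with one of them, yielding an admissible choice of $y_3$ and therefore a cone-definition of $\kor{2}$ from $U$. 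The only "obstacle" here is routine bookkeeping: verifying the six-case enumeration and checking each clause of Definition~\ref{def:cone}; there is no genuine difficulty.
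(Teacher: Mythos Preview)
Your proof is correct and follows essentially the same approach as the paper: both exhibit a cone-definition of the form $(x_1,x_2,y)$ with $y \in \{0,1,x_1,x_2,\neg x_1,\neg x_2\}$, matching the six good realizations to the six admissible choices of $y$ by the same case enumeration. You additionally spell out why the falsifying assignment $(0,0)$ always maps outside $U$ (via the no-good property forcing $(0,0,0),(0,0,1)\notin U$), which the paper leaves to the reader.
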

\begin{claimproof}
We show cone-definability via a tuple of the form $(x_1,x_2,y)$ where $y \in \{ 0,1,x_1,x_2, \linebreak[0] \neg x_1,\neg x_2 \}$. The right setting for~$y$ can be derived from the realization that forms a good tuple.
\begin{itemize}
	\item choose $y = 0$ for $(0,0,0)$;
	\item $y = 1$ for $(1,1,1)$;
	\item $y=x_1$ for $(0,1,1)$;
	\item $y=x_2$ for $(1,0,1)$;
	\item $y=\neg x_1$ for $(1,0,0)$; and,
	\item $y=\neg x_2$ for $(0,1,0)$.
\end{itemize}
It is easy to verify that this choice of~$y$ gives the desired cone-definition.
\end{claimproof}

\begin{claim}
There is a realization that is a good tuple.
\end{claim}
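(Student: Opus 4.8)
The plan is to describe the set of realizations explicitly as a combinatorial box and then rule out the only way it could avoid all good tuples. Put $A_1 := \{b \in \{0,1\} : (0,1,b) \in U\}$, $A_2 := \{b \in \{0,1\} : (1,0,b) \in U\}$, and $A_3 := \{b \in \{0,1\} : (1,1,b) \in U\}$. By the definition of a realization, a tuple $(a_1,a_2,a_3)$ is a realization exactly when $a_i \in A_i$ for each $i$, so the realizations form the box $A_1 \times A_2 \times A_3$. Since we have already established (via Proposition~\ref{prop:project-min-no-good} and Theorem~\ref{thm:Boolean-arity-two}) that $U \restr S = \{(0,1),(1,0),(1,1)\}$, each of $A_1, A_2, A_3$ is nonempty, so the box is nonempty. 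I will also use that, since $f$ is a no-good of $U$ and we have arranged that $f$ is the constant-$0$ map on $S = \{1,2\}$, every extension of $f$ lies outside $U$; that is, $(0,0,0), (0,0,1) \notin U$.

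First I would observe that if \emph{no} realization is a good tuple, then the box $A_1 \times A_2 \times A_3$ is contained in the two-element set $\{(0,0,1),(1,1,0)\}$ of bad tuples. Because these two tuples differ in every coordinate, this set is not itself a box; hence a nonempty box contained in it can only be a single point. Consequently $A_1, A_2, A_3$ would all be singletons, and the unique realization would be either $(0,0,1)$ or $(1,1,0)$.

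Then I would derive a contradiction in each of these two cases by showing that the resulting relation $U$ is balanced, contrary to hypothesis. If the unique realization is $(0,0,1)$, the singleton conditions say $(0,1,0),(1,0,0),(1,1,1) \in U$ while $(0,1,1),(1,0,1),(1,1,0) \notin U$; combined with $(0,0,0),(0,0,1) \notin U$ this forces $U = \{(0,1,0),(1,0,0),(1,1,1)\}$. By Proposition~\ref{prop:balanced-acts-as-alternating}, $U$ fails to be balanced exactly when some alternating sum of its tuples — equivalently, some integer affine combination $v_0 + a(v_1-v_0) + b(v_2-v_0)$ with $a,b \in \mathbb{Z}$ of its three tuples $v_0,v_1,v_2$ — lies in $\{0,1\}^3 \setminus U$. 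A one-line check shows that requiring such a combination to lie in the cube forces $a,b \in \{0,1\}$ with $a+b \le 1$, which leaves only the three original tuples; hence $U$ is balanced. The symmetric case, unique realization $(1,1,0)$, forces $U = \{(0,1,1),(1,0,1),(1,1,0)\}$ and is dispatched by the identical affine-hull computation. Either way $U$ would be balanced, contradicting the standing assumption, so some realization is a good tuple.

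The main obstacle, such as it is, is the explicit verification that these two particular three-tuple relations are balanced; this is the only place any computation happens, and it is short because both relations are affine slices of the cube. Everything else is bookkeeping about boxes together with the no-good property of $f$.
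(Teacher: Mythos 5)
Your proof is correct, and it begins the same way the paper does: the realizations form a box $A_1 \times A_2 \times A_3$, and since the two bad tuples $(0,0,1)$ and $(1,1,0)$ differ in every coordinate, a nonempty box that avoids all good tuples must be a single bad point, so $U$ is pinned down completely (using also that $f$ is a no-good, hence $(0,0,0),(0,0,1)\notin U$). Where you diverge from the paper is in the final contradiction. The paper normalizes via interdefinability so that the unique realization is $(1,1,0)$, yielding $U = \{(0,1,1),(1,0,1),(1,1,0)\}$, and then invokes Lemma~\ref{lemma:witness-modulo} with $c = 2$, $m = 3$: every witness for $U$ has weight $\equiv_3 2$, whereas every extension of $f$ (which is $0$ on $S$) has weight $0$ or $1$, so $f$ cannot have a witness extension, contradicting that $f$ is a no-good. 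You instead handle both residual cases directly and verify by an explicit affine-hull computation (integer affine combinations of the three tuples that stay in $\{0,1\}^3$ are just the three tuples themselves) that the resulting $U$ would be balanced outright, contradicting the standing hypothesis of Theorem~\ref{thm:not-balanced-implies-or}. Your endpoint is slightly stronger --- you show $U$ has no witness at all, not merely none extending $f$ --- and your argument is self-contained, while the paper's is shorter because it reuses the modular-weight lemma that was already set up for the $|S|=3$ case as well. Both are fine; one minor stylistic remark is that when you paraphrase Proposition~\ref{prop:balanced-acts-as-alternating} as ``some integer affine combination $v_0+a(v_1-v_0)+b(v_2-v_0)$ lands in $\{0,1\}^3\setminus U$,'' you are implicitly grouping repeated arguments of an alternating operation by which of the three tuples they are; this is correct and routine, but worth a sentence if you write it up.
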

\begin{claimproof}
Proof by contradiction. If there exists no realization that is a good tuple,
every realization is a bad tuple; moreover,
there is a unique realization, for if there were more than one,
there would exist a realization that was a good tuple.
We may assume (up to interdefinability of $U$)
that the unique realization is $(1,1,0)$.
Then, $U$ is the relation
$\{ (0,1,1),(1,0,1),(1,1,0) \}$
containing exactly the weight $2$ tuples; applying
Lemma~\ref{lemma:witness-modulo}
to $U$ with $a = 2$ and $m = 3$,
we obtain that for any witness $w$ for $U$,
it holds that $\weight(w) \equiv_3 2$.
This implies that $f$ has no extension $w'$ that is a witness,
since any such extension must have
$\weight(w')$ equal to $0$ or $1$ as~$f$ maps both~$s \in S$ to~$0$;
we have thus contradicted
that $f$ is a no-good of $U$.
\end{claimproof}
Together, the two claims complete the case that~$|S|=2$.

Suppose that $|S| = 3$.
Since $f$ is both a min-no-good and a witness, mapping all~$s \in S$ to~$0$,
it follows that each of the weight $1$
tuples $(1,0,0),(0,1,0),(0,0,1)$ is contained in $U$.
We claim that $U$ contains a weight $2$ tuple;
if not, then $U$ would contain only weight $1$ and weight $3$
tuples, and by invoking Lemma~\ref{lemma:witness-modulo}
with $a = 1$ and $m = 2$, we would obtain
that $\weight(f) \equiv_2 1$, a contradiction.
Assume for the sake of notation that $U$
contains the weight $2$ tuple $(0,1,1)$.
Then $U$ cone-defines the $\kor{2}$ relation via
the tuple $(0,x_1,x_2)$, since $(0,0,0) \notin R$
and $(0,1,0),(0,0,1),(0,1,1) \in R$.
\end{proof}

\end{document}